\definecolor{weborange}{rgb}{.8,.3,.3}
\definecolor{webblue}{rgb}{0,0,.8}
\definecolor{internallinkcolor}{rgb}{0,.5,0}
\definecolor{externallinkcolor}{rgb}{0,0,.5}
\providecommand{\remove}[1]{}
\providecommand{\remove}[1]{}
\newcommand{\authnote}[2]{{\bf [{\color{red} #1's Note:} {\color{blue} #2}]}}
\newcommand{\authnote}[2]{}
	\titleclass{\subsubsubsection}{straight}[\subsection]
	\newcounter{subsubsubsection}[subsubsection]
	\renewcommand\thesubsubsubsection{\thesubsubsection.\arabic{subsubsubsection}}
	\renewcommand\paragraph{\@startsection{paragraph}{5}{\z@}%
		{3.25ex \@plus1ex \@minus.2ex}%
		{-1em}%
		{\normalfont\normalsize\bfseries}}
	\renewcommand\subparagraph{\@startsection{subparagraph}{6}{\parindent}%
		{3.25ex \@plus1ex \@minus .2ex}%
		{-1em}%
		{\normalfont\normalsize\bfseries}}
	\def\toclevel@subsubsubsection{4}
	\def\toclevel@paragraph{5}
	\def\toclevel@paragraph{6}
	\def\l@subsubsubsection{\@dottedtocline{4}{7em}{4em}}
	\def\l@paragraph{\@dottedtocline{5}{10em}{5em}}
	\def\l@subparagraph{\@dottedtocline{6}{14em}{6em}}
\newenvironment{algorithm}{\begin{mybox} \vspace{-.1in}\begin{algo}}{ \vspace{-.1in} \end{algo}\end{mybox}}
\newenvironment{mybox}{\begin{center}\begin{tabular}{|p{0.97\linewidth}|c|}   \hline} {  \\ \hline \end{tabular} \end{center}}			
\let\originalleft\left
\let\originalright\right
\renewcommand{\left}{\mathopen{}\mathclose\bgroup\originalleft}
\renewcommand{\right}{\aftergroup\egroup\originalright}
	\newcommand{\eg}  {e.g.,\xspace}
	\newcommand{\wlg} {without loss of generality\xspace}
	\newcommand{\ceil}[1]{\left\lceil #1 \right\rceil}
	\newcommand{\signn}[1]{\sign\paren{#1}}
	\newcommand{\ip}[1]{\iprod{#1}}
	\newcommand{\iprod}[1]{\langle #1 \rangle}
	\newcommand{\set}[1]{\ens{#1}}
	\newcommand{\paren}[1]{\left(#1\right)}
	\newcommand{\floor}[1]{\left \lfloor#1 \right \rfloor}
	\newcommand{\norm}[1]{\left\lVert#1\right\rVert}
	\newcommand{\eqdef}{:=}
        \newcommand{\R}{{\mathbb R}}
	\newcommand{\zo}{\set{0,1}}
	\newcommand{\oo}{\set{-1,1}}
	\newcommand{\condition}{\;\ifnum\currentgrouptype=16 \middle\fi|\;}
	\newcommand{\eps}{\varepsilon}
	\newcommand{\la}{\gets}
	\newcommand{\Gen}{\mathsf{Gen}}
	\newcommand{\Trace}{\mathsf{Trace}}
	\newcommand{\Supp}{\operatorname{Supp}}
	\newcommand{\MathAlg}[1]{\mathsf{#1}}
	\newcommand{\sign}{\MathAlg{sign}}
	\newcommand{\Ensuremath}[1]{\ensuremath{#1}\xspace}
	\newcommand{\tth}[1]{\Ensuremath{#1^{\rm th}}}
	\newcommand{\ith}{\tth{i}}
	\newcommand{\jth}{\tth{j}}
	\renewcommand{\cref}{\Cref}
	\newtheorem{theorem}{Theorem}[section]
	\newaliascnt{lemma}{theorem}
	\newtheorem{lemma}[lemma]{Lemma}
	\crefname{lemma}{Lemma}{Lemmas}
	\newaliascnt{observation}{theorem}
	\crefname{observation}{Observation}{Observation}
	\newaliascnt{claim}{theorem}
	\newtheorem{claim}[claim]{Claim}
	\crefname{claim}{Claim}{Claims}
	\newaliascnt{corollary}{theorem}
	\newtheorem{corollary}[corollary]{Corollary}
	\crefname{corollary}{Corollary}{Corollaries}
	\newaliascnt{construction}{theorem}
	\crefname{construction}{Construction}{Constructions}
	\newaliascnt{fact}{theorem}
	\newtheorem{fact}[fact]{Fact}
	\crefname{fact}{Fact}{Facts}
	\newaliascnt{proposition}{theorem}
	\crefname{proposition}{Proposition}{Propositions}
	\newaliascnt{conjecture}{theorem}
	\crefname{conjecture}{Conjecture}{Conjectures}
	\newaliascnt{definition}{theorem}
	\newtheorem{definition}[definition]{Definition}
	\crefname{definition}{Definition}{Definitions}
	\newaliascnt{notation}{theorem}
	\crefname{notation}{Notation}{Notation}
	\newaliascnt{assertion}{theorem}
	\crefname{assertion}{Assertion}{Assertion}
	\newaliascnt{assumption}{theorem}
	\crefname{assumption}{Assumption}{Assumption}
	\newaliascnt{remark}{theorem}
	\newtheorem{remark}[remark]{Remark}
	\crefname{remark}{Remark}{Remarks}
	\newaliascnt{question}{theorem}
	\crefname{question}{Question}{Question}
	\newaliascnt{example}{theorem}
	\crefname{exmaple}{Example}{Examples}
	\crefname{equation}{Equation}{Equations}
	\newaliascnt{proto}{theorem}
	\newtheorem{proto}[proto]{Protocol}
	\crefname{proto}{protocol}{protocols}
	\newaliascnt{algo}{theorem}
	\newtheorem{algo}[algo]{Algorithm}
	\crefname{algo}{algorithm}{algorithms}
	\newaliascnt{expr}{theorem}
	\newtheorem{expr}[expr]{Experiment}
	\crefname{experiment}{experiment}{experiments}
	\newaliascnt{gm}{theorem}
	\newtheorem{gm}[gm]{Game}
	\crefname{game}{game}{games}
	\newcommand{\stepref}[1]{Step~\ref{#1}}
	\def\FullBox{$\Box$}
	\def\qed{\ifmmode\qquad\FullBox\else{\unskip\nobreak\hfil
			\penalty50\hskip1em\null\nobreak\hfil\FullBox
			\parfillskip=0pt\finalhyphendemerits=0\endgraf}\fi}
	\def\qedsketch{\ifmmode\Box\else{\unskip\nobreak\hfil
			\penalty50\hskip1em\null\nobreak\hfil$\Box$
			\parfillskip=0pt\finalhyphendemerits=0\endgraf}\fi}
	\newcommand{\eex}[2]{\Ex_{#1}\left[#2\right]}
	\newcommand{\ex}[1]{\Ex\left[#1\right]}
	\newcommand{\Ex}{{\mathrm E}}
	\renewcommand{\Pr}{{\mathrm {Pr}}}
	\newcommand{\pr}[1]{\Pr\left[#1\right]}
	\newcommand{\ppr}[2]{\Pr_{#1}\left[#2\right]}
	\newcommand{\Ac}{\mathsf{A}}
	\newcommand{\PAP}{\mathsf{PAP}}
	\newcommand{\Fc}{\mathsf{F}}
	\newcommand{\Gc}{\mathsf{G}}
	\newcommand{\Mc}{\mathsf{M}}
	\newcommand{\tMc}{\widetilde{\Mc}}
	\newcommand{\Hc}{\mathsf{H}}
	\newcommand{\ens}[1]{\{#1\}}
	\newcommand{\size}[1]{\left|#1\right|}
	\newcommand{\cW}{{\cal{W}}}
	\def\cB{{\cal B}}
	\def\cD{{\cal D}}
	\def\cJ{{\cal J}}
	\def\cP{{\cal P}}
	\def\cR{{\cal R}}
	\def\cS{{\cal S}}
	\def\cT{{\cal T}}
	\def\cV{{\cal V}}
	\def\cW{{\cal W}}
	\def\cX{{\cal X}}
	\def\cY{{\cal Y}}
	\def\cZ{{\cal Z}}
	\def\bbN{{\mathbb N}}
	\def\bbR{{\mathbb R}}
	\newcommand{\Tableofcontents}{
		\thispagestyle{empty}
		\pagenumbering{gobble}
		\clearpage
		\tableofcontents
		\thispagestyle{empty}
		\clearpage
		\pagenumbering{arabic}
	}
	\newcommand{\mypar}[1]{\vspace{4pt} \noindent {\sc \textbf{#1}}~~~}
	\newcommand{\indic}[1]{\mathds{1}_{\set{#1}}}
	\newcommand{\tx} {\tilde{x}}
	\newcommand{\tq} {\tilde{q}}
	\newcommand{\tX} {\tilde{X}}
	\newcommand{\bq}{\bold{q}}
	\newcommand{\bs}{\bold{s}}
	\newcommand{\by}{\bold{y}}
	\newcommand{\bY}{\bold{Y}}
	\newcommand{\bb}{\bold{b}}
	\newcommand{\bp}{\bold{p}}
	\newcommand{\bP}{\bold{P}}
	\newcommand{\bX}{\bold{X}}
	\newcommand{\ba}{\bold{a}}
	\newcommand{\bx}{\bold{x}}
	\newcommand{\br}{\bold{r}}
	\newcommand{\bv}{\bold{v}}
	\newcommand{\bz}{\bold{z}}
	\newcommand{\OPT}{{\rm OPT}}
	\newcommand{\COST}{{\rm COST}}
	\let\xx@thm\@thm
	\newcommand{\st}{{\sf st}}
        \renewcommand{\epsilon}{\varepsilon}
\newcommand{\ignore}[1]{}
\title{Smooth Lower Bounds for Differentially Private Algorithms\\ via Padding-and-Permuting Fingerprinting Codes}
\author{Anonymized for Submission}
\author{
Naty Peter\thanks{Department of Computer Schience, University of Toronto. Work conducted while at Georgetown University. E-mail: {\tt naty.peter@utoronto.ca}. Work supported in part by the Massive Data Institute at Georgetown University.} 
\and Eliad Tsfadia\thanks{Department of Computer Science, Georgetown University. E-mail: \texttt{eliadtsfadia@gmail.com}.  Work supported in part by the Fulbright Program and a gift to Georgetown University.}
\and Jonathan Ullman\thanks{Khoury College of Computer Sciences, Northeastern University.  E-mail: \texttt{jullman@ccs.neu.edu}. Work supported by NSF awards CNS-2120603, CNS-2232692, and CNS-2247484.}
}
\begin{document}

\maketitle

\begin{abstract}
	
Fingerprinting arguments, first introduced by \citet{BUV14}, are the most widely used method for establishing lower bounds on the sample complexity or error of approximately differentially private (DP) algorithms.  Still, there are many problems in differential privacy for which we don't know suitable lower bounds, and even for problems that we do, the lower bounds are not smooth, and become vacuous when the error is larger than some threshold.

In this work, we present a new framework and tools to generate smooth lower bounds that establish strong lower bounds on the sample complexity of differentially private algorithms satisfying very weak accuracy.  We illustrate the applicability of our method by providing new lower bounds in various settings: 
\begin{enumerate}
    \item A tight lower bound for DP averaging in the low-accuracy regime, which in particular implies a lower bound for the private $1$-cluster problem introduced by \citet{NSV16}.
    \item A lower bound on the additive error of DP algorithms for approximate $k$-means clustering and general $(k,z)$-clustering, as a function of the multiplicative error, which is tight for a constant multiplication error.
    \item A lower bound for estimating the top singular vector of a matrix under DP in low-accuracy regimes, which is a special case of the DP subspace estimation problem studied by \citet{SS21}.
\end{enumerate}

Our new tools are based on applying a padding-and-permuting transformation to a fingerprinting code. However, rather than proving our results using a black-box access to an existing fingerprinting code (e.g., Tardos' code \citep{Tardos08}), we develop a new fingerprinting lemma that is stronger than those of \citet{DworkSSUV15,BunSU17}, and prove our lower bounds directly from the lemma.   Our lemma, in particular, gives a simpler fingerprinting code construction with optimal rate (up to polylogarithmic factors) that is of independent interest.  
\end{abstract}

\Tableofcontents

\section{Introduction}

Differentially private (DP) \citep{DMNS06} algorithms provide a strong guarantee of privacy to the individuals who contribute their data.  Informally, a DP algorithm takes data from many individuals and guarantees that no attacker, regardless of their knowledge or capabilities, can learn much more about any one individual than they would have if that individual's data had never been collected.  There is a large body of literature on DP algorithms, and DP algorithms have now been deployed by large technology companies and government organizations.

\mypar{DP Averaging.}
As a running example, 
suppose our input dataset is $x_1,\dots,x_n \in \R^d$ and our goal is to estimate their average $\frac1n \sum_i x_i$.  Since DP requires us to hide the influence of one data point on the average, we intuitively need to assume some kind of bounds on the data.  A common way to bound the data is to assume that it lies in some ball of radius $r$, so there exists a center $c \in \R^d$ such that $\|x_i - c\|_2 \leq r$.  Our goal is then to output a DP average $\hat{x}$ such that, with high probability,
\begin{equation*}\label{eq:intro1}
\left\| \hat{x} - \frac1n \sum_i x_i \right\|_2 \leq \lambda r
\end{equation*}
If we assume that the location of the center of the ball is \emph{known}, then the natural DP algorithm is to clip the data to lie in this known ball, and perturb the true average with noise from a Laplacian or Gaussian distribution of suitable variance.  One can show that the average will satisfy the error guarantee above if the dataset has at least $n \gtrsim \sqrt{d}/\lambda$ samples.  

However, in many applications of DP involving real data, we do not want to assume that the center of the ball is known.  For example, algorithms like clustering, covariance estimation, and PCA are often applied to datasets as a preprocessing step to understand the general properties, and we cannot assume the user already knows the location of the data.  Thus, we want to assume that the data lies in a ball whose center is \emph{unknown}.  For this problem, the FriendlyCore algorithm of \citet{FriendlyCore22} is able to achieve the same error guarantee even when the location of the data is \emph{a priori} unknown, provided $n \gtrsim \sqrt{d}/\lambda$.

\mypar{Lower Bounds for DP Averaging.} The work of \citet{BUV14} proved that $\Omega(\sqrt{d}/\lambda)$ samples are required for DP averaging, when $\lambda \lesssim 1$. Their work introduced the method of \emph{fingerprinting codes} to differential privacy, and this technique has become the standard approach for proving lower bounds for differentially private algorithms, either by reduction to the averaging problem or by non-black-box use of the fingerprinting technique (see Related Work).

The lower bound of \citet{BUV14} has a significant drawback that it only applies when $\lambda \lesssim 1$, and the lower bound on the sample complexity is vacuous for $\lambda \geq 1$.  This limitation is inherent to the way these lower bounds work, since they construct a hard distribution over the hypercube $\{-1,1\}^d$, which lies in a \emph{known} ball of radius $r = \sqrt{d}$.  So the DP algorithm that outputs $\hat{x} = \vec{0}$ satisfies
$$
\left\| \hat{x} - \frac1n \sum_i x_i \right\|_2 \leq \sqrt{d} = r.
$$
Thus there is no need for any samples when the error parameter is $\lambda \geq 1$, so the lower bound fully captures the hardness of DP averaging when the location of the data is known.  

However, when the location of the ball is unknown, even finding a low-accuracy DP average with $\lambda \geq 1$ is non-trivial.
In this work, we develop general tools for generating hard-instances for such types of problems. In particular, for DP-averaging, our results implies that $n = \tilde{\Omega}(\sqrt{d}/\lambda)$ samples are required for all $\lambda$, yielding that the above algorithms are essentially optimal.
While this low-accuracy regime may seem like an intellectual curiosity, it turns out that low-accuracy approximations of this sort are quite useful for a variety of DP approximation algorithms, and we show that our technique implies new lower bounds for other widely studied problems---computing a DP $k$-means clustering with a constant multiplicative approximation, and finding a DP top singular vector---that crucially rely on the fact that our lower bound applies to the low-accuracy regime.

\subsection{Our Results}

\subsubsection{Main Hardness Results}\label{sec:intro:main-results}
Our core new technique is a variation of \cite{BUV14}'s method for creating strong error robust fingerprinting codes.
That is, a \emph{padding-and-permuting} transformation applied to a (weak error robust) fingerprinting codes. In this work, we use padding size that varies as a function of the accuracy guarantee, in contrast to \cite{BUV14} that use fixed-size padding (which suffices for robustness). Using such instances with large padding allows us to smoothly shrink the radius of the points in the hard instances while preserving the hardness.  This technique allows us to give the following general construction of a hard problem in DP.  Our results will follow from applying this hardness in a black-box way.

\begin{definition}[$b$-Marked Column]
    Given a matrix $X = (x_i^j)_{i \in [n], j \in [d]}  \in \oo^{n \times d}$ and $b \in \oo$, we say that a column $j \in [d]$ is $b$-marked if $x^j_1 = x^j_2= \ldots = x^j_n = b$. 
    We denote by $\cJ_X^b \subseteq [d]$ the set of $b$-marked columns of $X$.
\end{definition}

\begin{definition}[Strongly Agrees]\label{def:intro:strongly-agree}
	We say that a vector $q = (q^1,\ldots,q^d)$ strongly-agrees with a matrix $X \in \oo^{n \times d}$, if 
	\begin{align*}
		\forall b\in \oo: \quad \size{\set{j \in \cJ_X^b \colon q^j = b}} \geq 0.9 \size{\cJ_X^b}.
	\end{align*}
	(i.e., for both $b \in \oo$, $q$ agrees with at least $90\%$ of the $b$-marked columns of $X$).
\end{definition}

\begin{definition}[$(\alpha,\beta)$-Weakly-Accurate Mechanism]\label{def:intro:weakly-accurate}
	Let $\alpha, \beta \in (0,1]$. 
	We say that a mechanism $\Mc \colon \oo^{n \times d} \rightarrow [-1,1]^d$ is $(\alpha,\beta)$-weakly-accurate if for every input $X = (x_1,\ldots,x_n) \in  \oo^{n \times d}$ with $\size{\cJ_X^1}, \size{\cJ_X^{-1}} \geq \frac12(1-\alpha)d$,
	the probability that $\Mc(X)$ strongly-agrees with $X$ is at least $\beta$.
\end{definition}

Namely, for a small $\alpha$, the only requirement from an $(\alpha,\beta)$-weakly-accurate mechanism is to agree (w.p. $\beta$) with most of the $1$-marked and $(-1)$-marked columns, but only when almost half of the input columns are $1$-marked, and almost all the other half of the columns are $(-1)$-marked (otherwise, there is no restriction on the output).

The following theorem captures our general tool for lower bounding DP algorithms.

\begin{theorem}\label{thm:intro:base_tool}
    If $\Mc \colon (\oo^d)^n \rightarrow [-1,1]^d$ is an $(\alpha,\beta)$-weakly-accurate $\paren{1,\frac{\beta}{4n}}$-DP mechanism, then $n \geq  \Omega(\sqrt{\alpha d} / \log^{1.5} (\alpha d/\beta))$.
\end{theorem}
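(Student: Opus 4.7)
The plan is to prove the contrapositive through a padding-and-permuting fingerprinting-code argument, reducing any DP mechanism with weak accuracy on $d$ columns to a fingerprinting-style mechanism on roughly $k := \alpha d$ columns, for which the paper's new fingerprinting lemma supplies the desired lower bound.

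\textbf{Hard distribution.} Let $k = \lfloor \alpha d \rfloor$ and $m = \lfloor (1-\alpha) d/2 \rfloor$. Let $\cD_0$ be the fingerprinting distribution on $\oo^{n \times k}$ guaranteed by the paper's new fingerprinting lemma. Draw $X_0 \sim \cD_0$, and define $X \in \oo^{n \times d}$ by concatenating $X_0$ with $m$ all-$+1$ columns and $m$ all-$-1$ columns, then applying a uniformly random column permutation $\pi$ (independent of $X_0$). By construction, $\size{\cJ_X^{+1}}, \size{\cJ_X^{-1}} \geq m \geq \tfrac{1}{2}(1-\alpha)d$, so the precondition of $(\alpha,\beta)$-weak accuracy is met.

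\textbf{Reduction to a fingerprinting mechanism.} Define $\Mc_0 : \oo^{n \times k} \to [-1,1]^k$ by: receive $X_0$, perform the padding-and-permuting described above (drawing $\pi$ internally), run $\Mc$ on the resulting $X$, undo $\pi$, and return the coordinates corresponding to the hard columns. By post-processing and the data-independence of the padding, $\Mc_0$ inherits the $(1, \beta/(4n))$-DP guarantee of $\Mc$. Weak accuracy guarantees, with probability $\geq \beta$, that $\Mc(X)$ strongly agrees with the marked columns of $X$; the core claim is that this event forces $q_0 := \Mc_0(X_0)$ to satisfy the ``agreement'' hypothesis that the new fingerprinting lemma requires on the fingerprinting columns. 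The intuition is that $\pi$ is uniform and the mechanism has no information distinguishing hard columns from padded columns beyond the raw entries. Any failure of $q_0$ to agree with the hard columns can be exchanged (by symmetry of $\pi$) with a failure on equally many padded columns; such a failure would violate strong agreement on marked columns in expectation, contradicting weak accuracy. Making this symmetry quantitative, one extracts the required hypothesis with probability $\Omega(\beta)$ up to polylogarithmic losses.

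\textbf{Invoking the fingerprinting lemma and main obstacle.} Applying the new fingerprinting lemma to $\Mc_0$ and $\cD_0$ then yields $n = \Omega\bigl(\sqrt{k}/\log^{1.5}(k/\beta)\bigr) = \Omega\bigl(\sqrt{\alpha d}/\log^{1.5}(\alpha d/\beta)\bigr)$, giving the claimed bound. The technical heart of the argument is the reduction step above: translating strong agreement on marked columns --- the only hypothesis available from weak accuracy --- into the precise agreement condition on the fingerprinting columns required by the lemma. The padding size $m$ and the fingerprinting distribution $\cD_0$ must be calibrated so that, without knowledge of $\pi$, $\Mc$ cannot preferentially ``sacrifice'' accuracy on the hard columns; ensuring the resulting loss contributes only to the $\log^{1.5}(\alpha d/\beta)$ factor, and interacts cleanly with the $(\varepsilon,\delta) = (1, \beta/(4n))$ DP budget so that the fingerprinting lemma can be invoked black-box, is the delicate step.
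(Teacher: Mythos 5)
Your overall strategy matches the paper's: pad $X_0 \in \oo^{n\times(\approx \alpha d)}$ with $\approx \tfrac12(1-\alpha)d$ all-$(+1)$ columns and the same number of all-$(-1)$ columns, randomly permute, run $\Mc$, un-permute, and argue by column symmetry that agreement on marked columns of the padded instance transfers to agreement on the fingerprinting columns. This is exactly the PAP construction in \cref{sec:FPT:PAP}.

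However, you leave the ``technical heart'' of the argument --- the transfer step --- as an intuition (``exchange failures by symmetry'') that does not actually pin down the right condition, and you are wrong about its cost. The paper's resolution hinges on the distinction between \emph{strongly-agrees} (\cref{def:intro:strongly-agree}), an aggregate $90\%$-agreement property of a fixed output vector, and \emph{strongly-correlated} (\cref{def:technique:strongly-correlated}), a per-column statement about a random variable: for each $b$-marked column $j$ of $X_0$, $\Pr[Q^j=b]\ge 0.9$. The symmetry of the random permutation does not give you ``$90\%$ agreement on the hard columns''; it gives you that \emph{for each individual hard column}, the coordinate returned is distributed like the coordinate at a \emph{uniformly random} $b$-marked column of the padded instance, and thus is correct with probability $\ge 0.9$ whenever the whole output strongly-agrees. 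This is \cref{lemma:from_agrees_to_correlated}, and it is then extended in \cref{lemma:PAP} to show that a $\beta$-probability of strong-agreement on the padded instance yields a $\beta$-probability of strong-correlation on $X_0$ --- \emph{with no loss at all}, not the $\Omega(\beta)$-with-polylog-losses you claim. (The polylogarithmic factors in the final bound come entirely from the fingerprinting lemma, \cref{lem:intro:new_FP_lemma}, and from the Hoeffding step in \cref{lemma:FPL_high_dim,claim:large_d0_not_DP}, not from the PAP transfer.) Your sketch phrases the transfer as if strong-agreement on the hard columns must itself be salvaged, and reasons about a failure-exchange argument that is not made precise and, as stated, would not directly yield the per-column probability bound that the fingerprinting lemma needs. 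Without introducing the strong-correlation notion and the lossless per-column transfer, the step you flag as delicate remains open.
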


So in order to prove a lower bound for a specific task, it suffices to prove that the assumed utility guarantee implies \cref{def:intro:weakly-accurate}.

Theorem \ref{thm:intro:base_tool} can be proven by combining our padding-and-permuting technique with an optimal \emph{fingerprinting code}---such as Tardos' code \citep{Tardos08}---in a black-box way.  Moreover, in contrast with most recent constructions of DP lower bounds that use only a so-called \emph{fingerprinting lemma}, our techniques seem to require the using of a fingerprinting code.
Specifically, although fingerprinting lemmas are simpler and more flexible, they require a stronger notion of accuracy that does not fit our padding-and-permuting construction, whereas fingerprinting codes require only an extremely weak notion of accuracy to obtain hardness.  In an effort to unify and simplify the techniques used to prove DP lower bounds, we give an alternative proof that makes use of a \emph{new fingerprinting lemma} that only requires very weak accuracy to obtain hardness (see \cref{sec:ProofOverview} for more details).

We also consider an extension of \cref{def:intro:weakly-accurate} to cases where the mechanism receive a dataset which consists of $k$ clusters and is required to output a point that strongly-agrees with one of the clusters.

\begin{definition}[$(k,\alpha,\beta)$-Weakly-Accurate Mechanism]\label{def:intro:k-weakly-accurate}
	Let $\alpha,\beta \in (0,1]$ and $n,k,d \in \bbN$ such that $n$ is a multiple of $k$.
	We say that a mechanism $\Mc \colon \oo^{n \times d} \rightarrow [-1,1]^{d}$ is \emph{$(k,\alpha,\beta)$-weakly-accurate} if the following holds: Let $X = (x_1,\ldots,x_n) \in \oo^{n \times d}$ be an input such that for every $t \in [k]$ and every $b \in \oo$ it holds that $\size{\cJ^b_{X_t}} \geq \frac12(1-\alpha)d$ for $X_t = (x_{(t-1) n/k + 1}, \: \ldots, \: x_{tn/k}) \in (\oo^d)^{n/k}$. Then
	\begin{align*}
		\pr{\exists t \in [k]\text{ s.t. }\Mc(X)\text{ strongly-agrees with } X_t} \geq \beta.
	\end{align*}
\end{definition}
Note that $(k=1,\alpha,\beta)$-weakly-accurate is equivalent to $(\alpha,\beta)$-weakly-accurate.

Using $k$ independent padding-and-permuting fingerprinting code instances, we prove the following theorem.

\begin{theorem}[Extension of \cref{thm:intro:base_tool}]\label{thm:intro:extended_tool}
   Let $\alpha,\beta \in (0,1]$ and $n,k,d \in \bbN$ such that $n$ is a multiple of $k$. If $\Mc \colon (\oo^d)^n \rightarrow [-1,1]^d$ is an $(k,\alpha, \beta)$-weakly-accurate $\paren{1,\frac{\beta}{4n}}$-DP mechanism, then $n \geq \Omega(k \sqrt{\alpha d} / \log^{1.5} (\alpha d/\beta))$.
\end{theorem}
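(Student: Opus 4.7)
The plan is to reduce the $k$-cluster lower bound to the $k=1$ case of \cref{thm:intro:base_tool} by a direct product construction: embed $k$ independent copies of the hard distribution used to prove \cref{thm:intro:base_tool}, one per cluster, and then exploit the symmetry across clusters to pin the mechanism down to a specific cluster.

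More concretely, I would sample $k$ independent padding-and-permuting fingerprinting code instances $X^{(1)},\ldots,X^{(k)}$, each of size $n/k$, from the same hard distribution $\cD$ that underlies the proof of \cref{thm:intro:base_tool} (so that each $X^{(t)}$ satisfies the marking condition $|\cJ^1_{X^{(t)}}|,|\cJ^{-1}_{X^{(t)}}| \geq \frac12(1-\alpha)d$ with probability $1$), and form the combined input $X = (X^{(1)},\ldots,X^{(k)})$. By $(k,\alpha,\beta)$-weak-accuracy, $\Pr[\exists t : \Mc(X)\text{ strongly-agrees with }X^{(t)}] \geq \beta$. Since the $k$ clusters are i.i.d., the marginal events $A_t = \{\Mc(X)\text{ strongly-agrees with }X^{(t)}\}$ have identical probability, so by a union bound $\Pr[A_1] \geq \beta/k$. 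Now define the auxiliary $1$-cluster mechanism $\Mc_1$ which on input $Y \in \oo^{(n/k)\times d}$ samples $Z^{(2)},\ldots,Z^{(k)} \sim \cD$ internally and returns $\Mc(Y, Z^{(2)},\ldots, Z^{(k)})$. Changing a single row of $Y$ changes a single row of the combined dataset, so $\Mc_1$ inherits the $(1, \beta/(4n))$-DP guarantee of $\Mc$; and by construction, $\Mc_1(Y)$ strongly-agrees with $Y$ with probability $\geq \beta/k$ when $Y \sim \cD$.

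From here I would apply the same fingerprinting argument that proves \cref{thm:intro:base_tool} directly to $\Mc_1$: it is a $(1, \beta/(4n))$-DP mechanism on $n/k$ samples that succeeds with probability $\beta/k$ on the hard distribution $\cD$, and crucially $\beta/(4n) = (\beta/k)/(4 \cdot (n/k))$, so the DP parameter lines up with the sample size/success probability needed for the $1$-cluster bound applied with success $\beta/k$ and sample size $n/k$. This yields $n/k \geq \Omega\!\bigl(\sqrt{\alpha d}/\log^{1.5}(\alpha d k /\beta)\bigr)$; multiplying by $k$ and absorbing the $\log k$ into the $\log^{1.5}(\alpha d/\beta)$ factor (which is valid in the non-trivial regime $k \leq \poly(\alpha d/\beta)$; otherwise the bound is vacuous) gives the desired $n \geq \Omega\!\bigl(k\sqrt{\alpha d}/\log^{1.5}(\alpha d/\beta)\bigr)$.

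The main obstacle is the last step: \cref{thm:intro:base_tool} is stated for \emph{worst-case} weakly-accurate mechanisms, whereas $\Mc_1$ is only guaranteed to succeed in \emph{distribution} over $\cD$. The fix is that the proof of \cref{thm:intro:base_tool}, like all fingerprinting-code based lower bounds, proceeds by exhibiting a specific hard distribution on which the mechanism must succeed and then deriving a contradiction with DP from the fingerprinting/tracing property on that distribution. Thus one can invoke the proof (rather than the statement) of \cref{thm:intro:base_tool} in a gray-box way, taking $\cD$ to be exactly the distribution used there and treating the independently sampled $Z^{(2)},\ldots,Z^{(k)}$ as part of $\Mc_1$'s internal randomness, which does not affect the traceability argument since it is independent of the target cluster $Y$. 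A secondary minor point is ensuring the symmetrization by union bound is tight enough; since the $k$ clusters are exchangeable under $\cD^{\otimes k}$, the $\beta/k$ bound is immediate and suffices.
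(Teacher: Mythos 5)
Your overall plan matches the paper's: build a combined input from $k$ independent padding-and-permuting instances, reduce to a single-cluster mechanism that internally samples the other $k-1$ blocks, inherit DP because the map is neighboring-preserving, and invoke the distributional ($\beta$-leaking) version of the one-cluster fingerprinting argument with success probability $\beta/k$ and sample size $n/k$. Your observation that the DP parameter lines up, $\frac{\beta}{4n} = \frac{\beta/k}{4(n/k)}$, and your remark that one must appeal to the proof (the hard-distribution form) rather than the worst-case statement of \cref{thm:intro:base_tool}, are both correct and are exactly how the paper's framework (\cref{lemma:framework}, via $\beta$-leaking) is set up.

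However, there is a genuine gap in the symmetrization step. You claim that because $X^{(1)},\ldots,X^{(k)}$ are i.i.d., the events $A_t = \{\Mc(X)\text{ strongly-agrees with }X^{(t)}\}$ have identical probability, hence $\Pr[A_1]\ge \beta/k$. This is false in general: $\Mc$ is a fixed, not necessarily permutation-invariant mechanism, and the blocks occupy fixed row positions $(t-1)n/k+1,\ldots,tn/k$. For example, $\Mc$ could deterministically output a vector that strongly-agrees with $X^{(2)}$ (say, a majority vote over the rows of the second block), in which case $\Pr[A_2]\approx 1$ while $\Pr[A_1]$ is negligible, since the marked columns of independent instances almost surely disagree. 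Exchangeability of the input distribution does not make the events $A_t$ exchangeable unless $\Mc$ itself commutes with block permutations. Consequently, your $\Mc_1$, which always plants $Y$ in block $1$, has no guaranteed success probability. The paper avoids this by drawing a \emph{secret} index $s \la [k]$ as part of the reduction's randomness and planting the real instance at position $s$; averaging over $s$ gives $\Pr_s[\Mc(X')\text{ strongly-agrees with }Y_s]\geq \beta/k$ regardless of which blocks $\Mc$ prefers. An equally valid fix in your framing: by the union bound $\sum_t \Pr[A_t]\geq\beta$, so there exists a fixed $t^*$ with $\Pr[A_{t^*}]\geq\beta/k$, and you may define $\Mc_1$ to plant $Y$ at position $t^*$ (a non-constructive choice, which is fine for a lower bound). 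With either fix, the rest of your argument goes through and matches the paper.
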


We prove \cref{thm:intro:base_tool,thm:intro:extended_tool} using a more general framework that we developed in this work that might be useful for other types of problems with more complicated output spaces (e.g., subspace or covariance estimation). We refer to \cref{sec:technique:framework} for more details.

\subsubsection{Application: Averaging and $1$-Cluster}

We first formally state our tight lower bound for DP averaging in the low-accuracy regimes. 
We start by defining a $(\lambda, \beta)$-estimator for averaging.

\begin{definition}[$(\lambda, \beta)$-Estimator for Averaging]\label{def:intro:avg-est}
    A mechanism $\Mc \colon \bbR^+ \times (\bbR^{d})^n \rightarrow \bbR^d$ is \emph{$(\lambda, \beta)$-estimator for averaging} if given $\gamma \geq 0$ and $x_1,\ldots,x_n \in \oo^d$ with $\max_{i,j \in [n]} \norm{x_i - x_j}_2  \leq \gamma$, it holds that 
	\begin{align*}
		\pr{\norm{\Mc(\gamma,x_1,\ldots,x_n) - \frac1n \sum_{i=1}^n x_i}_2 \leq \lambda \gamma} \geq \beta.
	\end{align*}
\end{definition}

It is well-known how to construct DP $(\lambda, \beta=0.99)$-estimators using $\tilde{O}(\sqrt{d}/\lambda)$ points.

\begin{fact}[Known upper bounds (\eg \cite{KV18,FriendlyCore22,AL22,NarayananME22})]
    For  $n = \tilde{O}(\sqrt{d}/\lambda)$, there exists an $(1,\frac1{n^2})$-DP $(\lambda, \beta=0.99)$-estimator for averaging.
\end{fact}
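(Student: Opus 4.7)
The plan is to prove this fact by instantiating an off-the-shelf DP averaging algorithm from the cited works (most directly, the \textsf{FriendlyCoreAvg} procedure of \cite{FriendlyCore22}) and verifying that its accuracy and privacy guarantees match \cref{def:intro:avg-est} with the claimed sample complexity. The input here is mildly different from the standard ``bounded-ball'' setting used in those references, since we are only promised an upper bound $\gamma$ on the pairwise diameter of $x_1,\ldots,x_n$ rather than a known enclosing ball. But the diameter bound implies that every $x_i$ sits inside a common ball of radius $\gamma$ (say the ball centered at $x_1$), so after rescaling we are in exactly the regime that the cited algorithms handle.

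The first step is to observe that on any dataset of diameter $\leq \gamma$, the empirical mean $\bar x = \frac{1}{n}\sum_i x_i$ has $\ell_2$-sensitivity at most $\gamma/n$ (replacing one point $x_i$ by another in the ball of diameter $\gamma$ around the rest moves $\bar x$ by at most $\gamma/n$). If we had a worst-case sensitivity bound we could simply release $\bar x$ plus a spherical Gaussian of standard deviation $O(\gamma \sqrt{\log(1/\delta)}/(n\epsilon))$ per coordinate, whose $\ell_2$-norm is $O(\gamma\sqrt{d\log(1/\delta)}/(n\epsilon))$ \whp, and this is already $\leq \lambda\gamma$ once $n = \tilde O(\sqrt d/\lambda)$ for $\epsilon=1$ and $\delta = 1/n^2$. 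The main obstacle, and the reason one cannot just invoke the Gaussian mechanism directly, is that the diameter bound is a promise on the input but need not hold on neighboring datasets, so the ``sensitivity $\gamma/n$'' estimate is a property of the specific instance rather than a worst-case global sensitivity.

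To handle this, the second step is to filter the input through a DP inlier-selection / friendliness test before averaging. Concretely, one runs \textsf{FriendlyCore} (or an equivalent propose-test-release / stable-histogram preprocessing), which with probability $\geq 0.99$ keeps all (or all but $o(n)$) of the original points whenever the promise holds, and otherwise outputs ``bail out''. Because the input satisfies the diameter promise, the filter passes \whp, and on the filtered core the empirical mean has deterministic sensitivity $O(\gamma/n)$, so the subsequent Gaussian mechanism is $(\epsilon,\delta)$-DP unconditionally; on datasets where the promise fails, privacy is provided by the filter itself.

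Finally, I would put the pieces together: set $\epsilon = 1$, $\delta = 1/n^2$, and choose the Gaussian noise scale per coordinate as $\sigma = C\gamma\sqrt{\log n}/n$ for a sufficiently large constant $C$; a Gaussian tail bound gives that the $\ell_2$-error of the noise is $O(\gamma\sqrt{d\log n}/n)$ with probability $\geq 0.995$, and a union bound with the success of the filter gives overall success probability $\geq 0.99$. Requiring $O(\gamma\sqrt{d\log n}/n) \leq \lambda\gamma$ yields $n = \tilde O(\sqrt d/\lambda)$, which is the desired sample complexity. I do not expect any step to be conceptually hard --- the only place needing care is the privacy analysis of the filter-and-Gaussian composition on datasets that do not satisfy the input promise, which is exactly what the cited \textsf{FriendlyCore} framework is designed to provide.
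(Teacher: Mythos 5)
This statement is stated as a cited \emph{Fact} with no proof given in the paper; the authors simply defer to the references. Your sketch is a correct and faithful account of how the cited algorithms (most directly \textsf{FriendlyCore}) achieve the claimed guarantee: you correctly identify that the diameter promise gives instance-wise sensitivity $\gamma/n$ for the empirical mean, that this cannot be used directly as a global sensitivity bound, and that the cited frameworks resolve this via DP inlier selection / propose-test-release so that privacy holds unconditionally while the noise scale adapts to $\gamma$; the arithmetic $O(\gamma\sqrt{d\log n}/n)\le\lambda\gamma \Leftrightarrow n\ge\tilde\Omega(\sqrt d/\lambda)$ matches the stated $n=\tilde O(\sqrt d/\lambda)$. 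This is exactly the reasoning the paper is implicitly invoking by citation, so there is nothing substantive to compare.
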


Using \cref{thm:intro:base_tool}, we prove a matching lower bound (up to low-order terms).

\begin{theorem}[Our averaging lower bound]\label{thm:intro:lower_bound:avg}
    If $\Mc \colon \bbR^+ \times (\bbR^{d})^n \rightarrow \bbR^d$ is an \emph{$(\lambda, \beta)$-estimator for averaging} for $\lambda \geq 1$ and $\Mc(\gamma,\cdot)$ is $\paren{1,\frac{\beta}{4n}}$-DP for every $\gamma \geq 0$, then $n \geq \Omega\paren{\frac{\sqrt{d}/\lambda}{\log^{1.5} \paren{\frac{d}{\beta \lambda}}}}$.
\end{theorem}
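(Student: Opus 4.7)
The plan is to reduce to \cref{thm:intro:base_tool} by converting the assumed averaging estimator $\Mc$ into an $(\alpha,\beta)$-weakly-accurate $(1,\beta/(4n))$-DP mechanism $\Mc' \colon \oo^{n \times d} \to [-1,1]^d$ with $\alpha = \Theta(1/\lambda^2)$; invoking the theorem then yields $n \geq \Omega(\sqrt{\alpha d}/\log^{1.5}(\alpha d/\beta)) = \Omega(\sqrt{d}/\lambda \,/\, \log^{1.5}(d/(\lambda^2 \beta)))$, which is at least the desired bound since $\lambda \geq 1$ implies $\log(d/(\lambda^2 \beta)) \leq \log(d/(\lambda \beta))$.

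For the reduction, I would fix $\alpha := c/\lambda^2$ for a small absolute constant $c$ (e.g., $c = 1/160$) and $\gamma := 2\sqrt{\alpha d}$, and define $\Mc'(X)$ to be the coordinate-wise sign of $\Mc(\gamma, x_1,\dots,x_n)$ (with ties broken to $+1$). Differential privacy of $\Mc'$ is immediate from post-processing, since $\gamma$ is a fixed constant. For weak accuracy, whenever $|\cJ_X^1|, |\cJ_X^{-1}| \geq (1-\alpha)d/2$, every two rows of $X$ agree on all $\geq (1-\alpha)d$ marked coordinates, so $\max_{i,j}\norm{x_i - x_j}_2 \leq 2\sqrt{\alpha d} = \gamma$. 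The true average $\bar{x} := \frac1n \sum_i x_i$ satisfies $\bar{x}^j = b$ on every $b$-marked column $j$, and by hypothesis $\hat{x} := \Mc(\gamma, x_1,\dots,x_n)$ satisfies $\norm{\hat{x} - \bar{x}}_2 \leq \lambda\gamma$ with probability $\geq \beta$. A Parseval-style count then shows that at most $(\lambda \gamma)^2 = 4\lambda^2 \alpha d$ coordinates have $|\hat{x}^j - \bar{x}^j| \geq 1$; taking $c$ small enough makes this quantity at most $0.1 |\cJ_X^b|$ for each $b \in \oo$. On every remaining $b$-marked column, $\operatorname{sign}(\hat{x}^j) = b$, so $\Mc'(X)$ strongly-agrees with $X$ with probability $\geq \beta$, and \cref{thm:intro:base_tool} applies.

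The main obstacle is translating the $\ell_2$ accuracy of the averaging estimator into the coordinate-wise agreement required by \cref{def:intro:strongly-agree}. This translation works cleanly because the hard instances produced by the weakly-accurate framework concentrate their coordinate-averages at $\pm 1$—exactly on the boundary of $[-1,1]$—so sign-rounding can only err on coordinates whose individual error is at least $1$, and Parseval bounds the number of such coordinates by $\norm{\hat{x} - \bar{x}}_2^2$. The scaling $\alpha = \Theta(1/\lambda^2)$ is then forced by a clean trade-off: making $\alpha$ smaller would weaken the bound from \cref{thm:intro:base_tool}, while making it larger would forfeit strong agreement after rounding. All the $\lambda \geq 1$ hypothesis contributes is to ensure $\alpha \leq 1$ and to absorb an extra logarithmic factor.
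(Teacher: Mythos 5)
Your proposal is correct and follows essentially the same route as the paper: the paper's Claim~6.2 likewise converts the averaging estimator into an $(\alpha,\beta)$-weakly-accurate mechanism by feeding the Boolean dataset to $\Mc$ with a data-independent $\gamma = \Theta(\sqrt{\alpha d})$, taking coordinate-wise signs, picking $\alpha = \Theta(1/\lambda^2)$, and bounding the number of mis-signed marked coordinates by $\norm{\hat{x}-\bar{x}}_2^2 \le \lambda^2\gamma^2$, then invoking \cref{thm:intro:base_tool}. The only differences are your choice of the constant (and your slightly different $\gamma$—which is in fact the correctly computed diameter bound, whereas the paper writes $\sqrt{2\alpha d}$ where $2\sqrt{\alpha d}$ is the tight Boolean diameter; this is immaterial since it is absorbed by the constant in $\alpha$).
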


\paragraph{Immediate Application: The $1$-Cluster Problem.}

An interesting application of \cref{thm:intro:lower_bound:avg} is\remove{the first meaningful} a simple lower bound for the $1$-cluster problem that is widely used in DP clustering algorithms.

In the $1$-cluster problem, we are given $n$ points from a finite domain $\cX^d$ for $\cX \subseteq \bbR$, and a parameter $t \leq n$. The goal is to identify a $d$-dimensional ball that contains almost $t$ point, such that the size of the ball is not too far from the optimum. Formally,

\begin{definition}[$(\lambda, \: \beta, \: t_{low}, \: s)$-Estimator for $1$-Cluster, \cite{NSV16}]
    A mechanism $\Mc \colon (\cX^d)^n \times [n] \rightarrow \bbR^+ \times \bbR^d$ is an \emph{$(\lambda, \beta, t_{low}, s)$-estimator for $1$-cluster} if given $\cS \in (\cX^d)^n$ and $t \in [t_{low}, n]$ as inputs, it outputs $r \geq 0$ and $c \in \bbR^d$ such that the following holds with probability at least $\beta$:
    \begin{enumerate}
        \item The ball of radius $r$ around $c$ contains at least $t - s$ points from $\cS$, and

        \item Let $r_{opt}$ be the radius of the smallest ball in $\cX^d$ containing at least $t$ input points. Then $r \leq \lambda \cdot r_{opt}$.
    \end{enumerate}
\end{definition}

It is well-known how to privately solve this problem with a constant $\lambda$, whenever $t_{low} \geq \tilde{\Theta}(\sqrt{d})$.

\begin{fact}[Upper bounds \cite{NSV16,NS18_1Cluster}, simplified]\label{fact:intro:1-cluster}
    There exists an $(1,\frac1{n^2})$-DP, $(\lambda = \Theta(1), \: \beta = 0.99, \: t_{low} = \tilde{\Theta}(\sqrt{d}), \: s = \tilde{\Theta}(1))$-estimator for $1$-cluster.
\end{fact}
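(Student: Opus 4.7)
The plan is to instantiate the two-phase $1$-cluster algorithm of \cite{NSV16} as refined in \cite{NS18_1Cluster}, which decomposes the problem into a radius-estimation step and a center-selection step, each using roughly half of the privacy budget and combined by basic composition. I would first run a radius-finding procedure that privately returns $\tilde r$ with $r_{opt} \leq \tilde r \leq O(1)\cdot r_{opt}$: perform a doubling search over a logarithmically spaced sequence of candidate radii $\{2^i\cdot \Delta\}$ in $\cX^d$, and at each scale $r$ estimate (via a stability-based histogram or the choosing mechanism) the maximal number of input points coverable by a ball of radius $r$. Outputting the smallest $r$ whose DP estimate exceeds $t - \tilde O(1)$ gives a constant-factor approximation while consuming $(\epsilon/2,\delta/2)$-DP via advanced composition over the $\polylog(|\cX|)$ scales.

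Given $\tilde r$, the second subroutine would follow the LSH-based center-selection of \cite{NS18_1Cluster}: project the dataset onto $O(\log d)$ low-dimensional directions, on each of which a DP interior-point (or quantile) algorithm identifies an interval of width $O(\tilde r)$ capturing at least $t - \tilde O(1)$ points. Reconstructing a candidate center $c$ from the selected intervals yields a ball of radius $O(\tilde r) = O(1)\cdot r_{opt}$ around $c$ that covers at least $t - \tilde O(1)$ input points with probability $\geq 0.99$. The key property of the LSH-based approach, in contrast to a naive axis-by-axis recursion, is that the per-direction sample requirement is $\tilde O(1)$ and the aggregate requirement is $t_{low} = \tilde\Theta(\sqrt{d})$ rather than $\tilde\Theta(\sqrt{d}\cdot\log|\cX|)$, which is what allows removing the $|\cX|$-dependence from $t_{low}$.

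The main obstacle is the joint privacy and utility bookkeeping: one must verify that composing (i) the doubling search and (ii) the $O(\log d)$ per-direction DP interior-point calls under advanced composition at $\delta = 1/n^2$ still yields overall $(1, 1/n^2)$-DP, and that the two error sources---the $\lambda\cdot r_{opt}$ radius inflation and the additive $\tilde O(1)$ count loss---stay within the stated regime. Since the detailed analysis already appears in the cited works, the role of this sketch is essentially to assemble their components and verify that the simplified parameter statement $(\lambda=\Theta(1),\,\beta=0.99,\,t_{low}=\tilde\Theta(\sqrt{d}),\,s=\tilde\Theta(1))$ is preserved after the privacy-budget split and constant-factor slack absorbed into the $\tilde\Theta$ notation.
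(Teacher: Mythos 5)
The paper does not supply a proof of this statement: \cref{fact:intro:1-cluster} is imported from \cite{NSV16,NS18_1Cluster} as a black-box ``Fact,'' and nowhere in the paper is it re-derived. So there is no internal argument to compare your sketch against; the only thing to check is whether your reconstruction of the cited works is faithful and whether the parameter claim survives.

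Your high-level decomposition (first privately estimate the radius via a doubling search, then privately locate a center) matches the GoodRadius/GoodCenter structure of \cite{NSV16}, and attributing the removal of the $|\cX|$-dependence in $t_{\mathrm{low}}$ to the LSH-based refinement of \cite{NS18_1Cluster} is the right attribution. The place where your sketch drifts from the actual construction is the center-finding step: what you describe---projecting onto $O(\log d)$ directions and running a DP interior-point/quantile subroutine per direction, then intersecting intervals---is closer to the original coordinate-wise recursion of \cite{NSV16} than to the LSH mechanism of \cite{NS18_1Cluster}. In the latter, the data are hashed with a locality-sensitive hash family so that the $t$ target points collide in a common bucket with noticeable probability, a private heavy-bucket selection (a choosing-mechanism-style step, not an interior-point step) identifies such a bucket, and the center is then recovered from the points landing there; there is no per-direction interval recursion, and the $\tilde{\Theta}(\sqrt{d})$ bound on $t_{\mathrm{low}}$ comes from the Gaussian-noise averaging over that bucket rather than from aggregating $O(\log d)$ one-dimensional requirements. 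Since the statement you are justifying is deliberately coarse ($\Theta(1)$ multiplicative factor, $\tilde{\Theta}$ on the counts), these slight misattributions do not affect the conclusion, but if you intend this sketch to stand as an explanation of why the Fact holds, you should correct the description of the LSH step rather than present it as a projection-plus-interior-point scheme.
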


Therefore, we conclude by \cref{thm:intro:lower_bound:avg} the following tight lower bound (up to low order terms) which is essentially an immediate corollary of our averaging lower bound.

\begin{corollary}[Our $1$-cluster lower bound]
    If $\Mc$ is $\paren{1,\frac{\beta}{4n}}$-DP and $(\lambda, \: \beta, \: t_{low}, \:s=n-1)$-Estimator for $1$-Cluster for $\lambda \geq 1$, then $t_{low} \geq \Omega\paren{\frac{\sqrt{d}/\lambda}{\log^{1.5} \paren{\frac{d}{\beta \lambda}}}}$.
\end{corollary}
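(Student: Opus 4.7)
The plan is to derive a DP averaging estimator from $\Mc$ and invoke \cref{thm:intro:lower_bound:avg}. The key observation is that when $s = n-1$, the 1-cluster guarantee degenerates into an averaging-style guarantee: the output ball is only required to contain a \emph{single} input point, while its radius is controlled by the smallest ball enclosing \emph{all} inputs. Concretely, we will apply $\Mc$ with $n = t_{low}$ input points and query parameter $t = t_{low}$, and use the returned center as the averaging estimate.

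Formally, given an averaging input $x_1, \ldots, x_{t_{low}} \in \oo^d$ with $\max_{i,j} \|x_i - x_j\|_2 \leq \gamma$, define $\Mc'(\gamma, x_1, \ldots, x_{t_{low}}) := c$, where $(r,c) := \Mc(x_1, \ldots, x_{t_{low}}, t_{low})$. Since all points lie in the ball of radius $\gamma$ around $x_1$, we have $r_{opt} \leq \gamma$. With probability at least $\beta$, the 1-cluster guarantee therefore yields $r \leq \lambda \gamma$ together with an input point $x_{i_0}$ inside the output ball (note $t - s = t_{low} - (t_{low}-1) = 1$), so $\|c - x_{i_0}\|_2 \leq r$. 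The triangle inequality then gives
\begin{align*}
\left\| c - \tfrac{1}{t_{low}} \textstyle\sum_{i} x_i \right\|_2 \leq \|c - x_{i_0}\|_2 + \left\| x_{i_0} - \tfrac{1}{t_{low}} \textstyle\sum_i x_i \right\|_2 \leq r + \gamma \leq (\lambda+1)\gamma \leq 2\lambda\gamma,
\end{align*}
using $\lambda \geq 1$ in the last step. Thus $\Mc'$ is a $(2\lambda,\beta)$-estimator for averaging on $t_{low}$ points, and as a post-processing of $\Mc$ it inherits the $(1, \tfrac{\beta}{4t_{low}})$-DP guarantee.

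Applying \cref{thm:intro:lower_bound:avg} to $\Mc'$ with accuracy parameter $2\lambda \geq 1$ yields
\begin{align*}
t_{low} \geq \Omega\paren{\frac{\sqrt{d}/(2\lambda)}{\log^{1.5}\paren{\frac{d}{2\beta\lambda}}}} = \Omega\paren{\frac{\sqrt{d}/\lambda}{\log^{1.5}\paren{\frac{d}{\beta\lambda}}}},
\end{align*}
which is the claimed bound. There is no substantive obstacle in the reduction itself; the only minor point is to ensure that the 1-cluster estimator's discrete domain $\cX$ accommodates the hypercube $\oo$, which holds without loss of generality after an affine rescaling of the hard instance, since the averaging lower bound depends only on the diameter $\gamma$.
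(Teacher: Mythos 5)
Your reduction is correct and is exactly the one the paper intends when it calls this an ``immediate corollary'' of \cref{thm:intro:lower_bound:avg}: feed the $1$-cluster mechanism the averaging instance with $t$ equal to the number of rows, so that with $s=n-1$ the returned ball of radius $r\le\lambda\, r_{\mathrm{opt}}\le\lambda\gamma$ must contain at least one input point, and the triangle inequality gives a $(2\lambda,\beta)$-averaging estimator; post-processing preserves DP, and the factor $2$ is absorbed into the $\Omega(\cdot)$.

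One point you should state explicitly rather than leave implicit: your instantiation is only well-typed with $n=t_{low}$ (you pass $\Mc$ exactly $t_{low}$ rows). This is in fact forced by the statement. When $t_{low}<n$ and $s=n-1$, every $t<n$ imposes a vacuous containment requirement ($\ge t-(n-1)\le 0$ points) and $r=0$ trivially satisfies item~2, so only $t=n$ is informative and yields a bound on $n$, not on $t_{low}$. So the corollary must be read with $n=t_{low}$; under that reading your argument is complete.
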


We remark that for these specific tasks of DP-averaging/1-cluster, a recent work of \cite{NarayananME22}, which provides a lower bound for user-level DP averaging, can also be used to prove a similar statement to \cref{thm:intro:lower_bound:avg} (up to poly-logarithmic terms).
Their technique is very different from ours, and in particular, do not extend to proving \cref{thm:intro:base_tool,thm:intro:extended_tool} which serve as simple tools for proving lower bounds for the other problems. We refer to \cref{sec:technique:comparison} for a more detailed comparison.

\subsubsection{Application: Clustering}\label{sec:intro:clustering}

In $k$-means clustering, we are given a database $\cS$ of $n$ points in $\bbR^d$, and the goal is to output $k$ centers $C = (c_1,\ldots,c_k) \in (\bbR^d)^k$ that minimize
\begin{align*}
    \COST(C; \cS) \eqdef \sum_{x \in \cS} \min_{i \in [k]}\norm{x - c_i}_2^2.
\end{align*}

Similarly to prior works, we focus, \wlg, on input and output points in the $d$-dimensional unit ball $\cB_d \eqdef \set{x \in \bbR^d \colon \norm{x}_2 \leq 1}$.
The approximation quality of a DP algorithm is measured in the literature by two parameters: a multiplicative error $\lambda$, and an additive error $\xi$, defined below: 

\begin{definition}[$(\lambda,\xi, \beta)$-Approximation Algorithm for $k$-Means]\label{def:intro:k_means_approx}
    $\Mc\colon (\cB_d)^n \rightarrow (\cB_d)^k$ is an \emph{$(\lambda,\xi, \beta)$-approximation algorithm for $k$-means}, if for every $\cS \in (\cB_d)^n$ it holds that
    \begin{align*}
        \ppr{C \sim \Mc(\cS)}{\COST(C;\cS) \leq \lambda \cdot \OPT_k(\cS) + \xi} \geq \beta,
    \end{align*}
    where $\OPT_k(\cS) \eqdef \min_{C \in (\cB_d)^k} \COST(C;\cS)$.
\end{definition}

While non-private algorithms usually do not have an additive error, under DP an additive error is necessary. As far as we are aware, the only known lower bound on the additive error is the one of \cite{GuptaLMRT10} (stated for $k$-medians), which has been extended later by \cite{NguyenCX21} (Theorem 1.2). This lower bound essentially says that the additive error $\xi$ of any $(1,\frac1{n^2})$-DP algorithm for $k$-means (regardless of its multiplicative error) must be at least $\tilde{\Omega}(k)$. 
However, as far as we are aware, all known DP upper bounds have an additive error of at least $\Omega(k \sqrt{d})$.
In particular, this is also the situation in the state-of-the-art upper bounds that have constant multiplicative error.

\begin{fact}[General upper bounds \cite{KaplanSt18,Ghazi0M20,NguyenCX21}, simplified]
    There exists an $(1,\frac1{n^2})$-DP $(\Theta(1), \: \tilde{\Theta}(k \sqrt{d}))$-approximation algorithm for $k$-means.
\end{fact}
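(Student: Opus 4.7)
The plan is to follow the standard template used in the cited works of \cite{KaplanSt18, Ghazi0M20, NguyenCX21}, since the fact is attributed to them in simplified form. The approach composes two (optionally three) stages that together yield constant multiplicative approximation with $\tilde{O}(k\sqrt{d})$ additive error under $(1, 1/n^2)$-DP.

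The first stage I would run generates a private candidate-center set $C \subseteq \cB_d$ of polynomial size that, with high probability, contains $k$ centers achieving the target bounds. A natural way to do this is via a multi-scale dyadic net on $\cB_d$: for a geometric sequence of radii, place an $r_j$-net on the ball and privately identify ``heavy'' cells using the Gaussian mechanism on cell counts, retaining polynomially many representatives across scales. The $\sqrt{d}$ factor arises from the $\ell_2$ diameter of $\cB_d$ and the Gaussian noise needed to hide one point's contribution at each scale, while the $k$ factor accrues once we aggregate over the $k$ candidate centers eventually selected.

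The second stage privately selects $k$ centers from $C$. Since $|C|$ is polynomial in $n$, $k$, and $d$, one can apply the exponential mechanism with score equal to the negative $k$-means cost, or iteratively pick centers one at a time in the spirit of a private $k$-means++ procedure, paying only a $\polylog(n)$ factor in privacy. This preserves the constant multiplicative factor. An optional refinement runs a few noisy Lloyd's iterations, where each cluster mean is recomputed via a Gaussian mechanism of scale $\tilde{O}(\sqrt{d}/n)$ per coordinate; summed over the $k$ clusters and $O(\log n)$ iterations, this contributes an additional $\tilde{O}(k\sqrt{d})$ to the additive error while tightening constants in the multiplicative factor.

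The main obstacle is composing privacy across the stages to fit within $(1, 1/n^2)$-DP while controlling the cumulative additive error. The cleanest accounting is via zero-concentrated DP: allocate a global $\rho$-zCDP budget with $\rho = \Theta(1/\log n)$, split across the stages, and convert back to approximate DP at the end. The accumulation of Gaussian noise in each stage is what ultimately yields the $\tilde{\Theta}(k\sqrt{d})$ additive term; improving the multiplicative factor beyond $\Theta(1)$ would require more delicate private-coreset constructions, but for the simplified $\Theta(1) \cdot \OPT_k(\cS) + \tilde{\Theta}(k\sqrt{d})$ guarantee, the template above suffices.
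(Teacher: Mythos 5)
The paper does not prove this statement; it is stated as a \emph{known} upper bound imported from \cite{KaplanSt18,Ghazi0M20,NguyenCX21} and used only as context for the matching lower bound, so there is no ``paper's own proof'' to compare against. Your sketch is therefore a reconstruction of those cited works rather than of anything internal to this paper.

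As a reconstruction it captures the correct high-level template (private candidate generation, private selection of $k$ centers, optional noisy Lloyd refinement, zCDP accounting), but there is a real technical gap in the first stage as you describe it. An $r$-net of the unit ball in $\bbR^d$ has size $(1/r)^{\Theta(d)}$, exponential in $d$; if you literally enumerate multi-scale dyadic nets and privately threshold cell counts, the candidate set is not polynomial and the exponential mechanism in stage two no longer pays only $\polylog$ factors. What makes $\ssize{C} = \poly(n,k,d)$ achievable in the cited works is precisely the machinery you elided: either Johnson--Lindenstrauss dimension reduction to $O(\log n)$ dimensions followed by a grid (\cite{Ghazi0M20}), or locality-sensitive hashing to aggregate nearby points into polynomially many buckets (\cite{KaplanSt18,NguyenCX21}). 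Without one of these, stage one does not go through. A second, smaller imprecision: the unit ball has $\ell_2$ diameter $2$, not $\sqrt{d}$; the $\tilde{\Theta}(k\sqrt{d})$ additive term comes from $k$ private averaging steps each incurring Gaussian noise of magnitude $\tilde{\Theta}(\sqrt{d})$ per cluster (one coordinate-wise Gaussian of scale $\tilde{O}(1)$ per coordinate), not from the ambient diameter. With LSH or JL supplied as the missing ingredient and the diameter/averaging point corrected, your account matches the cited line of work.
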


Furthermore, an additive error of $\tilde{\Theta}(k \sqrt{d})$ also appears in algorithms that provide utility only for datasets that are well-separated into $k$-clusters:  For a small parameter $\phi \in [0,1]$, a dataset $\cS \in (\cB^d)^n$ is called $\phi$-separated for $k$-means if $\OPT_k(\cS) \leq \phi^2\cdot \OPT_{k-1}(\cS)$ (\cite{OstrovskyRSS12}).

\begin{fact}[Upper bounds for well-separated instances \cite{ShechnerSS20,CKMST:ICML2021}, simplified]
	There exists an $(1,\frac1{n^2})$-DP  $\:(1 + O(\phi^2), \: \tilde{\Theta}(k \sqrt{d}))$-approximation algorithm for $k$-means of $\phi$-separated instances. 
\end{fact}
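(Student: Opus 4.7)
The plan is to adapt Lloyd's algorithm to a differentially private variant tailored to $\phi$-separated instances, exploiting the stability of the optimal clustering guaranteed by separation.

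First, I would bootstrap with a DP constant-factor approximation---invoking the general $(\Theta(1), \tilde{\Theta}(k\sqrt{d}))$-approximation fact cited immediately above---to obtain initial centers $C^{(0)}$ that are within a constant multiplicative factor of $\OPT_k$. A key consequence of $\phi$-separation (in the sense of \cite{OstrovskyRSS12}) is that any constant-factor approximate solution must be a small perturbation of the essentially unique optimal clustering; in particular, each approximate center must lie close to a distinct optimal center, so $C^{(0)}$ serves as a warm start already in the basin of attraction of $C^\ast$.

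Second, I would run $O(\log(1/\phi))$ rounds of a noisy Lloyd iteration. In each round, given the current centers, the algorithm computes, for each cluster, a DP estimate of both the sum and the count of points nearest to that center (using the Gaussian or Laplace mechanism on these bounded-sensitivity queries); dividing yields the updated center. Composing privacy across rounds using advanced composition, and summing the per-center $\ell_2$ errors across the $k$ centers, contributes an additive error of $\tilde{O}(k\sqrt{d})$ to the $k$-means cost as required. The $(1 + O(\phi^2))$ multiplicative factor then follows from the standard geometric-contraction analysis of Lloyd's iteration on $\phi$-separated data, suitably modified to absorb the per-iteration DP noise into the contraction recursion.

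The main obstacle I expect is the interplay between noise and partition stability: the DP noise must be small enough that the nearest-center partition induced by the noisy centers coincides (up to a small symmetric difference) with the optimal partition throughout all iterations; otherwise the next round's averages are computed over the wrong points and the contraction argument collapses. Handling this cleanly requires an inductive invariant that simultaneously tracks (i) the $\ell_2$ distance between current and optimal centers and (ii) the fraction of points potentially misassigned by the noisy partition, with each quantity bounding the other across iterations---and with $\phi$-separation ensuring that the per-round contraction factor dominates the per-round noise growth, closing the induction.
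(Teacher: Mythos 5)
This statement is labeled a \emph{Fact} and is imported directly from \cite{ShechnerSS20,CKMST:ICML2021}; the paper gives no proof of it, so there is nothing in the paper to compare your argument against. You are effectively reconstructing the approach of the cited works rather than an argument present in this paper.

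As a high-level account of what those cited works do, your sketch is broadly on target: they do bootstrap from a DP $O(1)$-approximation, do run a small number of noisy Lloyd rounds with privatized per-cluster sums and counts, and do rely on the Ostrovsky et al.\ separation condition to argue that approximate centers stay in bijection with the (essentially unique) optimal centers and that the induced partitions remain stable under the injected noise. The obstacle you flag---maintaining an inductive invariant coupling center displacement and misassignment fraction so that the contraction beats the noise growth---is indeed the crux of those analyses. A few details are looser than what the cited papers actually establish (for example, the number of Lloyd rounds, the exact form of the multiplicative factor $1+O(\phi^2)$ from the contraction analysis, and the bookkeeping that yields the specific $\tilde\Theta(k\sqrt d)$ additive term under $(\varepsilon,\delta)$-DP with advanced composition), but since this theorem is used here only as a benchmark for the matching lower bound, a citation rather than a proof is what the paper intends, and your sketch would not belong in the paper in any case.
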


Using \cref{thm:intro:extended_tool} we provide the first tight lower bound on the additive error for algorithms with constant multiplication error.
Furthermore, since \cref{thm:intro:extended_tool} is proven using $k$ independent padding-and-permuting FPC instances which induces $k$ obvious clusters that are far from each other, our lower bound also matches the upper bounds for well-separated instances.

\begin{theorem}[Our $k$-means lower bound]\label{thm:intro:kmeans}
	Let $n,k,d \in \bbN$, $\lambda \geq 1$, $\beta \in (0,1]$ and $\xi \geq 0$ such that $n \geq k + 80\xi$.
    If $\Mc\colon (\cB_d)^n \rightarrow (\cB_d)^k$ is an $\paren{1,\frac{\beta}{4nk}}$-DP $(\lambda, \xi, \beta)$-approximation algorithm for $k$-means, then either $k \geq 2^{\Omega(d/\lambda)} \beta \lambda/d \:$ or $\: \xi \geq \Omega\paren{\frac{k \sqrt{d/\lambda}}{\log^{1.5}\paren{\frac{kd}{\beta \lambda}}}}$.
\end{theorem}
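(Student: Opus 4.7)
The plan is to reduce from \cref{thm:intro:extended_tool} by embedding its hard $(k,\alpha,\beta)$-weakly-accurate instance as a well-separated $k$-means instance in $\cB_d$. First, if $k \geq c_0 \cdot 2^{d/\lambda}\beta\lambda/d$ for an absolute constant $c_0$, the first disjunct trivially holds; otherwise $k \leq 2^{O(d/\lambda)}$, so I can fit $k$ \emph{anchor} points $a_1, \ldots, a_k$ in $\cB_{d_1}$ with $d_1 = \Theta(\log k) \leq O(d/\lambda)$, each of norm $1/2$ and pairwise separated by an absolute constant $\Delta$. Let $d' = d - d_1 = \Omega(d)$, and pick $\delta = \Theta(1/\sqrt{\lambda d'})$ so that $(a_t, \delta x) \in \cB_d$ for every $x \in \oo^{d'}$ while the intra-cluster radius $\delta\sqrt{d'}$ stays well below $\Delta$. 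Fix $\alpha = c_1/\lambda$ for a small constant $c_1$.

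Given a weakly-accurate input $X \in \oo^{n \times d'}$ partitioned into $k$ equal clusters $X_1,\ldots,X_k$, the reduced mechanism $\Mc'$ embeds each $x_i^t$ as $y_i^t = (a_t, \delta x_i^t)$, feeds these $n$ points to $\Mc$ to obtain centers $C = (c_1, \ldots, c_k)$, draws $t_0 \in [k]$ uniformly at random, sets $\sigma(t_0) = \arg\min_{j \in [k]} \|c_j - (a_{t_0}, 0)\|_2$, and outputs the sign vector of the rescaled last $d'$ coordinates of $c_{\sigma(t_0)}$. Privacy of $\Mc'$ is inherited from $\Mc$ by post-processing. For the accuracy analysis, note $\OPT_k \leq \delta^2 n \alpha d' = O(n/\lambda^2)$ (achieved by placing centers at $(a_t, \delta\mu_t)$). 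Conditioning on the success of $\Mc$ (probability $\geq \beta$) and writing $c_{\sigma(t)} = (\tilde a_t, \delta \tilde \mu_t)$, the standard cost decomposition around the true cluster means gives
\begin{align*}
\tfrac{n}{k}\sum_{t \in [k]}\bigl(\|a_t - \tilde a_t\|_2^2 + \delta^2 \|\mu_t - \tilde\mu_t\|_2^2\bigr) \;\leq\; (\lambda - 1)\OPT_k + \xi.
\end{align*}
Averaging and applying Markov, at least half the clusters $t$ satisfy $\|\mu_t - \tilde\mu_t\|_2^2 \leq O(d'/\lambda)$, which with our choice of $\alpha$ is smaller than $0.05\,(1-\alpha)d'$; for such $t$, fewer than $0.05\, d'$ of the marked coordinates have a sign mismatch, so the sign-thresholding of $\tilde\mu_t$ strongly-agrees with $X_t$. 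Since $t_0$ hits a good cluster with probability at least $1/2$, $\Mc'$ is $(k,\alpha,\beta/2)$-weakly-accurate.

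Applying \cref{thm:intro:extended_tool} to $\Mc'$ then yields $n \geq \Omega(k\sqrt{\alpha d'}/\log^{1.5}(\alpha d'/\beta)) = \Omega(k\sqrt{d/\lambda}/\log^{1.5}(kd/(\beta\lambda)))$. The main obstacle is the final step of converting this sample-complexity lower bound on $n$ into the desired additive-error lower bound on $\xi$: the direct reduction just described bounds $n$, not $\xi$. To close the gap I plan to repeat the reduction with only $n' = \Theta(\xi)$ \emph{fingerprinting rows} and pad the remaining $n - n' \geq k$ slots of $\Mc$'s input with anchor duplicates $(a_t, 0)$ (the hypothesis $n \geq k + 80\xi$ ensures at least one duplicate per cluster). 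This padding biases the optimal per-cluster center by a positive factor $m_{\mathrm{data}}/M$, so signs are still preserved after sign-rounding; the amplification of the $\ell_2$-error by the rescaling $M/m_{\mathrm{data}}$ is the delicate part and must be balanced against the $\xi$-portion of the cost budget via the same cost decomposition. With this padded reduction, \cref{thm:intro:extended_tool} gives $n' \geq \Omega(k \sqrt{d/\lambda}/\log^{1.5}(kd/(\beta\lambda)))$, and since $n' = \Theta(\xi)$ the claimed bound on $\xi$ follows.
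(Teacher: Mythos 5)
Your high-level plan — reduce from \cref{thm:intro:extended_tool} by embedding the hypercube hard instance into $\cB_d$, then trade off the number of ``live'' rows against the additive error $\xi$ — is the right idea and matches the paper's approach in spirit. But your final step, which is the one doing all the quantitative work, does not close.

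The issue is your padding scheme. You pad each anchor cluster with copies of $(a_t,0)$, leaving $m_t \approx n'/k$ fingerprinting rows and $p_t \approx (n-n')/k$ duplicates per cluster. This dilutes the cluster mean in the last $d'$ coordinates by the factor $\rho = n'/n$; you note this yourself and observe that sign-rounding survives. But the dilution also shrinks the \emph{cost penalty} for a sign error: if $q$ marked coordinates of $c_{\sigma(t)}$ have the wrong sign, the excess cost contributed by cluster $t$ is
\begin{align*}
\frac{n}{k}\,\bigl\|\tilde b_t - \rho\delta\mu_t\bigr\|_2^2 \;\geq\; \frac{n}{k}\, q\,(\rho\delta)^2 \;=\; \Theta\!\left(\frac{(n')^2}{n\,k\,\lambda}\right)
\end{align*}
for $q = \Theta(d')$, $\delta^2 = \Theta(1/(\lambda d'))$. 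Meanwhile, the budget $\lambda\OPT_k + \xi$ is dominated by the within-cluster variance term created by the padding itself: each cluster is a mix of hypercube points at radius $\delta\sqrt{d'}=\Theta(1/\sqrt\lambda)$ from $(a_t,0)$ and a mass of duplicates at $(a_t,0)$, so $\OPT_k = \Omega(n'(1-\rho)/\lambda)$ and $\lambda\OPT_k + \xi = \Omega(n' + \xi)$. With $n'=\Theta(\xi)$ and hence $\rho \leq 1/80$, you need $\frac{(n')^2}{nk\lambda} \gtrsim n'$, i.e.\ $n' \gtrsim nk\lambda$, which is impossible since $n'\leq n$ and $k,\lambda\geq 1$. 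So with this padding the weak-accuracy condition required to invoke \cref{thm:intro:extended_tool} with input length $n'$ is simply not satisfied — the adversary can absorb the sign errors within the $\lambda\OPT$ budget, before $\xi$ is even spent. A secondary gap is the bias--variance identity you invoke: it requires that each point be assigned to $c_{\sigma(t)}$, which is an upper bound on $\COST$, not a lower bound, and must be justified by a matching argument.

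The fix the paper uses is both simpler and structurally different: do not dilute the fingerprinting clusters at all. Instead, \emph{sacrifice one of the $k$ centers} for a padding pile. Concretely, the reduced mechanism feeds $\Mc$ the $m = \lfloor 1 + 80\xi/(k-1)\rfloor(k-1)$ scaled hypercube points (forming $k-1$ PAP clusters) plus $n-m$ copies of the origin, and treats $\Mc$ as a $(k{-}1{+}1)$-clustering algorithm; one of its $k$ centers covers the origin pile for free, and each of the remaining $k-1$ clusters has $m/(k-1)$ undiluted hypercube rows. Then $\OPT_k = O(\alpha^{z/2}m)$, the budget is $O(m)$, and a single non-agreeing cluster already forces cost $\Omega(m)$. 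Applying \cref{thm:intro:extended_tool} with $k'=k-1$ and input length $m$ gives $m \geq \Omega\bigl((k{-}1)\sqrt{\alpha d}/\log^{1.5}\bigr)$, and since $m\geq\max\{k-1, 80\xi\}$ either $k$ is exponentially large or $\xi$ is large — no separate anchor construction, no degenerate-case casework, and no $\log k$-dimensional anchor subspace needed, because the independent random permutations inside the PAP construction already make the $k-1$ hypercube clusters well-separated.
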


Note that \cref{thm:intro:kmeans} even suggests how we might expect the additive error to decrease, if we increase the multiplicative error (for such cases, however, there are no matching upper bounds). 

We remark that our proof is not tailored to $k$-means clustering. In \cref{sec:k-means} we state and prove an extension of Theorem~\ref{thm:intro:kmeans} to \emph{$(k,z)$-clustering} in which the cost is measured by the sum of the $\tth{z}$ powers of the distances ($k$-means clustering is the special case of $z=2$).

\subsubsection{Application: Top Singular Vector}

In this problem, we are given $n$ points $x_1,\ldots,x_n \in \cS_d \eqdef \set{v \in \bbR^d \colon \norm{v}_2 = 1}$ of unit norm as input, and the goal is to estimate the top (right) singular vector of the $n \times d$ matrix $X = (x_i^j)_{i \in [n], j \in [d]}$, which is the unit vector $v \in \cS_d$ that maximizes $\norm{X\cdot v}_2$.

The singular value decomposition of $X$ is defined by $X  = U \Sigma V^T$, where $U \in \bbR^{n \times n}$ and  $V \in \bbR^{d \times d}$ are unitary matrices. The matrix $\Sigma \in \bbR^{n \times d}$ is a diagonal matrix with non-negative entries $\sigma_1 \geq \sigma_2 \geq \ldots \geq \sigma_{\min\set{n,d}} \geq 0$ along the diagonal, called the singular values of $X$. The first column of $V$ is the top right singular vector.

\citet{DTTZ14} proved a general lower bound of $n = \tilde{\Omega}(\sqrt{d})$ for any algorithm that identifies a useful approximation to the top singular vector. Yet, \cite{SinghalS21} bypassed this lower bound under a distributional assumption which implies that the points are close to lying in a $1$-dimensional subspace, defined by having a small ratio $\sigma_{2}/\sigma_{1}$. They showed that when the points are Gaussian, and the above ratio is small, the sample complexity can be made independent of $d$.
They also consider the more general problem of estimating the span of the top $k$ singular vectors, which we do not consider in this work.

\begin{definition}[$(\lambda,\beta)$-Estimator of Top Singular Vector]\label{def:intro:top_sing_est}
    We say that $\Mc \colon [0,1] \times (\cS_d)^n \rightarrow \cS_d$ is an $(\lambda, \beta)$-estimator of top singular vector, if given an $n \times d$ matrix $X = (x_1,\ldots,x_n) \in (\cS_d)^{n}$ and an upper bound $\gamma \in [0,1]$ on $\sigma_2/\sigma_1$ as inputs, outputs a column vector $y \in \cS_d$ such that 
    \begin{align*}
        \ppr{y \sim \Mc(\gamma, X)}{\norm{X\cdot y}_2^2 \geq \norm{X\cdot v}_2^2 - \lambda \gamma n} \geq \beta,
    \end{align*}
    where $v$ denotes the top singular vector of $X$.
\end{definition}

Our next result is a lower bound on the sample complexity of singular vector estimation that is smooth with respect to the spectral gap parameter $\gamma$.

\begin{theorem}[Our lower bound]\label{thm:intro:top_sing}
    If $\Mc \colon [0,1] \times (\cS_d)^n \rightarrow \cS_d$ is an \emph{$(\lambda, \beta)$-estimator of top singular vector} for $\lambda \geq 1$ and $\Mc(\gamma,\cdot)$ is $\paren{1,\frac{\beta}{4n}}$-DP for every $\gamma \in [0,1]$, then $n \geq \Omega\paren{\frac{\sqrt{d}/\lambda}{\log^{1.5} \paren{\frac{d}{\beta \lambda}}}}$.
\end{theorem}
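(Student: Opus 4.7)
The plan is to reduce from \cref{thm:intro:base_tool}: from any hypothetical $\paren{1,\frac{\beta}{4n}}$-DP $(\lambda,\beta)$-estimator $\Mc$ of the top singular vector on $(\cS_d)^n$-inputs, we will construct a $\paren{1,\frac{\beta}{4n}}$-DP mechanism $\hMc\colon\oo^{n\times m}\to[-1,1]^m$ that is $(\alpha,\beta)$-weakly-accurate, for $m=d/2$ and $\alpha=c/\lambda^2$ with $c>0$ a sufficiently small constant.  Plugging these parameters into \cref{thm:intro:base_tool} yields
\begin{align*}
n \geq \Omega\paren{\frac{\sqrt{\alpha m}}{\log^{1.5}(\alpha m/\beta)}} = \Omega\paren{\frac{\sqrt{d}/\lambda}{\log^{1.5}(d/(\lambda\beta))}},
\end{align*}
using $\lambda\geq 1$ to absorb $\lambda^2\to\lambda$ inside the logarithm, which is the desired bound.

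Given input $X\in\oo^{n\times m}$ with $|\cJ_X^b|\geq (1-\alpha)m/2$ for $b\in\oo$, the reduction $\hMc$ first augments $X$ into $X'\in\oo^{n\times d}$ by appending $m$ ``anchor'' columns: $m/2$ columns of all-$(+1)$'s (a set $A_+$) and $m/2$ columns of all-$(-1)$'s, so every anchor column is marked with a known sign.  Let $w\in\set{0,\pm 1}^d$ denote the marking vector of $X'$, with support $S\supseteq A_+\cup A_-$ satisfying $|S|\geq (1-\alpha/2)d$.  Then $\hMc$ normalizes rows to form $\tilde X:=X'/\sqrt d\in(\cS_d)^n$, sets $\gamma:=2\sqrt{\alpha}$, and invokes $\Mc(\gamma,\tilde X)$ to obtain $y\in\cS_d$; finally it outputs $q\in\oo^m$ defined by $q_j:=s\cdot\sign(y_j)$, where $s:=\sign\paren{\sum_{j\in A_+}\sign(y_j)}$ is a global sign recovered by majority vote over the $(+1)$-anchor coordinates.

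To justify the spectral-gap parameter $\gamma$, decompose $X'=A+B$ with $A:=\1_n w^T$ (rank-one, supported on $S$) and $B$ supported on $[d]\setminus S$.  Then $\|A\|_{op}^2=n|S|$ and $\|B\|_{op}^2\leq \|B\|_F^2=n|[d]\setminus S|\leq n\alpha d/2$, so plugging the test vector $w/\|w\|_2$ into $\tilde X$ gives $\sigma_1(\tilde X)^2\geq n|S|/d\geq n(1-\alpha/2)$, while Weyl's inequality yields $\sigma_2(\tilde X)\leq \|B\|_{op}/\sqrt d\leq \sqrt{n\alpha/2}$; hence $\sigma_2/\sigma_1\leq \sqrt{\alpha/(2-\alpha)}\leq \gamma$.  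The accuracy guarantee of $\Mc$ then reads $y^T\tilde X^T\tilde X y\geq n(1-\alpha/2)-\lambda\gamma n$.  Expanding $d\cdot \tilde X^T\tilde X=A^TA+A^TB+B^TA+B^TB$ using $A^TA=nww^T$, and bounding the cross terms via $\|A\|_{op}\|B\|_{op}\leq nd\sqrt{\alpha/2}$, this rearranges to $(w^Ty)^2\geq |S|\cdot(1-O(\lambda\sqrt{\alpha}))$; equivalently, $|y^Tu|\geq 1-O(\lambda\sqrt{\alpha})$ for $u:=w/\|w\|_2$.

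The final and principal obstacle is the $\pm$ sign ambiguity: $y$ is close only to $\varepsilon u$ for an unknown $\varepsilon\in\oo$, and the strongly-agree relation is not symmetric under $q\mapsto-q$.  This is exactly what the anchor columns are designed to resolve.  A standard counting argument bounds the number of coordinates $j\in S$ with $\sign(y_j)\neq\varepsilon w_j$ by $\|y-\varepsilon u\|_2^2\cdot|S|=O(\lambda\sqrt{\alpha})\cdot d$, since each such $j$ contributes at least $1/|S|$ to $\|y-\varepsilon u\|_2^2\leq O(\lambda\sqrt{\alpha})$.  Setting $\alpha=c/\lambda^2$ with $c$ small enough, this count is simultaneously (i) smaller than $|A_+|/2=d/8$, so the majority-vote $s$ correctly recovers $\varepsilon$, and (ii) smaller than $0.1\cdot|\cJ_X^b|$ for each $b\in\oo$, so $q$ strongly-agrees with $X$ whenever $\Mc$'s accuracy holds---an event of probability at least $\beta$.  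Since $\hMc$ applies $\Mc$ to a row-wise (hence adjacency-preserving) pre-processing of $X$ followed by post-processing of $y$ that depends only on the fixed anchor positions, $\hMc$ inherits the $(1,\beta/(4n))$-DP guarantee of $\Mc$, and \cref{thm:intro:base_tool} applies to complete the proof.
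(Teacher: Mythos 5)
Your proof is correct and reduces to \cref{thm:intro:base_tool} as the paper does, but it handles the inherent $\pm$ sign ambiguity of a singular vector by a genuinely different device. The paper works directly with the $d$-column input, shows that for any sufficiently accurate $y$, at least one of $\sign(y),-\sign(y)$ strongly-agrees with $X$, and then argues that one of the two candidate post-processors $\tMc(X)=\signn{\Mc(\gamma,\tfrac1{\sqrt d}X)}$ and $\tMc'(X)=-\tMc(X)$ must be $(\alpha,\beta/2)$-weakly-accurate (strictly speaking, since the correct sign may depend on $X$, one should take a uniformly random mix of $\tMc$ and $\tMc'$ to get a single mechanism that is $(\alpha,\beta/2)$-weakly-accurate for all inputs; the paper glosses over this). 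You instead append $m$ ``anchor'' columns of known sign before invoking $\Mc$ and then disambiguate the global sign by a majority vote over the $(+1)$-anchors, obtaining a \emph{single} mechanism that is $(\alpha,\beta)$-weakly-accurate with no loss in $\beta$; the cost is a factor-$2$ reduction in the ambient dimension ($m=d/2$), which is harmless. Your spectral-gap bookkeeping (the rank-one-plus-noise decomposition $X'=\1_n w^T+B$, Weyl for $\sigma_2$, and the $\ell_2$-to-Hamming counting argument) is sound and closes the loop. Both approaches give the stated $\Omega\paren{\frac{\sqrt d/\lambda}{\log^{1.5}(d/(\beta\lambda))}}$ bound; yours is arguably cleaner in that it avoids the ``either $\tMc$ or $\tMc'$'' formulation entirely, while the paper's is slightly more economical in not needing the reserved anchor coordinates.
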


For comparison, \citet{DTTZ14} proved a lower bound of $\tilde{\Omega}(\sqrt{d})$ on the sample complexity but it only applies when $\gamma$ is larger than some specific constant, whereas our lower bound holds for the entire range of $\gamma$.

\subsection{Related Work}
The connection between fingerprinting codes and differential privacy was first introduced by Bun, Ullman, and Vadhan~\cite{BUV14}.  Subsequent work significantly simplified and generalized the method in a number of ways \cite{SteinkeU16,DworkSSUV15,BunSU17,SteinkeU17,KLSU19,CaiWZ21,KamathMS22,CaiWZ21,NarayananME22}, including the removal of fingerprinting codes and distilling the main technical component into a \emph{fingerprinting lemma}.  The price of this simplicity and generality is that the lower bounds rely on having a stronger accuracy requirement that constrains both marked and unmarked columns, whereas the hard instances constructed directly from fingerprinting codes require only a very weak accuracy requirement that constrains only marked columns.

Fingerprinting lower bounds have applied in many settings, including mean estimation \cite{BUV14,DworkSSUV15,KLSU19,NarayananME22}, adaptive data analysis \cite{HU14,SU15}, empirical risk minimization \cite{BassilyST14}, spectral estimation \cite{DTTZ14}, combining public and private data \cite{BassilyCMNUW20}, regression \cite{CaiWZ21,CaiWZ23}, sampling \cite{RaskhodnikovaSSS21}, Gaussian covariance estimation \cite{KamathMS22,Narayanan2024}, continual observation \cite{JainRSS23}, and unbiased private estimation \cite{KamathMRSSS23}. Of these works, all of them except for \cite{RaskhodnikovaSSS21} use fingerprinting lemmas, rather than the stronger construction of fingerprinting codes.  Except \cite{NarayananME22} (discussed in \cref{sec:technique:comparison}), all these lower bounds are based on the same type of hard instances that lead to the data being contained in a known ball of a given radius, and thus none of them have the smoothness property we desire.  Thus, we believe that our method could find other applications beyond the ones that are described in this paper.

\subsection{Paper Organization}
In \cref{sec:ProofOverview} we present a proof overview of \cref{thm:intro:base_tool} for the case $\beta \approx 1$ that uses an optimal fingerprinting code as black-box,  explain how we give a direct proof using a strong fingerprinting lemma that we developed in this work, and describe additional properties of our results.

 Notations, definitions and general statements used throughout the paper are given in \cref{sec:prelim}. Our strong fingerprinting lemma is stated and proved in \cref{sec:FPL}. In \cref{sec:framework} we present a general framework for proving DP lower bounds that is based on our fingerprinting lemma. In \cref{sec:FPT:PAP} we present our padding-and-permuting transformation, and prove \cref{thm:intro:base_tool,thm:intro:extended_tool} using our framework. In \cref{sec:applications} we prove \cref{thm:intro:lower_bound:avg,thm:intro:kmeans,thm:intro:top_sing}, which give our main applications. In \cref{sec:appendix:FPC} we show how to construct a simple fingerprinting code using our strong fingerprinting lemma.

\section{Our Technique}\label{sec:ProofOverview}

In this section, we present a proof overview of \cref{thm:intro:base_tool} for $\beta \approx 1$. In \cref{sec:ProofViaTardos} we present a simple variant of the proof that uses an optimal fingerprinting code as black-box (e.g., \citet{Tardos08}). In \cref{sec:ProofViaLemma} we explain how we actually avoid the use of Tardos' fingerprinting code by developing a strong fingerprinting lemma. In \cref{sec:technique:framework} we describe our more general framework, and in \cref{sec:technique:comparison} we make a detailed comparison with \cite{NarayananME22}.

\subsection{Proof via an Optimal Fingerprinting Code}\label{sec:ProofViaTardos}

\paragraph{Fingerprinting Code (FPC).}
An FPC consists of two algorithms: $\Gen$ and $\Trace$. Algorithm $\Gen$ on input $n$ outputs a codebook (matrix) $(x_i^j)_{i \in [n], j \in [d]} \in \oo^{n \times d}$ for $d = d(n)$, and a secret state $\st$. An adversary who controls a coalition $\cS \subseteq [n]$ only gets the rows $(x_i)_{i \in \cS}$ and is required to output $q=(q^1,\ldots,q^d) \in \oo^d$ that agrees with the ``marked" columns, i.e., columns $j \in [d]$ where 
$x_i^j = b$ (for the same $b \in \oo$) for every $i \in \cS$. On ``unmarked" columns $j \in [d]$, there is no restriction and the adversary is allowed to choose $q^j$ arbitrarily.
Algorithm $\Trace$, given such legal $q$ (and the secret state $\st$), guarantees to output $i \in \cS$ with high probability (i.e., to reveal at least one of the coalition members).

Fingerprinting codes were originally introduced by \citet{BonehS98}.
\citet{Tardos08} constructed an optimal FPC of length $d_0 = \tilde{\Theta}(n^2)$, and \citet{BUV14} proved that Tardos' code is actually \emph{robust}, i.e., it enables tracing even when the adversary is allowed to be inconsistent with a small fraction of marked columns (say, $20\%$).\footnote{\cite{BUV14} did not try to optimize the constant in the fraction of errors, and only proved it for $4\%$. For the purpose of this proof sketch, we assume that the code is robust for $20\%$ errors. A formal proof that relies on their result must change the constant $0.9$ in \cref{def:intro:strongly-agree} to a constant larger than $0.96$.}

\paragraph{Padding-and-Permuting FPC.}

Given a robust FPC $(\Gen,\Trace)$ of length $d_0 = d_0(n)$, consider the padding-and-permuting (PAP) variant of it as the following pair of algorithms $(\Gen',\Trace')$:

\begin{algorithm}[$\Gen'$]
    \item Input parameters: Number of users $n \in \bbN$ and accuracy parameter $\alpha \in [0,1]$. Let $d_0 = d_0(n)$ be the codewords' length of $\Gen(n)$ and  let $d = d_0 + 2\ell$ for $\ell = \ceil{\frac{d_0}{2\alpha}}$.

    \item Operation:~
    \begin{enumerate}
        \item Sample a codebook $X \in \oo^{n \times d_0}$ along with a secret state $\st$ according to $\Gen(n)$.
        \item Append $\ell$ $1$-marked and $\ell$ $(-1)$-marked columns to the matrix $X$.
        \item Permute the columns of $X$ according to a random permutation $\pi$ over $[d]$.
        \item Output the resulting matrix $X' \in \oo^{n \times d}$ along with the new state $\st' = (\pi,\st)$.
    \end{enumerate}
\end{algorithm}

\begin{algorithm}[$\Trace'$]
    \item Input parameters: A weakly-accurate result $q = (q^1,\ldots,q^{d}) \in \oo^{d}$ and a secret state $\st' = (\pi,\st)$.

    \item Operation:~
    \begin{itemize}
        \item Output $\Trace(\tq,\st)$ for $\tq = (q^{\pi(1)}, \ldots, q^{\pi(d_0)}) \in \oo^{d_0}$.
    \end{itemize}
\end{algorithm}

Note that we set the padding length as a function of the accuracy parameter $\alpha$ such that  a weaker accuracy (i.e., smaller $\alpha$) results with a larger padding. This is crucial for creating hard instances in the regime where an $\alpha$-weakly-accurate mechanism must be accurate.

\paragraph{Proving \cref{thm:intro:base_tool}}

Let $(\Gen,\Trace)$ be a robust FPC with codewords' length $d_0 = \tilde{\Theta}(n^2)$ (e.g., \cite{Tardos08}).
Suppose that we sample $X' = (x_1',\ldots,x_n') \in \oo^{n \times d}$ according to $\Gen'(n, \alpha)$. By construction, $X'$ contains at least $\frac12(1-\alpha)d$ $b$-marked columns, for both $b \in \oo$. Therefore, if $\Mc \colon \oo^{n \times d} \rightarrow \oo^{d}$ is $\alpha$-weakly-accurate, then the output $q \in \oo^d$ of $\Mc(X')$ must agree with $90\%$ of the marked columns of $X'$. But because the columns are randomly permuted, $\Mc$ cannot distinguish between marked columns from the padding and marked columns from the original codebook $X$. This means that it must agree with a similar fraction of marked columns of the codebook $X$, which enables tracing since the code is robust. Therefore, we conclude that such a weakly-accurate mechanism cannot be DP unless $n \geq \tilde{\Omega}(\sqrt{d_0})$, i.e., $n \geq \tilde{\Omega}(\sqrt{\alpha d})$.

We remark that handling smaller values of $\beta$ (the success probability of $\Mc$) creates more technical challenges that we ignore for the purpose of this overview.

\subsection{Proof via a Strong Fingerprinting Lemma}\label{sec:ProofViaLemma}

The disadvantage of using FPC as black-box for DP lower bounds is the fact that \cite{Tardos08}'s analysis is quite involved, so it is hard to gain an end-to-end understanding of the process, and in particular, to generate hard instances in more complicated settings (e.g., exponential families \cite{KamathMS22}).
Therefore, later results simplified the construction and especially the analysis, at the cost of considering more restricted adversaries that must estimate the average of most coordinates, and not just the ``marked" ones (which usually suffices for the aggregation tasks we are interested in). This led to the development of the Fingerprinting Lemma (described below) which serves as the most common tool for proving lower bounds for approximate DP algorithms. 

\begin{lemma}[Original Fingerprinting Lemma \cite{BunSU17,DworkSSUV15}]\label{lem:intro:origFPL}
    Let $f \colon \oo^n \rightarrow [-1,1]$ be a function such that for every $x=(x_1,\ldots,x_n) \in \oo^n$, satisfies $\size{f(x) - \frac1n\sum_{i=1}^n x_i} \leq 1/3$. Then, 
    \begin{align*}
        \eex{p \la [-1,1], \: x_{1\ldots n} \sim p}{f(x) \cdot \sum_{i=1}^n (x_i - p)} \geq \Omega(1),
    \end{align*}
    where $p \la [-1,1]$ denotes that $p$ is sampled uniformly over $[-1,1]$, and $x_{1\ldots n} \sim p$ denotes that each $x_i \in \oo$ is sampled independently with $\ex{x_i} = p$.
\end{lemma}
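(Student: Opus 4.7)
The plan is to derive the lemma from a score-function identity (Stein-type identity) for the biased product distribution on $\oo^n$, then integrate against $p$ and exploit the closeness of $f$ to the empirical mean.

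First, fix $p \in (-1,1)$ and write $\pr{x \mid p}$ for the probability of $x \in \oo^n$ under the product distribution with bias $p$ (\ie each $x_i = 1$ with probability $(1+p)/2$ and $x_i = -1$ with probability $(1-p)/2$). A direct calculation of $\partial_p \log \pr{x \mid p}$ yields the score identity
\begin{align*}
\partial_p \pr{x \mid p} \;=\; \pr{x \mid p}\cdot \frac{1}{1-p^2}\sum_{i=1}^n (x_i - p).
\end{align*}
I would then define $g(p) \eqdef \eex{x \sim p}{f(x)}$ and differentiate under the sum to obtain the key identity
\begin{align*}
    \eex{x \sim p}{f(x) \cdot \sum_{i=1}^n (x_i - p)} \;=\; (1-p^2)\cdot g'(p),
\end{align*}
which converts the quantity of interest into a derivative of the expected output of $f$.

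Next I would integrate this over $p$ uniform in $[-1,1]$ and apply integration by parts. The boundary terms vanish because $1-p^2 = 0$ at $p = \pm 1$, giving
\begin{align*}
    \eex{p \la [-1,1], \: x \sim p}{f(x) \cdot \sum_{i=1}^n (x_i - p)} \;=\; \frac12 \int_{-1}^{1} (1-p^2)\, g'(p)\, dp \;=\; \int_{-1}^{1} p\cdot g(p)\, dp.
\end{align*}
Finally, I would invoke the hypothesis on $f$. Since $\size{f(x) - \frac1n\sum_i x_i} \leq 1/3$ pointwise, taking expectation under $x \sim p$ gives $\size{g(p) - p} \leq 1/3$, which implies $p \cdot g(p) \geq p^2 - |p|/3$. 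Plugging this into the integral and evaluating $\int_{-1}^{1}(p^2 - |p|/3)\,dp = 2/3 - 1/3 = 1/3$ yields the desired $\Omega(1)$ lower bound.

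The main obstacle (really the only conceptual step) is finding the right representation of $\sum_i (x_i - p)$ as a likelihood-ratio derivative so that the quantity becomes a total derivative that can be integrated by parts; once that identity is in hand, the rest is direct. A minor technical point is handling the degenerate endpoints $p = \pm 1$, where $\sum_i (x_i - p) = 0$ almost surely, so they contribute nothing; this is consistent with the $(1-p^2)$ factor that kills the boundary terms in the integration by parts.
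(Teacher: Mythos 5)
Your proof is correct; I checked each step. The score identity you derive is exactly the paper's Lemma~\ref{lem:Dwork} (quoted from \cite{DworkSSUV15}), and the remaining steps---integrating by parts against the uniform density to reduce to $\int_{-1}^{1} p\,g(p)\,dp$ and then using $\lvert g(p)-p\rvert\le 1/3$ to lower-bound the integrand by $p^2-\lvert p\rvert/3$---are the standard route taken in the cited works. Note that the paper itself does not prove \cref{lem:intro:origFPL}; it only states it with a citation and instead proves a strengthened variant (\cref{new_FP_lemma}) using the same identity but with a carefully chosen non-uniform prior $\rho(p)\propto 1/(1-p^2)$ that cancels the $(1-p^2)$ factor, so as to avoid needing any control of $g$ away from $p=\pm 1$. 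Your argument matches the paper's toolkit and is a clean, self-contained proof of the original (weaker) lemma.
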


Roughly, \cref{lem:intro:origFPL} says that if $f(x) \approx \frac1n\sum_{i=1}^n x_i$, then $f(x)$ has $\Omega(1/n)$ correlation (on average) with each $x_i$. In order to increase the correlation, we increase the dimension of the $x_i$'s by using $d = \tilde{\Theta}(n^2)$ independent copies for each coordinate (column). This guarantees that the average correlation that a single word $x_i \in \oo^d$ has with an accurate output $q \in [-1,1]^d$ is $\tilde{\Omega}(d/n)$, which is sufficiently larger than the $\tilde{O}(\sqrt{d})$ correlation that an independent row (which was not part of the input) has with $q$.

However, in our case, since \cref{lem:intro:origFPL} is only restricted to adversaries that must be accurate for all types of columns, and not just marked ones, we could not use it for our padding-and-permuting (PAP) technique, and therefore we developed the following stronger Fingerprinting Lemma.

\begin{lemma}[Our Strong Fingerprinting Lemma]\label{lem:intro:new_FP_lemma}
    Let $f \colon \oo^n \rightarrow [-1,1]$ with $f(1,\ldots,1) = 1$ and $f(-1,\ldots,-1) = -1$, and
	let $\rho$ be the distribution that outputs $p = \frac{e^t - 1}{e^t + 1}$ for $t \la [-\ln(5n),\ln(5n)]$. Then,
	\begin{align*}
		\eex{p \sim \rho, \: x_{1\ldots n} \sim p}{f(x) \cdot \sum_{i =1}^n (x_i - p)} \geq  \Omega(1/\log n).
	\end{align*}
\end{lemma}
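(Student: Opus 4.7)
The plan is to reduce the expectation to a telescoping integral via a carefully chosen reparameterization of $p$, and then exploit only the two boundary conditions $f(\vec{1})=1$ and $f(-\vec{1})=-1$. First, I would define $g(p) := \Ex_{x \sim p}[f(x)] = \sum_x f(x) \prod_i \frac{1+p x_i}{2}$ and derive the standard score identity
$$\Ex_{x \sim p}\!\left[f(x) \sum_{i=1}^n (x_i - p)\right] = (1-p^2)\, g'(p).$$
This identity follows by differentiating $\Pr[x\mid p]$ coordinate-wise and using the elementary equality $\frac{x_i}{1+p x_i} = \frac{x_i - p}{1-p^2}$, which is verified by checking the two values $x_i \in \{-1,1\}$.

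The second step is to observe that $\rho$ was engineered precisely so that the density of $p$ cancels the weight $(1-p^2)$. Writing $p=\tanh(t/2)$ with $t$ uniform on $[-T,T]$ for $T=\ln(5n)$, we have $\frac{dp}{dt} = \frac{1-p^2}{2}$, so the change of variables gives
$$\Ex_{p \sim \rho}\bigl[(1-p^2)\, g'(p)\bigr] \;=\; \frac{1}{2T}\int_{-T}^{T} (1-p^2)\, g'(p)\, dt \;=\; \frac{1}{T}\int_{-p_T}^{p_T} g'(p)\, dp \;=\; \frac{g(p_T) - g(-p_T)}{T},$$
where $p_T = \tanh(T/2) = \frac{5n-1}{5n+1}$. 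The derivative telescopes without any integration by parts, which is the key departure from the original fingerprinting lemma of \cite{BunSU17,DworkSSUV15}.

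Finally, I would use only the boundary conditions (and $|f|\le 1$) to bound the numerator from below by an absolute constant. Since $f(\vec{1})=1$ and $|f|\le 1$, we have $g(p_T) \ge 2\Pr_{x\sim p_T}[x = \vec{1}] - 1 = 2\bigl(1-\tfrac{1}{5n+1}\bigr)^{n} - 1$; applying $(1-1/m)^{m-1}\ge 1/e$ with $m=5n+1$ yields $\bigl(1-\tfrac{1}{5n+1}\bigr)^n \ge e^{-1/5}$. Symmetrically, $g(-p_T) \le 1 - 2e^{-1/5}$, so $g(p_T) - g(-p_T) \ge 4e^{-1/5} - 2 = \Omega(1)$, and dividing by $T=\ln(5n)$ gives the claimed $\Omega(1/\log n)$ bound.

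The main obstacle, and also the main conceptual content, is identifying the sigmoid reparameterization $p=\tanh(t/2)$: it is this choice that lets us dispense with the strong accuracy hypothesis on $f$ (which was needed in the original lemma to control the post-integration-by-parts term $\int_{-1}^1 2p\, g(p)\, dp$) and replace it with the much weaker requirement that $f$ takes the extreme values $\pm 1$ at the two corners $\pm\vec{1}$ of the hypercube. Everything else is elementary calculus, though some care is needed to track the constant in the final bound (and to extend to the regime $\beta \ll 1$ used in the applications).
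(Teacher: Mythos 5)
Your proposal is correct and follows essentially the same route as the paper: define $g(p)=\Ex_{x\sim p}[f(x)]$, invoke the score identity $\Ex_{x\sim p}[f(x)\sum_i(x_i-p)] = (1-p^2)g'(p)$ (the paper's Lemma~\ref{lem:Dwork}, due to \cite{DworkSSUV15}), observe that the density of $\rho$ was chosen to cancel the $(1-p^2)$ factor so the integral telescopes to $g(p_T)-g(-p_T)$, and bound the two boundary values using only $f(\pm\vec{1})=\pm 1$ together with $|f|\le 1$. The only differences are cosmetic — you write the cancellation as a change of variables $p=\tanh(t/2)$ whereas the paper computes the PDF of $\rho$ explicitly, and you use $(1-1/m)^{m-1}\ge e^{-1}$ while the paper uses $e^{-1.1y}\le 1-y$ — both yielding the same $\Omega(1/\log n)$ conclusion.
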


Note that up to the $\log n$ factor, \cref{lem:intro:new_FP_lemma} is much stronger than the original fingerprinting lemma that is used in the literature (\cref{lem:intro:origFPL}), since it only requires $f$ to be fixed on the two points, $(1,\ldots,1)$ and $(-1,\ldots,-1)$, and nothing else (i.e., $f$ can be completely arbitrary on any other input).

Now the same approach of \cite{BunSU17,DworkSSUV15} yields that we can increase the correlation by taking $d = \tilde{\Theta}(n^2)$ independent columns,\footnote{An additional factor of $\log^2 n$ is hidden inside the $\tilde{\Theta}$ due to the $\log n$ factor that we lose in our new fingerprinting lemma. However, we ignore it for the sake of this presentation as it is only a low-order term.} which leads to a tracing algorithm. But unlike \cite{BunSU17,DworkSSUV15}, we obtain a much stronger ``FPC-style" tracing algorithm, which enables to apply a similar PAP approach as described in \cref{sec:ProofViaTardos}. As a corollary that is of independent interest, the above approach results in a new fingerprinting code that has a simpler analysis than the one of \cite{Tardos08}, described in \cref{sec:appendix:FPC}.

An additional advantage of \cref{lem:intro:new_FP_lemma} is that it also extends to randomized functions $f$ with $\pr{f(1,\ldots,1) = 1}, \pr{f(-1,\ldots,-1) = -1} \geq 0.9$. This yields an FPC against mechanisms $\Mc$ that given a codebook $X$, the output $\Mc(X)$ is \emph{strongly-correlated} with $X$:

\begin{definition}[Strongly Correlated]\label{def:technique:strongly-correlated}
	We say that a random variable $Q = (Q^1,\ldots,Q^d) \in \oo^d$ is \emph{strongly-correlated} with a matrix $X \in \oo^{n \times d}$, if
	\begin{align*}
		\forall b\in \oo, \: \forall j \in \cJ_X^b: \quad \pr{Q^j = b} \geq 0.9.
	\end{align*}
\end{definition}

Note that \emph{strongly-correlated} (\cref{def:technique:strongly-correlated}) is similar to \emph{strongly-agrees} (\cref{def:intro:strongly-agree}) but not exactly the same (in particular, the former is a property of a random variable while the latter is a property of a fixed vector). In our analysis, we use the \emph{strongly-correlated} definition which makes the statements and proofs much more clean. To see this, recall that in the proof sketch in \cref{sec:ProofViaTardos}, we claimed that an algorithm that agrees with $90\%$ of the marked columns of $X'$, also agrees with a ``similar" fraction of marked columns of the original codebook $X$. But this is not exactly the same agreement fraction, and it creates some additional restrictions on the parameters, and in particular, one has to go into the specific FPC construction to argue that w.h.p., there are many marked columns (as done by \cite{BUV14} w.r.t. \citeauthor{Tardos08}'s code). But in this work we observed that padding-and-permuting simply transforms an $0.9$-agreement on $X'$ (i.e., \emph{strong-agreement}) into $0.9$-correlation on $X$ (i.e., \emph{strong-correlation}) without any special requirements on the parameters (this is the content of \cref{lemma:PAP}), so we were able use our FPC directly. 


\subsection{More General Framework}\label{sec:technique:framework}


While our tools (\cref{thm:intro:base_tool,thm:intro:extended_tool}) capture various fundamental problems, they may not serve as the right abstraction for handling more complicated output spaces (e.g., matrices) that are used for other problems (e.g., subspace or covariance estimation). We therefore provide a more general framework that we hope can be applied to other types of algorithms without the need to develop similar tools from scratch. 

Consider a mechanism $\Mc \colon \cX^n \rightarrow \cW$ that satisfies some weak accuracy guarantee.
In order to prove a lower bound on $n$ using our approach, we need somehow to transform an FPC codebook $X\in \oo^{n_0 \times d_0}$ into hard instances $Y \in \cX^n$ for $\Mc$, and then extract from the output $w \in \cW$ of $\Mc(Y)$ a vector $q \in \oo^{d_0}$ that is strongly-correlated with $X$ ($n_0$ and $d_0$ are some functions of $n$ and $d$ and the weak accuracy guarantee of $\Mc$). 
Denote by $\Gc \colon \oo^{n_0 \times d_0} \times \cV \rightarrow \cX^n$ the algorithm that generates the hard instances using a uniformly random secret $v \la \cV$ (i.e., $v$ could be a random permutation, a sequence of random permutations, etc). 
Denote by $\Fc \colon \cV \times \cW \rightarrow \oo^{d_0}$ the algorithm that extracts a good $q$ using the secret $v$ and the output $w$.
Denote by $\Ac^{\Mc, \Fc, \Gc}(X)$ the process that samples $v \la \cV$, and outputs $q \sim \Fc(v, \Mc(\Gc(X,v)))$.

Our framework (\cref{lemma:framework}) roughly states that if $\Mc$ is $\paren{1, \frac{\beta}{4n_0}}$-DP and there exists such $\Gc, \Fc$ where: (1) The output of $\Ac^{\Mc, \Fc, \Gc}(X)$ is strongly-correlated with $X$ w.p. at least $\beta$ over the random coins of $\Mc, \Fc, \Gc$, and (2) $\Gc$ is neighboring-preserving (i.e., maps neighboring datasets to neighboring datasets), then $n_0 \geq \Omega\paren{\frac{\sqrt{d_0}}{\log^{1.5}(d_0/ \beta)}}$.

We prove \cref{thm:intro:base_tool} by applying the framework with $n_0 = n$ and $d_0 = \alpha d$, and we prove \cref{thm:intro:extended_tool} by applying it with $n_0 = n/k$ and $d_0 = \alpha d$.

\subsection{Comparison with \cite{NarayananME22}}\label{sec:technique:comparison}

\citet{NarayananME22} developed very different hard-instances for lower bounding \emph{user-level} DP averaging that can be used to prove a similar statement to \cref{thm:intro:lower_bound:avg} (our DP averaging lower bound).
In their setting, a distribution vector $p = (p_1,\ldots,p_d) \in [0.5/d,1.5/d]^d$, $\sum_{i=1}^d p_i = 1$, is sampled, and the goal is to estimate it. For each user (out of $m$), they sample $m = \lambda^2 d$  one-hot vectors according to $p$, and provide the average of the vectors as the user's input point. Since each $p_i \approx 1/d$, it can be shown that w.h.p., the resulting points have $\ell_2$ diameter $\gamma \approx 1/\sqrt{m} = 1/\lambda \sqrt{d}$ (and also close to $p$ up to such an additive error).
So their Lemma 23 essentially states that estimating $p$ up-to additive $\ell_2$ error of $1/\sqrt{d} = \lambda\cdot \gamma$, requires $\tilde{\Omega}(\sqrt{d}/\lambda)$ users (which matches our \cref{thm:intro:lower_bound:avg} up to the low-order terms). 

While both works yield similar smooth lower bound for DP averaging, the focus of the works is different: \cite{NarayananME22} focus on \emph{user-level} DP averaging, while our focus is on general bounds for various (\emph{item-level}) DP problems. As part of that, we have several advantages in the \emph{item-level} DP case:

\begin{enumerate}
	\item Applicability: Our padding-and-permuting FPC hard-instances enable to prove more general tools (\cref{thm:intro:base_tool,thm:intro:extended_tool}) which provide clean abstraction for lower bounding other fundamental problems under DP, like clustering and estimating the top-singular vector.
	
	\item Simplicity:  Our proof is conceptually cleaner and simpler, and in particular, can be explained using a simple black-box construction from an optimal fingerprinting-code (\cref{sec:ProofViaTardos}).
	
	\item Tighter Bounds: \cite{NarayananME22} have an $1/\log^7 (d/\lambda)$  dependency in their DP-averaging lower bound (see their Theorem 10), while we only have an $1/\log^{1.5} (d/\lambda)$ dependency.
	
	\item Boolean instances: We create \emph{boolean} hard-instances, while \cite{NarayananME22}'s instances are not.
\end{enumerate}

\section{Preliminaries}\label{sec:prelim}

\subsection{Notations}
We use calligraphic letters to denote sets and distributions, uppercase for matrices and datasets, boldface for random variables, and lowercase for vectors, values and functions. 
For $n \in \bbN$, let $[n] = \set{1,2,\ldots,n}$. 
Throughout this paper, we use $i \in [n]$ as a row index, and $j \in [d]$ as a column index (unless otherwise mentioned).

For a matrix $X = (x_i^j)_{i\in [n], j \in [d]}$, we denote by $x_i$ the \ith row of $X$ and by $x^j$ the \jth column of $X$. A column vector $x \in \bbR^n$ is written as $(x_1,\ldots,x_n)$ or $x = x_{1\ldots n}$, and a row vector $y \in \bbR^d$ is written as $(y^1,\ldots,y^d)$ or $y^{1\ldots d}$. In this work we consider mechanisms who receive an
$n \times d$ matrix $X$ as input, which is treated as the dataset $X = (x_1,\ldots,x_n)$ where the rows of $X$ are the elements (and therefore, we sometimes write $X \in (\bbR^d)^n$ instead of $X \in \bbR^{n \times d}$ to emphasize it).
For $d \in \bbN$ we denote by $\cP_d$ the set of all $d \times d$ permutation matrices.

For a vector $x \in \bbR^d$ we define $\norm{x}_1 = \sum_{i=1}^n \size{x_i}$ (the $\ell_1$ norm of $x$), and $\norm{x}_2 = \sqrt{\sum_{i=1}^n x_i^2}$ (the $\ell_2$ norm of $x$), and for a subset $\cS \subseteq [d]$ we define $x_{\cS}=(x_i)_{i \in \cS}$, and in case $x$ is a row vector we write $x^{\cS}$. Given two vectors $x = (x_1,\ldots,x_n), y = (y_1,\ldots,y_n)$, we define $\ip{x,y} = \sum_{i=1}^n x_i y_i$ (the inner-product of $x$ and $y$). 
For a matrix $X = (x_i^j)_{i\in [n], j \in [d]} \in \oo^{n \times d}$ and $b \in \oo$, we define the $b$-marked columns of $X$ as the subset $\cJ^b_X \subseteq [d]$ defined by $\cJ^b_X = \set{j \in [d] \colon x_i^j = b \text{ for all }i\in[n]}$.

For $z \in \bbR$,  we define $\signn{z} \eqdef \begin{cases} 1 & z \geq 0 \\ -1 & z<0\end{cases}$ and for $v = (v^1,\ldots,v^d) \in \bbR^d$ we define $\signn{v} \eqdef (\signn{v^1},\ldots,\signn{v^d}) \in \oo^d$.

\subsection{Distributions and Random Variables}\label{sec:prelim:dist}

Given a distribution $\cD$, we write $x \sim \cD$ to denote that $x$ is sampled according to $\cD$.
For a set $\cS$, we write $x \la \cS$ to denote that $x$ is sampled from the uniform distribution over $\cS$.

\begin{fact}[Hoeffding's inequality \cite{hoeff}]
Let $\bx_1,\dots, \bx_n$ be independent random variables taking integer values in the range $[a,b]$. 
Also, let $\bx=\sum_{i=1}^n \bx_i$ denote the sum of the variables and $\mu = \eex{}{\bx}$ denote its expectation.
Then, for any $t>0$,
\begin{align}
    \pr{\bx \leq \mu - t} \leq e^{-\frac{2t^2}{n(b-a)^2}}.\label{hoeffding_ineq1}
\end{align}

\begin{align}
    \pr{\bx \geq \mu + t} \leq e^{-\frac{2t^2}{n(b-a)^2}}.\label{hoeffding_ineq2}
\end{align}
\end{fact}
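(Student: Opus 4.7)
The plan is to prove this by the classical exponential moment method (Chernoff bound) combined with a one-variable tail bound for centered bounded random variables (Hoeffding's lemma). I will establish the first inequality; the second follows by applying the same argument to $-\bx_i$, which also takes integer values in an interval of length $b-a$.

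First, for any parameter $\lambda > 0$, I would apply Markov's inequality to $e^{-\lambda(\bx-\mu)}$:
\[
\pr{\bx \leq \mu - t} \;=\; \pr{e^{-\lambda(\bx-\mu)} \geq e^{\lambda t}} \;\leq\; e^{-\lambda t} \cdot \ex{e^{-\lambda(\bx - \mu)}}.
\]
Using independence of the $\bx_i$'s, the moment generating factor splits as
\[
\ex{e^{-\lambda(\bx - \mu)}} \;=\; \prod_{i=1}^n \ex{e^{-\lambda(\bx_i - \ex{\bx_i})}}.
\]

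The main technical step is Hoeffding's lemma: if $Y$ is a random variable with $\ex{Y}=0$ supported in $[a',b']$, then $\ex{e^{\lambda Y}} \leq e^{\lambda^2(b'-a')^2/8}$ for all $\lambda \in \bbR$. The standard proof bounds $e^{\lambda y}$ above by the chord of the exponential on $[a',b']$ (by convexity), takes expectations using $\ex{Y}=0$, and writes the resulting upper bound as $e^{\phi(\lambda)}$ where $\phi(0) = \phi'(0) = 0$ and a short calculus check shows $\phi''(\lambda) \leq (b'-a')^2/4$; Taylor's theorem then yields the lemma. Applying it to $Y_i = \bx_i - \ex{\bx_i}$, which is centered and supported in an interval of length at most $b-a$, each MGF factor is at most $e^{\lambda^2(b-a)^2/8}$. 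Substituting back,
\[
\pr{\bx \leq \mu - t} \;\leq\; \exp\!\paren{-\lambda t \;+\; \frac{n \lambda^2 (b-a)^2}{8}}.
\]
Finally, I would optimize the right-hand side over $\lambda > 0$: the minimum is attained at $\lambda = 4t/(n(b-a)^2)$, which gives exactly $\exp\!\paren{-2t^2/(n(b-a)^2)}$.

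The main obstacle is Hoeffding's lemma itself, specifically the second-derivative estimate $\phi''(\lambda) \leq (b'-a')^2/4$, which requires a short but non-obvious AM-GM-style calculation. All remaining steps (the Markov step, factoring by independence, and the one-variable optimization) are routine.
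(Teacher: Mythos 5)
The paper does not prove this statement: it is cited as \cite{hoeff} and used as a black-box tool, so there is no internal proof to compare against. Your proposal is the standard and correct derivation — Markov applied to $e^{-\lambda(\bx-\mu)}$, factorization of the moment generating function by independence, Hoeffding's lemma $\ex{e^{\lambda Y}} \leq e^{\lambda^2(b'-a')^2/8}$ for centered $Y$ supported in an interval of length $b'-a'$, and optimization at $\lambda = 4t/(n(b-a)^2)$ yielding the exponent $-2t^2/(n(b-a)^2)$ exactly. The symmetry argument via $-\bx_i$ for the upper-tail version is also fine. (One small remark: the integrality hypothesis in the paper's statement is superfluous — your argument, like the classical one, needs only boundedness in $[a,b]$ — so you have in fact proved a slightly more general statement.)
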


\begin{definition}[Behave the same]\label{def:behave_the_same}
	We say that two random variables $\bx$ and $\bx'$ over $\cX$ behave the same w.p. $\beta$, if there exists a random variable $\by$ over $\cY$ (jointly distributed with $\bx,\bx'$), and event $E \subseteq \cY$ with $\pr{\by \in E} \geq \beta$ such that $\bx|_{\by \in E} \equiv \bx'|_{\by \in E}$.
\end{definition}

\begin{fact}\label{fact:behave_the_same}
	If $\bx$ and $\bx'$ behave the same w.p. $\beta$, then for any event $F$, $$\pr{\bx \in F} \geq \pr{\bx' \in F} - (1 - \beta).$$
\end{fact}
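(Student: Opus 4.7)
The plan is to use the joint random variable $\by$ and the event $E$ guaranteed by the definition, and then split the probability $\pr{\bx \in F}$ according to whether $\by \in E$ or not. Since the hypothesis $\bx|_{\by \in E} \equiv \bx'|_{\by \in E}$ provides an exact equivalence of conditional distributions on $E$, the contribution from the event $\set{\by \in E}$ transfers cleanly between $\bx$ and $\bx'$, and the only loss comes from the complement $\set{\by \notin E}$, which by assumption has probability at most $1 - \beta$.

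Concretely, I would first write
\begin{align*}
\pr{\bx \in F} \;\geq\; \pr{\bx \in F, \: \by \in E} \;=\; \pr{\bx \in F \mid \by \in E}\cdot \pr{\by \in E},
\end{align*}
then invoke the equality of conditional distributions to replace $\pr{\bx \in F \mid \by \in E}$ with $\pr{\bx' \in F \mid \by \in E}$, giving $\pr{\bx \in F, \: \by \in E} = \pr{\bx' \in F, \: \by \in E}$. Finally I would lower-bound $\pr{\bx' \in F, \: \by \in E} \geq \pr{\bx' \in F} - \pr{\by \notin E} \geq \pr{\bx' \in F} - (1 - \beta)$, which is exactly the claimed inequality.

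There is no real obstacle here: this is a direct unpacking of the definition, and the only subtlety is to make sure that the coupling $\by$ is used as a joint variable with both $\bx$ and $\bx'$ (so that conditioning on $\set{\by \in E}$ makes sense on both sides simultaneously), which is exactly what \cref{def:behave_the_same} guarantees. The whole argument is essentially two lines and requires no additional tools.
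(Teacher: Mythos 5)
Your proof is correct and follows essentially the same route as the paper: condition on $\set{\by \in E}$, use the definitional equality of the conditional laws of $\bx$ and $\bx'$ on that event, and then peel off the $\pr{\by \notin E} \leq 1-\beta$ slack. The two arguments differ only in cosmetic arrangement of the same three inequalities.
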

\begin{proof}
	Let $\by,E$ as in \cref{def:behave_the_same}. Compute
	\begin{align*}
		\pr{\bx \in F} 
		&\geq \pr{\by \in E} \cdot \pr{\bx \in F \mid \by \in E}\\
            &= \pr{\by \in E} \cdot \pr{\bx' \in F \mid \by \in E}\\
		&\geq  \pr{\by \in E} \cdot \frac{\pr{\bx' \in F} - \pr{\by \notin E}}{\pr{\by \in E}}\\
		&= \pr{\bx' \in F} - \pr{\by \notin E} \geq \pr{\bx' \in F} - (1 -\beta).
	\end{align*}
\end{proof}

\subsection{Algorithms}

Let $\Mc$ be a randomized algorithm that uses $m$ random coins. For $r \in \zo^m$ we denote by $\Mc_r$ the (deterministic) algorithm $\Mc$ after fixing its random coins to $r$.
We use the same notation for more specific cases, e.g., if the random choices of $\Mc$ consist of sampling $s \la [k]$ and $P \la \cP_d$, then for $s\in[k]$ and $P \in \cP_d$ we denote by $\Mc_{s,P}$ the algorithm $\Mc$ after fixing this random choices to $s$ and $P$ (respectively).

\subsection{Differential Privacy}

\begin{definition}[Differential Privacy~\citep{DMNS06,DKMMN06}]\label{def:dp} 
    A randomized mechanism $\Mc \colon \cX^n \rightarrow \cY$ is \emph{$(\eps,\delta)$-differentially private} (in short, $(\eps,\delta)$-DP) if for every neighboring databases $X=(x_1,\ldots,x_n), \: X' = (x_1',\ldots,x_n') \in \cX^n$ (i.e., differ by exactly one entry), and every set of outputs $\cT \subseteq \cY$, it holds that
    \begin{align*}
        \pr{\Mc(X) \in \cT} \leq e^{\eps} \cdot \pr{\Mc(X') \in \cT} + \delta
    \end{align*}
\end{definition}

\subsubsection{Known Facts}

\begin{fact}[Post-Processing]\label{fact:post-processing}
    If $\Mc \colon \cX^n \rightarrow \cY$ is $(\eps,\delta)$-DP then for every randomized $\Fc \colon \cY \rightarrow \cZ$, the mechanism $\Fc\circ \Mc \colon \cX^n \rightarrow \cZ$ is $(\eps,\delta)$-DP.
\end{fact}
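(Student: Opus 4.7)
The plan is to give the standard post-processing argument: first handle the case where $\Fc$ is deterministic by pulling back the output event through $\Fc$, and then extend to randomized $\Fc$ by averaging over its internal coins. Throughout, the only property of $\Mc$ used is the defining inequality from \cref{def:dp} applied to a single (measurable) subset of $\cY$; no tracing or fingerprinting machinery is needed.

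First I would fix two neighboring databases $X, X' \in \cX^n$ and an arbitrary target set $\cT \subseteq \cZ$, and treat the randomized map $\Fc$ as a family $\{\Fc_r\}_{r}$ of deterministic functions indexed by its internal randomness $r$, where $r$ is drawn from some distribution $\cR$ independently of $\Mc$. For each fixed $r$, define the pullback set $\cS_r = \Fc_r^{-1}(\cT) = \{y \in \cY : \Fc_r(y) \in \cT\} \subseteq \cY$. By the DP guarantee of $\Mc$ applied to the set $\cS_r$,
\begin{align*}
\pr{\Fc_r(\Mc(X)) \in \cT} = \pr{\Mc(X) \in \cS_r} \leq e^{\eps}\cdot \pr{\Mc(X') \in \cS_r} + \delta = e^{\eps}\cdot \pr{\Fc_r(\Mc(X')) \in \cT} + \delta.
\end{align*}

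Next I would take expectation over $r \sim \cR$ on both sides. Since $\Fc \circ \Mc$ can be sampled by first drawing $r$ and then running $\Fc_r$ on the independent output of $\Mc$, we have $\pr{\Fc(\Mc(X)) \in \cT} = \eex{r \sim \cR}{\pr{\Fc_r(\Mc(X)) \in \cT}}$, and similarly for $X'$. Linearity of expectation, together with the fact that the bound above holds pointwise in $r$, yields $\pr{\Fc(\Mc(X)) \in \cT} \leq e^\eps \cdot \pr{\Fc(\Mc(X')) \in \cT} + \delta$, which is exactly the $(\eps, \delta)$-DP condition for $\Fc \circ \Mc$. Since $X, X', \cT$ were arbitrary, this completes the proof.

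There is no real obstacle here; the only point to be slightly careful about is the measurability of $\cS_r$ and the interchange of expectation and probability, which is routine once $\Fc$ is modeled as a deterministic function of $(y, r)$ with $r$ independent of the input. Because the argument is purely definitional and does not interact with the padding-and-permuting or fingerprinting tools developed earlier in the paper, it suffices as a short stand-alone fact.
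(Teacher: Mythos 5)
Your proof is correct; it is the standard post-processing argument (pull back the output event through a deterministic realization $\Fc_r$, apply the DP inequality of $\Mc$ to the preimage, then average over the independent coins $r$, using that $e^\eps$ and $\delta$ are constants). Note that the paper itself states \cref{fact:post-processing} as a known fact with no proof, so there is nothing to compare against on the paper's side; your write-up simply supplies the routine derivation that the authors implicitly cite from the differential-privacy literature.
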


Post-processing holds when applying the function on the output of the DP mechanism.
In this work we sometimes need to apply the mechanism on the output of a function. While this process does not preserve DP in general, it does so assuming the function is \emph{neighboring-preserving}.  

\begin{definition}[Neighboring-Preserving Algorithm]\label{def:neighbor-preserving}
	We say that a randomized algorithm $\Gc \colon \cX^n \rightarrow \cY^m$ is \emph{neighboring-preserving} if for every neighboring $X, X' \in \cX^n$, the outputs $\Gc(X), \Gc(X') \in \cY^m$ are neighboring w.p. $1$.
\end{definition}

\begin{fact}\label{fact:neighboring-to-neighboring}
	If $\Gc \colon \cX^n \rightarrow \cY^m$ is neighboring-preserving and $\Mc \colon \cY^m \rightarrow \cZ$ is $(\eps,\delta)$-DP, then $\Mc \circ \Gc \colon \cX^n \rightarrow \cZ$ is $(\eps,\delta)$-DP.
\end{fact}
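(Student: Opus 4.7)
The plan is to unfold the composition by conditioning on the random coins of $\Gc$ and then invoking the DP guarantee of $\Mc$ pointwise. The neighboring-preserving property is used precisely to ensure that, under a shared-randomness coupling, for every realization of $\Gc$'s coins the pair $(\Gc(X),\Gc(X'))$ lies in the set of neighboring databases on which the DP guarantee of $\Mc$ applies.

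Concretely, I would fix arbitrary neighboring datasets $X, X' \in \cX^n$ and an arbitrary measurable event $T \subseteq \cZ$. Let $\br$ denote the random coins used by $\Gc$, and let $\Gc_r$ denote $\Gc$ after fixing its coins to $r$. By \cref{def:neighbor-preserving}, for every realization $r$ of $\br$, the two deterministic outputs $\Gc_r(X)$ and $\Gc_r(X')$ form a pair of neighboring datasets in $\cY^m$ (this is the content of the ``w.p.~$1$'' clause under the natural shared-randomness coupling). Since $\Mc$ is $(\eps,\delta)$-DP, for every such $r$ we have
\begin{equation*}
\pr{\Mc(\Gc_r(X)) \in T} \leq e^\eps \cdot \pr{\Mc(\Gc_r(X')) \in T} + \delta,
\end{equation*}
where the probabilities are taken over the independent coins of $\Mc$.

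Taking expectation over $\br$ on both sides, and using linearity of expectation together with the fact that $\Mc$'s coins are independent of $\Gc$'s coins, yields
\begin{equation*}
\pr{(\Mc \circ \Gc)(X) \in T} = \eex{\br}{\pr{\Mc(\Gc_{\br}(X)) \in T}} \leq e^\eps \cdot \eex{\br}{\pr{\Mc(\Gc_{\br}(X')) \in T}} + \delta = e^\eps \cdot \pr{(\Mc \circ \Gc)(X') \in T} + \delta.
\end{equation*}
Since $X, X'$ and $T$ were arbitrary, this establishes that $\Mc \circ \Gc$ is $(\eps,\delta)$-DP.

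The only subtle point, and the one I would be careful to state explicitly, is the interpretation of ``$\Gc(X)$ and $\Gc(X')$ are neighboring w.p.~$1$'' in \cref{def:neighbor-preserving}: this must be read under the coupling where both calls to $\Gc$ use the same random coins $\br$, since otherwise the two outputs are independent and the claim is vacuous. Once this coupling is in place, the rest is routine application of post-processing at each fixed value of $\br$ followed by averaging. No fingerprinting-code machinery or stronger tools are needed here; this is a plain composition argument.
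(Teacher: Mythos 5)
The paper states Fact~\ref{fact:neighboring-to-neighboring} without proof, so there is nothing to compare against; your argument is the standard one and is correct. Your remark about reading ``neighboring w.p.~$1$'' under the shared-coins coupling is the right way to interpret \cref{def:neighbor-preserving}, and it is exactly what makes the pointwise application of $\Mc$'s privacy guarantee followed by averaging over $\br$ go through.
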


We also use the following fact which states that the output of a DP mechanism cannot be too correlated with one of the input points. 

\begin{fact}\label{fact:predicate}
    Let $\Mc \colon \cX^n \rightarrow \cY$ be an $(\eps,\delta)$-DP mechanism, let $i \in [n]$,  
    let $\bx_1,\ldots,\bx_n,\bx_i'$ be i.i.d. random variables over $\cX$, let $\bX = (\bx_1,\ldots,\bx_n)$, and let $P \colon \cX \times \cY \rightarrow \zo$ be a (possibly randomized) predicate. Then,
    \begin{align*}
        \pr{P(\bx_i, \Mc(\bX)) = 1} \leq e^{\eps} \cdot \pr{P(\bx_i',\Mc(\bX)) = 1} + \delta.
    \end{align*}
\end{fact}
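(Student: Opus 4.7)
The plan is a standard ``secrecy-of-the-sample'' argument: introduce a fresh i.i.d.\ copy in the $i$-th coordinate, use exchangeability to swap it with the original coordinate, and then invoke DP on the two neighboring datasets that differ only in coordinate $i$.

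First, I would write both probabilities as expectations by conditioning on $\bX_{-i} \eqdef (\bx_1,\ldots,\bx_{i-1},\bx_{i+1},\ldots,\bx_n)$. Since $\bx_i,\bx_i'$ are i.i.d.\ and independent of $\bX_{-i}$, we get
\begin{align*}
\pr{P(\bx_i,\Mc(\bX))=1} &= \Ex_{\bX_{-i},\bx_i,\bx_i'}\pr{P(\bx_i,\Mc(\bX_{-i},\bx_i))=1},\\
\pr{P(\bx_i',\Mc(\bX))=1} &= \Ex_{\bX_{-i},\bx_i,\bx_i'}\pr{P(\bx_i',\Mc(\bX_{-i},\bx_i))=1},
\end{align*}
where $(\bX_{-i},\bx_i)$ denotes the dataset $\bX$ with $\bx_i$ placed in position $i$, and the inner probabilities are over the internal randomness of $\Mc$ and $P$.

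Next, I would use exchangeability: since $\bx_i,\bx_i'$ are i.i.d.\ and the rest of the expression is symmetric under their swap, I can rename $\bx_i \leftrightarrow \bx_i'$ in the second display to obtain
\begin{align*}
\pr{P(\bx_i',\Mc(\bX))=1} = \Ex_{\bX_{-i},\bx_i,\bx_i'}\pr{P(\bx_i,\Mc(\bX_{-i},\bx_i'))=1}.
\end{align*}

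Now fix arbitrary values $x_{-i},x_i,x_i' \in \cX$. Define the randomized post-processing $f_{x_i}(y) \eqdef P(x_i,y)$; by \cref{fact:post-processing}, $f_{x_i} \circ \Mc$ is $(\eps,\delta)$-DP. Since the datasets $(x_{-i},x_i)$ and $(x_{-i},x_i')$ are neighboring (they differ only at position $i$), applying the DP definition with the output set $\{1\}$ gives
\begin{align*}
\pr{P(x_i,\Mc(x_{-i},x_i))=1} \leq e^{\eps}\pr{P(x_i,\Mc(x_{-i},x_i'))=1} + \delta.
\end{align*}
Finally, I would take the expectation of this pointwise inequality over $(\bX_{-i},\bx_i,\bx_i')$ and combine with the two identities above to conclude the claim. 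I do not expect any real obstacle here; the only subtlety worth flagging is that the swap step must be justified before applying DP, because it is precisely what converts the ``compare $\bx_i$ vs.\ $\bx_i'$ in the predicate'' formulation into the ``compare two neighboring datasets'' formulation that DP can handle.
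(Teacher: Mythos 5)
Your proposal is correct and follows essentially the same route as the paper: both reduce to the DP definition for two neighboring datasets differing only in coordinate $i$, using that $\bx_i$ and $\bx_i'$ are i.i.d.\ to relabel which copy sits in coordinate $i$, and then average. The only difference is expository: the paper conditions just on $\bx_i = x_i$ (working with the sets $\cT_{x_i} = \{y : P(x_i,y)=1\}$) and leaves the averaging over $\bX_{-i}$ inside the inner probability, whereas you condition on all of $(\bX_{-i},\bx_i,\bx_i')$ and take the outer expectation at the end.
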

\begin{proof}
    We assume \wlg that $P$ is deterministic since the proof for a randomized $P$ follows by fixing the random coins of $P$.
    In the following, for $x_i \in \cX$ let $\cT_{x_i} = \set{y \in \cY \colon P(x_i,y) = 1}$.
    Compute
    \begin{align*}
        \pr{P(\bx_i,\Mc(\bX)) = 1}
        &= \eex{x_i \sim \bx_i}{\pr{\Mc(\bX\mid_{\bx_i = x_i}) \in \cT_{x_i}}}\\
        &= \eex{x_i \sim \bx_i'}{\pr{\Mc(\bX\mid_{\bx_i = x_i}) \in \cT_{x_i}}}\\
        &\leq \eex{x_i \sim \bx_i'}{e^{\eps}\cdot \pr{\Mc(\bX) \in \cT_{x_i}} + \delta}\\
        &= e^{\eps} \cdot \pr{P(\bx_i',\Mc(\bX)) = 1} + \delta,
    \end{align*}
    where the second equality holds since $\bx_i \equiv \bx_i'$, and the inequality holds since $\Mc$ is $(\eps,\delta)$-DP and since $\bX\mid_{\bx_i=x_i}$ and $\bX$ are neighboring. 
\end{proof}

\section{Strong Fingerprinting Lemma}\label{sec:FPL}

In this section, we prove our new fingerprinting lemma. 
In \cref{sec:FPL:1dim} we prove \cref{lem:intro:new_FP_lemma} and extend it to a robust version that allows a small error probability (\cref{lemma:robust_FPL}). In \cref{sec:FPL:HighDim} we increase the dimension and prove the resulting correlation in \cref{lemma:FPL_high_dim}.

In this section we adopt the notation of \citet{DworkSSUV15}.
For $p \in [-1,1]$ let $x \sim p$ denote that $x \in \set{\pm 1}$ is drawn with $\ex{x} = p$, and let $x_{1\ldots n} \sim p$ denote that each $x_i$ is sampled independently with $\ex{x_i} = p$ (that is, $\pr{x_i = 1} = \frac{1+p}{2}$ and $\pr{x_i = -1} = \frac{1-p}{2}$).

\subsection{One Dimension}\label{sec:FPL:1dim}
In this section, we prove \cref{lem:intro:new_FP_lemma}. We make use of the following lemma.

\begin{lemma}[\cite{DworkSSUV15}, Lemma 5]\label{lem:Dwork}
	Let $f \colon \oo^n \rightarrow \bbR$. Define $g \colon [-1,1] \rightarrow \bbR$ by 
	\begin{align*}
		g(p) \eqdef \eex{x_{1\ldots,n} \sim p}{f(x)}.
	\end{align*}
	Then
	\begin{align*}
		\eex{x_{1\ldots n} \sim p}{f(x) \cdot \sum_{i \in [n]} (x_i - p)} = g'(p)\cdot (1-p^2).
	\end{align*}
\end{lemma}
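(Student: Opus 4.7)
The plan is to prove the identity by direct computation: expand $g(p)$ as an explicit polynomial in $p$, differentiate it, and then recognize the result as the claimed expectation up to the factor $1-p^2$.

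First, I would write
\begin{equation*}
g(p) = \sum_{x \in \oo^n} f(x) \cdot \prod_{i=1}^n \frac{1 + p x_i}{2},
\end{equation*}
since for $x_{1\ldots n}\sim p$ each coordinate independently satisfies $\Pr[x_i=1]=(1+p)/2$ and $\Pr[x_i=-1]=(1-p)/2$, which can be written uniformly as $(1+px_i)/2$. Because $g$ is a polynomial in $p$, term-by-term differentiation is justified, and via the logarithmic derivative (or the product rule) one obtains
\begin{equation*}
\frac{d}{dp}\prod_{i=1}^n \frac{1+px_i}{2} \;=\; \left(\prod_{i=1}^n \frac{1+px_i}{2}\right) \cdot \sum_{i=1}^n \frac{x_i}{1+p x_i}.
\end{equation*}
Summing against $f(x)$ then yields
\begin{equation*}
g'(p) \;=\; \eex{x_{1\ldots n}\sim p}{f(x) \cdot \sum_{i=1}^n \frac{x_i}{1 + p x_i}}.
\end{equation*}

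Next, I would observe that the key algebraic identity $x_i - p = (1-p^2)\cdot \dfrac{x_i}{1+p x_i}$ holds for every $x_i \in \oo$. One checks this by cases: when $x_i=1$, the left side is $1-p$ and the right side is $(1-p^2)/(1+p)=1-p$; when $x_i=-1$, the left side is $-(1+p)$ and the right side is $(1-p^2)\cdot(-1/(1-p)) = -(1+p)$. Substituting this into the preceding identity gives
\begin{equation*}
(1-p^2)\cdot g'(p) \;=\; \eex{x_{1\ldots n}\sim p}{f(x) \cdot \sum_{i=1}^n (x_i - p)},
\end{equation*}
which is exactly the claim.

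I do not anticipate any real obstacle here: the only subtlety is recognizing the case-wise identity $x_i - p = (1-p^2)\,x_i/(1+p x_i)$, which collapses the somewhat awkward weighted sum that comes out of differentiating the product into the clean centered sum $\sum_i(x_i-p)$. Everything else is bookkeeping on a finite sum, so no convergence or regularity issues arise.
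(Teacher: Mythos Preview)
Your proof is correct. The paper does not give its own proof of this lemma; it is quoted verbatim as Lemma~5 of \cite{DworkSSUV15}, and your argument is essentially the standard one from that reference. One minor remark: the logarithmic-derivative step and the identity $x_i-p=(1-p^2)\,x_i/(1+px_i)$ implicitly require $p\in(-1,1)$ so that $1+px_i\neq 0$; since both sides of the final equality are polynomials in $p$, equality on $(-1,1)$ extends to the endpoints, so this is harmless.
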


Note that when $f(x) \approx \frac1n \sum_{i=1}^n x_i$, then $g(p) \approx p$.
Therefore, if we choose $p$ uniformly over $[-1,1]$, then \cref{lem:Dwork} implies that we get $\Omega(1)$-advantage, which results in the original fingerprinting lemma (\cref{lem:intro:origFPL}). However, in \cref{lem:intro:new_FP_lemma}, we are interested in a much weaker $f$ that only satisfies $f(1,\ldots,1) = 1$ and $f(-1,\ldots,-1) = -1$, and might be arbitrary in all other inputs in $\oo^n$. So the only thing that we know about such $f$ is that it induces a function $g \colon [-1,1] \rightarrow [-1,1]$ such that $g(-1) = -1$ and $g(1) = 1$, and actually we can also show that $g(-1+\epsilon) \approx -1$ and $g(1 - \epsilon) \approx 1$ for $\epsilon = O(1/n)$. We show that these limited properties suffice for a fingerprint lemma, by choosing $p$ from a distribution $\rho$ that has probability density function $\propto 1/(1-p^2)$.

\begin{lemma}[Our Strong Fingerprinting Lemma, Restatement of \cref{lem:intro:new_FP_lemma}]\label{new_FP_lemma}
	Let $f \colon \oo^n \rightarrow [-1,1]$ with $f(1,\ldots,1) = 1$ and $f(-1,\ldots,-1) = -1$, and
	let $\rho$ be the distribution that outputs $p = \frac{e^t - 1}{e^t + 1}$ for $t \la [-\ln(5n),\ln(5n)]$. Then,
	\begin{align*}
		\eex{p \sim \rho, \: x_{1\ldots n} \sim p}{f(x) \cdot \sum_{i \in [n]} (x_i - p)} \geq  \frac1{\ln(5n)}.
	\end{align*}
\end{lemma}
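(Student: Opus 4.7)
The plan is to apply the Dwork--Steinke--Ullman--Vadhan lemma (\cref{lem:Dwork}) to turn the inner expectation over $x$ into a derivative, and then exploit the specific shape of $\rho$ to integrate out $p$ almost exactly, reducing the problem to evaluating $g(p) := \mathbb{E}_{x \sim p}[f(x)]$ at two boundary values of $p$ that are very close to $\pm 1$.

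First I would apply \cref{lem:Dwork} to obtain, for every $p \in (-1,1)$, the identity
\[
\mathbb{E}_{x_{1\ldots n} \sim p}\!\left[f(x) \cdot \sum_{i=1}^n (x_i - p)\right] = g'(p)\,(1-p^2).
\]
The quantity of interest is therefore $\mathbb{E}_{p \sim \rho}[g'(p)(1-p^2)]$. The key observation is that the reparametrization $p(t) = \tfrac{e^t-1}{e^t+1} = \tanh(t/2)$ used to define $\rho$ satisfies $\frac{dp}{dt} = \tfrac{1}{2}(1-p^2)$, so that $g'(p(t))(1-p(t)^2) = 2 \cdot \frac{d}{dt} g(p(t))$. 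Since $\rho$ is the pushforward of the uniform distribution on $[-T, T]$ (with $T := \ln(5n)$), the fundamental theorem of calculus gives
\[
\mathbb{E}_{p \sim \rho}[g'(p)(1-p^2)] = \frac{1}{2T} \int_{-T}^{T} 2\,\frac{d}{dt}g(p(t))\,dt = \frac{g(p^*) - g(-p^*)}{T},
\]
where $p^* := p(T) = \tfrac{5n-1}{5n+1}$ and $-p^* = p(-T)$.

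Next I would lower bound $g(p^*)$ and upper bound $g(-p^*)$ using only the two boundary conditions on $f$. When $x_{1\ldots n} \sim p^*$, each coordinate equals $1$ with probability $\tfrac{1+p^*}{2} = \tfrac{5n}{5n+1}$, so by a union bound $\Pr[x \ne (1,\ldots,1)] \leq n \cdot \tfrac{1}{5n+1} \leq \tfrac{1}{5}$. Since $f(1,\ldots,1) = 1$ and $f \in [-1,1]$ always, this gives $g(p^*) \geq \tfrac{4}{5} \cdot 1 + \tfrac{1}{5}\cdot(-1) = \tfrac{3}{5}$. By the symmetric argument using $f(-1,\ldots,-1) = -1$, we get $g(-p^*) \leq -\tfrac{3}{5}$. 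Plugging in, $g(p^*) - g(-p^*) \geq \tfrac{6}{5} > 1$, which yields $\mathbb{E}_{p \sim \rho, x \sim p}[f(x)\sum_i(x_i-p)] \geq \frac{1}{\ln(5n)}$ as required.

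The main conceptual point — and the one piece to double-check — is the miracle that the density of $\rho$ matches the factor $1-p^2$ so as to reduce the double integral to a boundary difference; once that is verified, the boundary estimate follows from nothing more than the union bound on $n$ Bernoullis with parameter $\tfrac{1}{5n+1}$. I expect no substantive obstacle beyond bookkeeping the change of variables and the constants; the choice of the cutoff $\ln(5n)$ is precisely what makes the failure probability at the endpoints be at most $\tfrac{1}{5}$, which in turn makes $g(\pm p^*)$ close enough to $\pm 1$ to absorb the $\tfrac{1}{T}$ factor with room to spare.
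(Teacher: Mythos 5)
Your proposal is correct and follows essentially the same approach as the paper: apply \cref{lem:Dwork}, note that the density of $\rho$ is proportional to $1/(1-p^2)$ so the $(1-p^2)$ factor cancels and the integral of $g'$ telescopes to $\frac{1}{\ln(5n)}\bigl(g(p^*)-g(-p^*)\bigr)$ with $p^*=\frac{5n-1}{5n+1}$, then bound the endpoints. Your two bookkeeping choices — doing the change of variables $p=\tanh(t/2)$ directly rather than first computing the PDF of $\rho$, and using a union bound on the $n$ Bernoullis rather than the paper's $\left(1-\frac{1}{5n+1}\right)^n\geq e^{-1.1n/(5n+1)}$ estimate — are slightly cleaner but do not change the substance of the argument.
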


\begin{proof}
	Let $\bp$ denote a random variable that is distributed according to $\rho$. Let's first compute its cumulative distribution function (CDF):
	\begin{align*}
		\pr{\bp \leq p} 
		&= \ppr{t \la [-\ln(5n),\ln(5n)]}{\frac{e^t - 1}{e^t + 1} \leq p}\\
		&= \ppr{t \la [-\ln(5n),\ln(5n)]}{t \leq \ln \paren{\frac{1+p}{1-p}}}\\
		&=  \frac{\ln \paren{\frac{1+p}{1-p}} + \ln(5n)}{2 \ln (5n)}\\
            &= \frac1{2\ln(5n)}\cdot \paren{\ln (1+p) - \ln (1-p)} + \frac{1}{2}
	\end{align*}
 
	Hence, the probability density function (PDF) of $\bp$ is 
	\begin{align*}
		\rho(p) 
		&\eqdef \frac{d \pr{\bp \leq p}}{dp}\\
		&= \frac1{2 \ln (5n)}\cdot \paren{\frac1{1+p} + \frac1{1-p}}\\
		&= \frac1{\ln (5n)}\cdot \frac{1}{1-p^2}.
	\end{align*}
	
	In the following, let $g(p) \eqdef \eex{x_{1\ldots n} \sim p}{f(x)}$.
	Note that $t = \ln (5n) \implies p = 1 - \frac{2}{5n+1}$ and $t = -\ln(5n) \implies p = -1 +\frac{2}{5n+1}$.
	In addition, note that if $x_i \sim p$, then $\pr{x_i = 1} = \frac{1+p}{2}$ and $\pr{x_i = -1} = \frac{1-p}{2}$.	
	By the assumption on $f$, for every $p$ it holds that
	\begin{align*}
		g(p) \geq \ppr{x_{1\ldots n} \sim p}{\forall i: x_i = 1} - \ppr{x_{1\ldots n} \sim p}{\exists i: x_i = -1} = 2\cdot \paren{\frac{1+p}{2}}^n - 1
	\end{align*}
	and
	\begin{align*}
		g(p) \leq -\ppr{x_{1\ldots n} \sim p}{\forall i: x_i = -1} + \ppr{x_{1\ldots n} \sim p}{\exists i: x_i = 1} = -2\cdot \paren{\frac{1-p}{2}}^n + 1
	\end{align*}
	Hence,
	\begin{align*}
		g\paren{1 - \frac{2}{5n+1}} \geq 2\cdot \paren{1 - \frac1{5n+1}}^n - 1 \geq 2\cdot e^{-\frac{1.1n}{5n+1}} - 1 \geq 2\cdot e^{-1/4} - 1 \geq 0.5
	\end{align*}
        and
	\begin{align*}
		g\paren{-1+\frac{2}{5n+1}} \leq -2\cdot \paren{1 - \frac1{5n+1}}^n + 1 \leq -0.5.
	\end{align*}
    where we used the inequality $e^{-1.1 y} \leq 1-y$ for every $y \in [0,1/6]$. 
	
	Hence, we conclude that
	\begin{align}
		\eex{p \sim \rho, \: x_{1\ldots n} \sim p}{f(x) \cdot \sum_{i \in [n]} (x_i - p)}
		&= 	\eex{p \sim \rho}{g'(p)\cdot (1-p^2)}\nonumber\\
		&= \int_{-1+\frac{2}{5n+1}}^{1-\frac{2}{5n+1}} g'(p)\cdot (1-p^2) \cdot \rho(p) dp\nonumber\\
		&= \frac1{\ln (5n)} \cdot \int_{-1+\frac{2}{5n+1}}^{1-\frac{2}{5n+1}} g'(p) dp\nonumber\\
		&= \frac1{\ln (5n)} \cdot \paren{g\paren{1-\frac{2}{5n+1}} - g\paren{-1+\frac{2}{5n+1}}}\label{g_minus_g}\\
		&\geq \frac1{\ln (5n)}.\nonumber
	\end{align}
\end{proof}

\begin{remark}
In \cref{new_FP_lemma}, it is possible to increase the range from which we choose the parameter $t$ in order to slightly improve the lower bound on the expectation $\eex{}{f(x) \cdot \sum_{i \in [n]} (x_i - p)}$; i.e., as we take larger range, we will get better bound (for large enough values of $n$).
For example, taking $t \la [-\ln(20n),\ln(20n)]$ will give us $\eex{}{f(x) \cdot \sum_{i \in [n]} (x_i - p)} \geq  \frac{1.75}{\ln(20n)}$.
We choose the range to be $[-\ln(5n),\ln(5n)]$ since it suffices for us in order to achieve positive expectation in \cref{lemma:robust_FPL}.
\end{remark}

\begin{lemma}[Robust version of \cref{new_FP_lemma}]\label{lemma:robust_FPL}
	Let 
    $F \colon \oo^n \rightarrow [-1,1]$ be a randomized function such that $\pr{F(1,\ldots,1) = 1} \geq 0.9$ and $\pr{F(-1,\ldots,-1) = -1} \geq 0.9$. 
	Then,
	\begin{align*}
		\eex{p \sim \rho, \: x_{1\ldots n} \sim p}{F(x) \cdot \sum_{i \in [n]} (x_i - p)} \geq  \frac{0.4}{\ln (5n)}
	\end{align*}
	for $\rho$ as in \cref{new_FP_lemma}, where the expectation is also over the randomness of $F$.
\end{lemma}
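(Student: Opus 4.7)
The plan is to mirror the proof of Lemma~\ref{new_FP_lemma} almost verbatim, with the only modification being how we lower bound the boundary values of $g$. Let $\bar{f}(x) := \mathbb{E}_F[F(x)] \in [-1,1]$ (where the expectation is taken over the internal randomness of $F$), and define $g(p) := \mathbb{E}_{x_{1\ldots n} \sim p}[\bar{f}(x)] = \mathbb{E}_{x_{1\ldots n} \sim p, F}[F(x)]$. Since $\bar{f}$ is a deterministic $[-1,1]$-valued function, Lemma~\ref{lem:Dwork} applies to it, and the same computation of the CDF/PDF of $\rho$ from the proof of Lemma~\ref{new_FP_lemma} carries over unchanged, so after integrating one obtains exactly
\[
\eex{p \sim \rho,\, x_{1\ldots n} \sim p}{F(x)\cdot \sum_{i=1}^n (x_i - p)} \;=\; \frac{1}{\ln(5n)}\cdot\Bigl(g\bigl(1-\tfrac{2}{5n+1}\bigr) - g\bigl(-1+\tfrac{2}{5n+1}\bigr)\Bigr),
\]
mirroring \cref{g_minus_g}.

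The only step that needs adjustment is the lower bound on the boundary values. Let $q := (1 - \tfrac{1}{5n+1})^n$, which is the probability that $x = (1,\ldots,1)$ when $x_{1\ldots n} \sim (1 - \tfrac{2}{5n+1})$ (and symmetrically the probability of $x = (-1,\ldots,-1)$ at the opposite boundary). As in Lemma~\ref{new_FP_lemma}, the inequality $e^{-1.1 y}\le 1 - y$ on $[0,1/6]$ applied to $y = 1/(5n+1)$ gives $q \geq e^{-1.1 n/(5n+1)} \geq e^{-1.1/5} \geq 0.8$ for every $n \geq 1$. From the robustness hypothesis, $F(1,\ldots,1) \in [-1,1]$ and equals $1$ with probability at least $0.9$, so $\bar{f}(1,\ldots,1) = \mathbb{E}[F(1,\ldots,1)] \geq 0.9\cdot 1 + 0.1\cdot(-1) = 0.8$, and symmetrically $\bar{f}(-1,\ldots,-1)\leq -0.8$. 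Since $\bar{f}$ is $[-1,1]$-valued, splitting according to whether all coordinates agree gives
\[
g\bigl(1-\tfrac{2}{5n+1}\bigr) \;\geq\; q\cdot \bar f(1,\ldots,1) + (1-q)\cdot(-1) \;\geq\; 1.8 q - 1 \;\geq\; 0.44,
\]
and symmetrically $g(-1+\tfrac{2}{5n+1}) \leq -0.44$. Combining yields $g(p_+) - g(p_-) \geq 0.88 \geq 0.4$, so the overall expectation is at least $0.4/\ln(5n)$, as required.

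I do not expect any real obstacle here, since the structural argument is identical to that of Lemma~\ref{new_FP_lemma}; the only care point is to check that $q \geq 0.8$ uniformly over all $n \geq 1$ (so that no extra ``large $n$'' assumption creeps in), which follows from the elementary inequality $e^{-1.1 y}\le 1 - y$ on $[0,1/6]$ already used in the non-robust case.
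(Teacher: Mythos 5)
Your proof is correct, and it takes a genuinely different decomposition from the paper's. The paper proves the lemma by conditioning on the random coins $r$ of $F$: it sets $\cR = \set{r \colon f(1,\ldots,1;r)=1 \land f(-1,\ldots,-1;r)=-1}$, notes via a union bound that $\pr{r \in \cR} \geq 2\cdot 0.9 - 1 = 0.8$, applies the deterministic lemma (\cref{new_FP_lemma}) to get $\geq 1/\ln(5n)$ on $\cR$, and falls back on the trivial two-sided bound $\geq -2/\ln(5n)$ (from $g$ having range $[-1,1]$) off $\cR$; averaging yields $(6\cdot 0.9 - 5)/\ln(5n) = 0.4/\ln(5n)$. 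You instead average over $F$'s internal randomness up front to form the deterministic $\bar f(x) = \Ex_F[F(x)] \in [-1,1]$, and then rerun the boundary analysis of \cref{new_FP_lemma} with the weakened endpoint values $\bar f(1,\ldots,1) \ge 0.8$ and $\bar f(-1,\ldots,-1) \le -0.8$, obtaining $g(1-\tfrac{2}{5n+1}) - g(-1+\tfrac{2}{5n+1}) \ge 0.88$. Your route avoids the crude fallback bound on the ``bad'' event, which is why it yields a tighter constant ($0.88$ versus the paper's $0.4$) from exactly the same hypotheses; it is also arguably cleaner, since it reuses the integral identity of \cref{new_FP_lemma} verbatim rather than case-splitting on $\cR$. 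Both arguments are sound, and your numerics ($q = (1-\tfrac{1}{5n+1})^n \ge e^{-1.1/5} > 0.8$ for all $n\ge 1$, and $1.8q - 1 \ge 0.44$) check out.
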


\begin{proof}
Assume that $F(x)$ is defined by $f(x;r)$ for a random $r \la \zo^m$. 
Let $q =0.9$ and  $\cR = \set{r \in \zo^m \colon f(1,\ldots,1;r) = 1 \land f(-1,\ldots,-1;r) = -1}$.  By assumption and the union bound, $\size{\overline{\cR}} \leq 2(1-q)\cdot 2^m =(2-2q)\cdot 2^m$, so $\size{\cR} = 2^m - \size{\overline{\cR}} \geq (2q-1) \cdot 2^m$.
 
 By \cref{new_FP_lemma},
	\begin{align*}
		\forall r \in \cR: \quad \eex{p \sim \rho, \: x_{1\ldots n} \sim p}{f(x; r) \cdot \sum_{i \in [n]} (x_i - p)} \geq  \frac1{\ln(5n)}.
	\end{align*}
By \cref{g_minus_g} and since for every $r$, the function $f(\cdot ,r)$ induces a function $g$ with range $[-1,1]$, it holds that
	\begin{align*}
		\forall r \notin \cR:  \quad 
		\eex{p \sim \rho, \: x_{1\ldots n} \sim p}{f(x; r) \cdot \sum_{i \in [n]} (x_i - p)}
		&= \frac1{\ln (5n)} \cdot \paren{g\paren{1-\frac{2}{5n+1}} - g\paren{-1+\frac{2}{5n+1}}}\\
		&\geq -\frac{2}{\ln (5n)}.
	\end{align*} 
 
Therefore, we conclude that
	\begin{align*}
		\eex{p \sim \rho, \: x_{1\ldots n} \sim p}{F(x) \cdot \sum_{i \in [n]} (x_i - p)}
		&= \eex{r \la \zo^m, \: p \sim \rho, \: x_{1\ldots n} \sim p}{f(x; r) \cdot \sum_{i \in [n]} (x_i - p)}\\
		&\geq (2q-1) \cdot \frac1{\ln (5n)} - (2-2q) \cdot \frac{2}{\ln (5n)}\\
		&= \frac{6q-5}{\ln (5n)}\\
            &= \frac{0.4}{\ln (5n)}.
	\end{align*}
\end{proof}

\subsection{Increasing the Dimension}\label{sec:FPL:HighDim}

In this section, we increase the correlation achieved by our fingerprinting lemma by increasing the number of columns (the dimension of the input points). 

We start by defining a \emph{strongly-accurate} mechanism, which is a natural extension of the one-dimension case. 

\begin{definition}[Strongly-Accurate Mechanism]\label{def:strong_acc}
	We say that a mechanism $\Mc \colon \oo^{n \times d} \rightarrow [-1,1]^{d}$ is \emph{strongly-accurate} if for every $X \in \oo^{n \times d}$, $\Mc(X)$ is strongly-correlated with $X$ (\cref{def:technique:strongly-correlated}).
\end{definition}
That is, for every $b$-marked column $j \in [d]$, a strongly-accurate mechanism returns $b$ as the \jth element of its output with high probability. 

We next define our hard distribution.

\begin{definition}[Distributions $\cD'(n,d)$ and $\cD(n,d)$]\label{def:D}
	Let $\rho$ be the distribution from \cref{new_FP_lemma}. Define $\cD'(n,d)$ to be the distribution that outputs  $(x_1,\ldots,x_n, z)\in \oo^{(n+1) \times d}$ where $p^1, \ldots, p^d \sim \rho$ are sampled (independently), each $x_i^j \in \oo$ is sampled with $\ex{x_i^j} = p^j$, and each $z^j$ is sampled with $\ex{z^j} = p^j$.
	Define $\cD(n,d)$ as the distribution of the first $n$ vectors in $\cD'(n,d)$ (i.e., without $z$).
\end{definition}

\begin{lemma}\label{lemma:FPL_high_dim}
	Let $\Mc \colon \oo^{n \times d} \rightarrow [-1,1]^d$ be a \emph{strongly-accurate} mechanism, and let $(\bx_1,\ldots,\bx_n,\bz) \sim \cD'(n,d)$ (\cref{def:D}), $\bX = (\bx_1,\ldots,\bx_n)$ and $\bq = \Mc(\bX)$ be random variables. Then,
	\begin{align*}
		\ex{\sum_{i=1}^n \paren{\ip{\bx_i, \bq} - \ip{\bz,\bq}}} \geq \frac{0.4 d}{\ln (5n)}.
	\end{align*}
	Furthermore, for every $\beta \in [0,1]$, if $d \geq \Theta(n^2\log^2 (n) \log(1/\beta))$, then
 
\begin{align*}
		\pr{\sum_{i=1}^n \paren{\ip{\bx_i, \bq} - \ip{\bz, \bq}} \leq \frac{0.2 d}{\ln (5n)}} \leq \beta.
	\end{align*}
\end{lemma}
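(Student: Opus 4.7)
The expectation bound follows by applying Lemma~\ref{lemma:robust_FPL} column-by-column. Fix $j \in [d]$ and condition on the randomness $r$ of $\Mc$ and on all columns $\bx^{-j}$ other than the $j$-th; under this conditioning, the $j$-th output coordinate becomes a randomized function $F_j \colon \oo^n \to [-1, 1]$ of the $j$-th column, and strong-accuracy guarantees $\Pr[F_j(1,\ldots,1) = 1] \geq 0.9$ and $\Pr[F_j(-1,\ldots,-1) = -1] \geq 0.9$. Applying Lemma~\ref{lemma:robust_FPL} to $F_j$ (with $\bp^j \sim \rho$ and $\bx^j \mid \bp^j$) yields $\ex{\bq^j \sum_i(\bx_i^j - \bp^j)} \geq 0.4/\ln(5n)$; since $\bz^j$ is independent of $\bx^j$ given $\bp^j$ with $\ex{\bz^j \mid \bp^j} = \bp^j$, the expectation is unchanged when $\bp^j$ is replaced by $\bz^j$. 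Summing over $j$ gives the desired bound.

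For the concentration, I would reduce the tail statement to Hoeffding-type inequalities via two rounds of conditioning. Writing $T = \ip{R, \bq} - n\ip{\bz, \bq}$ with $R = \sum_i \bx_i$, conditioning on $(r, \bp, \bX)$ fixes $\bq$ and $R$ while leaving the coordinates $\bz^j \in \oo$ conditionally independent with mean $\bp^j$. Applying Hoeffding's inequality to the $d$ independent terms $n\bz^j \bq^j \in [-n, n]$ yields $T \geq T' - O(n\sqrt{d \log(1/\beta)})$ with probability at least $1 - \beta/3$, where $T' \eqdef \ip{R - n\bp, \bq}$. A second application, this time conditioning on $(r, \bX)$ and using the fact that $\bp^1,\ldots,\bp^d$ are conditionally independent given $\bX$ (with posteriors depending only on the respective columns), reduces $T'$ to $N_0 \eqdef \ip{R - n\hat{\bp}, \bq}$ (where $\hat{\bp}^j = \ex{\bp^j \mid \bx^j}$): namely $T' \geq N_0 - O(n \sqrt{d \log(1/\beta)})$ with probability at least $1 - \beta/3$.

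The remaining task is to show $N_0 \geq \ex{N_0} - O(d/\log n)$ with probability at least $1 - \beta/3$, where $\ex{N_0} = \ex{T} \geq 0.4 d/\ln(5n)$ by the first part. A direct posterior calculation for the prior $\rho$ shows that $|R^j - n \hat{\bp}^j|$ is bounded by an absolute constant for every $\bx^j \in \oo^n$ (the posterior mean of $\bp^j$ deviates from the empirical mean $R^j/n$ by $O(1/n)$, with the only nonzero contributions arising from near-extreme columns). Moreover the quantities $\Delta^j \eqdef R^j - n\hat{\bp}^j$ are functions of the independent columns $\bx^j$ and are therefore themselves independent. I would close the argument via a martingale over the revealed columns of $\bX$ together with the $O(1)$ bound on $|\Delta^j|$, which yields the required tail bound.

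The main obstacle is this last step: for a general mechanism the weights $\bq^j$ can be arbitrarily correlated across $j$, so a naive application of Azuma/McDiarmid on the row- or column-revealing filtration gives suboptimal dependence on $d$. Overcoming this requires exploiting both the $O(1)$ boundedness of $|\Delta^j|$ and the fact that these variables are nonzero only on the typically $O(d/\log n)$ ``extreme'' columns, so that effective changes to $N_0$ per martingale step are well-controlled. Combining the three deviation bounds and choosing $d \geq \Theta(n^2 \log^2 n \log(1/\beta))$, each deviation term is at most $\tfrac{1}{15} \cdot d/\ln(5n)$, so $T \geq \ex{T} - 0.2 d/\ln(5n) \geq 0.2 d/\ln(5n)$ with probability at least $1 - \beta$.
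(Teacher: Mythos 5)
Your argument for the expectation bound is exactly the paper's: condition on the randomness $r$ and the other columns $\bx^{-j}$, observe that the $j$-th output coordinate becomes a randomized function $F_j$ of $\bx^j$ satisfying the boundary conditions of \cref{lemma:robust_FPL}, and use $\ex{\bz^j \mid \bp^j} = \bp^j$ to replace $\bp^j$ by $\bz^j$ inside the expectation. That part is fine.

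The concentration argument, however, diverges from the paper's and contains a genuine gap that you yourself flag but do not close. You decompose $T = \langle R - n\bz, \bq\rangle$ into two Hoeffding steps (over $\bz$ given $\bp$, then over $\bp$ given $\bX$) plus a residual $N_0 = \langle R - n\hat{\bp}, \bq\rangle$ with $\hat{\bp}^j = \ex{\bp^j \mid \bx^j}$. The first two steps are correct (and in fact could be collapsed into one, since $\ex{\bz^j\mid\bx^j} = \hat{\bp}^j$). The problem is the third. You propose a Doob martingale over the column-revealing filtration and argue that the increments are controlled because $|\Delta^j| := |R^j - n\hat{\bp}^j| = O(1)$ and only $O(d/\log n)$ of the $\Delta^j$ are non-negligible. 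But that controls only the \emph{direct} contribution of the newly revealed $\Delta^j$ to $N_0$. When you reveal $\bx^j$, the conditional expectations of \emph{all} the other entries $\bq^{j'}$ can change (the mechanism is an arbitrary function of the whole matrix), so the Doob martingale increment is on the order of $\sum_{j'}|\Delta^{j'}| = \Theta(d/\log n)$ in the worst case, not $O(1)$. Azuma over $d$ steps would then give a deviation of order $d^{3/2}/\log n$, far exceeding the target threshold of $\Theta(d/\log n)$; the observation that $\Delta^j$ is supported on few columns does not help, because the change in $E[\bq^{j'}\mid\cdot]$ is multiplied by the same $\Delta^{j'}$ for every $j'$, not just $j'=j$. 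So the martingale as sketched does not yield the needed bound, and an additional idea is required.

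The paper avoids this issue entirely: it never introduces $\hat{\bp}$ or $\Delta^j$. It works directly with $\ba_j = \bq^j\sum_i(\bx_i^j - \bz^j) \in [-2n,2n]$, so that the target sum is $\sum_j \ba_j$, applies Hoeffding under an independence assumption, and then removes the assumption via the argument of \cite{DworkSSUV15} (Lemma 11), using that the conditional expectation of $\ba_j$ given the remaining columns is unchanged. By staying with $\ba_j$ the paper also sidesteps the posterior computation you invoke to show $|\Delta^j|=O(1)$ (a computation that, while plausible under the specific log-uniform prior $\rho$, is nontrivial and is not needed in the paper's route). If you want to pursue your decomposition, you would have to develop a genuinely new concentration argument for $N_0$ that exploits the strong-accuracy structure of $\Mc$ to control how much a single revealed column can shift the whole vector $\bq$; as written this step is not a proof.
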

\begin{proof}
The first part of the lemma follows from similar arguments as in the proof of Proposition 10 in \cite{DworkSSUV15}.
In particular, for a given $j \in [d]$, we fix all columns except for the \jth column.
Since the mechanism $\Mc$ only gets $\bX$ as an input, and each columns $j$ of $\bX$ is sampled independently, then 
there is some $F_j \colon \oo^{n} \rightarrow [-1,1]^d$ for which $\bq^j \sim F_j(\bx^j)$, where $\bx^j=(\bx_1^j,\dots,\bx_n^j)$ (the $j$'th column of $\bX$), such that \cref{lemma:robust_FPL} holds for it.
Then, by this and the fact that $\Mc$ is a strongly-accurate mechanism,
\begin{align*}
	\ex{\bq^j \cdot \sum_{i \in [n]} (\bx_i^j - \bz^j)} 
	= \ex{F_j(\bx^j) \cdot \sum_{i \in [n]} (\bx_i^j - \bp_j)}
	\geq  \frac{0.4}{\ln (5n)},
\end{align*}
where $\bp^1,\ldots,\bp^d$ are the expectations that were sampled in the process of sampling $\bX,\bz$ (\cref{def:D}).
Thus, by linearity of expectation,  we conclude the first part of the lemma.

The second part is proven similarly to Lemma 11 in \cite{DworkSSUV15}.
Let us define the random variable $\ba_j = \bq^j \cdot \sum_{i \in [n]} (\bx_i^j - \bz^j)$ for every $j \in [d]$. 
Then, we get that
$$\sum_{i=1}^n \paren{\ip{\bx_i, \bq} - \ip{\bz,\bq}} = \sum_{j=1}^d \bq^j \sum_{i=1}^n (\bx_i^j - \bz^j) = \sum_{j=1}^d \ba_j,$$
and from the first part we have $\ex{\sum_{j=1}^d \ba_j} \geq \frac{0.4d}{\ln (5n)}$.
In order to use Hoeffding's inequality on the sum $\sum_{j=1}^d \ba_j$, we first assume that the random variables $\ba_1,\dots,\ba_d$ are independent. Later, as in \cite{DworkSSUV15}, we show how to remove this assumption.
Next, observe that $-2n \leq \size{\ba_j} \leq 2n$ for every $j \in [n]$. 
Therefore, using an Hoeffding's inequality (\ref{hoeffding_ineq1}), it follows that 
\begin{align*}
    \ppr{}
    {\sum_{i=1}^n \paren{\ip{\bx_i, \bq} - \ip{\bz,\bq}} \leq \frac{0.2 d}{\ln (5n)}} 
    = \ppr{}{\sum_{j=1}^d \ba_j \leq \frac{0.2 d}{\ln (5n)}}
    \leq e^{- \frac{2(0.2d/\ln(5n))^2}{d (4n)^2}}
    = e^{\Theta\big(- \frac{d}{n^2 \ln^2 (n)}\big)}
    \leq \beta.
\end{align*}
Now, for the case that $\ba_1,\dots,\ba_d$ are not independent, we use the fact that the sum $\sum_{j=1}^d \ba_j$ concentrates as if they were independent, which follows since for every $j \in [d]$, the expected value of $\ba_j$ is the same even when conditioned on the other variables of $\set{\ba_1,\dots,\ba_d}\setminus \ba_j$. That is, 
\begin{align*}
    \ex{\ba_j} = \ex{\ba_j \;\middle|\; \set{\ba_1,\dots,\ba_d}\setminus \ba_j}.
\end{align*}
\end{proof}

\section{Framework for Lower Bounds}\label{sec:framework}

Following  \cref{sec:technique:framework}, in this section we present our general framework for lower bounding DP algorithms (stated in \cref{lemma:framework}) that is based on our new hard-distribution.

\begin{definition}[Algorithm $\Ac^{\Mc, \Fc, \Gc}$]\label{def:A_MFG}
	Let $\cV$, $\cW$ be domains, and let $n_0, d_0, n \in \bbN$.
	Let $(\Mc, \Fc, \Gc)$ be a triplet of randomized algorithms of types
	$\: \Gc \colon \oo^{n_0 \times d_0} \times \cV \rightarrow \cX^n$, $\: \Mc \colon \cX^n \rightarrow \cW$, and $\: \Fc \colon \cV \times \cW \rightarrow [-1,1]^{d_0}$, each uses $m$ random coins. 
	Let $\Ac^{\Mc, \Fc, \Gc} \colon \oo^{n_0 \times d_0} \rightarrow [-1,1]^{d_0}$ be the randomized algorithm that on input $X \in \oo^{n_0 \times d_0}$, samples  $v \la \cV$, $Y \sim \Gc(X,v)$ and $w \sim \Mc(Y)$, and outputs $q \sim \Fc(v,w)$.
\end{definition}

\begin{definition}[$\beta$-Leaking]\label{def:beta-leaking}
	
	Let $\Mc, \Fc, \Gc$ be randomized algorithms as in \cref{def:A_MFG}, each uses $m$ random coins, and let $\cD(n_0, d_0)$ be the distribution from \cref{def:D}.
	We say that the triplet $(\Mc, \Fc, \Gc)$ is \emph{$\beta$-leaking} if
	\begin{align*}
		\ppr{r,r', r'' \la \zo^m, \: X \la \cD(n_0,d_0)}{\Ac^{\Mc_{r}, \Fc_{r'}, \Gc_{r''}}(X) \text{ is strongly-correlated with }X} \geq \beta,
	\end{align*}
	(where recall that $\Mc_{r}$ denotes the algorithm $\Mc$ when fixing its random coins to $r$, and $\Fc_{r'}, \Gc_{r''}$ are similarly defined).
\end{definition}

In the following, fix a $\beta$-leaking triplet $(\Mc,\Fc,\Gc)$, and let $\Ac = \Ac^{\Mc, \Fc, \Gc}$ and $\Ac_{r,r',r''} = \Ac^{\Mc_r, \Fc_{r'}, \Gc_{r''}}$.

\begin{claim}\label{claim:M-to-A-DP}
	If $\Mc$ is $(\eps, \delta)$-DP, and $\Gc(\cdot, v)$ is neighboring-preserving (\cref{def:neighbor-preserving}) for every $v \in \cV$, then $\Ac$ is $(\eps, \delta)$-DP.
\end{claim}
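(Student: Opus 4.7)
The claim follows from a routine composition, so the plan is short: condition on the uniformly sampled secret $v \in \cV$, show that the resulting (data-dependent) mechanism is $(\eps,\delta)$-DP for each fixed $v$, and then average over the data-independent choice of $v$.

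First, for each fixed $v \in \cV$, I would define the auxiliary mechanism $\Ac_v \colon \oo^{n_0 \times d_0} \rightarrow [-1,1]^{d_0}$ that on input $X$ samples $Y \sim \Gc(X,v)$, then $w \sim \Mc(Y)$, and outputs $q \sim \Fc(v,w)$. Because $v$ is drawn uniformly from $\cV$ using randomness that is completely independent of the input $X$, the algorithm $\Ac$ is distributionally the same as first sampling $v \la \cV$ and then running $\Ac_v(X)$.

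Next, I would argue that each $\Ac_v$ is $(\eps,\delta)$-DP. For neighboring $X, X' \in \oo^{n_0 \times d_0}$, the hypothesis that $\Gc(\cdot, v)$ is neighboring-preserving (\cref{def:neighbor-preserving}) gives that $\Gc(X, v)$ and $\Gc(X', v)$ are neighboring with probability $1$. \cref{fact:neighboring-to-neighboring} then implies that $\Mc \circ \Gc(\cdot, v)$ is $(\eps,\delta)$-DP, and applying $\Fc(v, \cdot)$ to its output is a post-processing with respect to $X$ (since $v$ is fixed and the internal randomness of $\Fc$ is independent of $X$), so by \cref{fact:post-processing}, the composition $\Ac_v$ is $(\eps,\delta)$-DP.

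Finally, for any measurable set $T \subseteq [-1,1]^{d_0}$ and neighboring $X, X'$, averaging the pointwise bound $\pr{\Ac_v(X) \in T} \leq e^\eps \pr{\Ac_v(X') \in T} + \delta$ over the independent choice $v \la \cV$ yields $\pr{\Ac(X) \in T} \leq e^\eps \pr{\Ac(X') \in T} + \delta$, which is the desired guarantee. I do not foresee any real obstacle here; the only subtlety is ensuring that the randomness used for $v$ is independent of $X$, which is built into the definition of $\Ac$ in \cref{def:A_MFG}, so the averaging step is immediate.
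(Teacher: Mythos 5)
Your proof is correct and takes essentially the same approach as the paper: fix $v$, apply \cref{fact:neighboring-to-neighboring} and then post-processing (\cref{fact:post-processing}) to conclude that $\Fc(v,\Mc(\Gc(\cdot,v)))$ is $(\eps,\delta)$-DP, and then observe that averaging over the data-independent choice of $v$ preserves the guarantee. The only difference is that you spell out the final averaging step explicitly, which the paper leaves to the reader.
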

\begin{proof}
	Fix $v \in \cV$, and let $\Gc_v = \Gc(\cdot, v)$ and $\Fc_v = \Fc(v, \cdot)$. Since $\Gc_v$ is neighboring-preserving and $\Mc$ is $(\eps,\delta)$-DP, then $\Mc \circ \Gc_v$ is $(\eps,\delta)$-DP (\cref{fact:neighboring-to-neighboring}). By post-processing, $\Fc_v \circ \Mc \circ \Gc_v$ is also $(\eps,\delta)$-DP. Since the above holds for any $v \in \cV$ and since $\Ac  = (\Fc_v \circ \Mc \circ \Gc_v)_{v \la \cV}$, we conclude that $\Ac$ is $(\eps,\delta)$-DP.
\end{proof}

\begin{claim}\label{claim:low_accur_prop:general}
	Let $(\bx_1,\ldots,\bx_{n_0},\bz) \sim \cD(n_0,d_0)$  (\cref{def:D}) and let $\bX = (\bx_1,\ldots,\bx_{n_0}) \in \oo^{n_0 \times d_0}$.
	Assuming $d_0 \geq \Theta(n_0^2 \log^2(n_0) \log(1/\beta))$, it holds that 
	\begin{align*}
		\pr{\sum_{i=1}^{n_0} \paren{\ip{\bx_i, \Ac(\bX)} - \ip{\bz, \Ac(\bX)}} \geq \frac{0.2 d_0}{\ln(5n_0)}} \geq \beta/2.
	\end{align*}
\end{claim}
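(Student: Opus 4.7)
The plan is to reduce to \cref{lemma:FPL_high_dim} by augmenting $\Ac$ into a strongly-accurate mechanism $\Mc^*$ that coincides with $\Ac$ on the event provided by the $\beta$-leaking hypothesis. For coins $(r,r',r'') \in (\zo^m)^3$ and input $X \in \oo^{n_0 \times d_0}$, let $E^*(r,r',r'',X)$ be the indicator that $\Ac_{r,r',r''}(X)$, viewed as a random variable over $v \la \cV$, is strongly-correlated with $X$ in the sense of \cref{def:technique:strongly-correlated}. By \cref{def:beta-leaking}, $\pr{E^* = 1} \geq \beta$ when $X \la \cD(n_0,d_0)$ and $(r,r',r'')$ are uniform. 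Define $\Mc^* \colon \oo^{n_0 \times d_0} \rightarrow [-1,1]^{d_0}$ to use coins $(r,r',r'',v)$ and return $\Ac(X; r,r',r'',v)$ whenever $E^*(r,r',r'',X) = 1$, and otherwise return a fixed canonical vector $q^X \in \oo^{d_0}$ with $(q^X)^j = b$ for $j \in \cJ_X^b$ (say $(q^X)^j = 1$ for $j$ unmarked).

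Next I would verify that $\Mc^*$ is strongly-accurate in the sense of \cref{def:strong_acc}. Fix any $X$, any $b \in \oo$, and any $j \in \cJ_X^b$. For each $(r,r',r'')$ with $E^* = 1$, strong-correlation gives $\Pr_v[\Mc^*(X)^j = b] \geq 0.9$; for each $(r,r',r'')$ with $E^* = 0$ we have $\Mc^*(X)^j = (q^X)^j = b$ deterministically. Averaging over $(r,r',r'')$ then yields $\pr{\Mc^*(X)^j = b} \geq 0.9$. Applying the second part of \cref{lemma:FPL_high_dim} to $\Mc^*$ with failure probability $\beta/2$ (permitted since our hypothesis $d_0 \geq \Theta(n_0^2 \log^2(n_0) \log(1/\beta))$ subsumes $\Theta(n_0^2 \log^2(n_0) \log(2/\beta))$ after rescaling the hidden constant), we obtain
\begin{equation*}
\pr{\sum_{i=1}^{n_0}\paren{\ip{\bx_i, \Mc^*(\bX)} - \ip{\bz, \Mc^*(\bX)}} \geq \frac{0.2 d_0}{\ln(5 n_0)}} \geq 1 - \beta/2.
\end{equation*}

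Finally, I would couple the coins of $\Mc^*$ with those of $\Ac$ so that $\Mc^*(\bX) = \Ac(\bX)$ exactly on the event $E^*(r,r',r'',\bX)=1$. Since $\cD(n_0,d_0)$ is the marginal of $\cD'(n_0,d_0)$ on the first $n_0$ rows, the event $E^* = 1$ continues to have probability at least $\beta$ under the sampling in the claim. A union bound with the high-probability bound from the previous paragraph then gives
\begin{equation*}
\pr{\sum_{i=1}^{n_0}\paren{\ip{\bx_i, \Ac(\bX)} - \ip{\bz, \Ac(\bX)}} \geq \frac{0.2 d_0}{\ln(5 n_0)}} \geq \beta - \beta/2 = \beta/2,
\end{equation*}
which is the conclusion. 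The main delicate point I foresee is carefully justifying the coupling step and verifying that $\Mc^*$---which need not be efficient nor differentially private, as \cref{lemma:FPL_high_dim} imposes neither requirement---is an admissible mechanism for that lemma; the rest of the argument is routine bookkeeping given \cref{lemma:FPL_high_dim}.
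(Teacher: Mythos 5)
Your proposal is correct and follows essentially the same route as the paper's proof: introduce an auxiliary mechanism (the paper's $\Ac'$, your $\Mc^*$) that outputs $\Ac$'s answer whenever it is strongly-correlated with $X$ and otherwise falls back to a default that agrees with all marked columns, verify that this auxiliary mechanism is strongly-accurate so that \cref{lemma:FPL_high_dim} applies with failure probability $\beta/2$, and then transfer the bound back to $\Ac$ using the $\beta$-probability coincidence event. The only cosmetic difference is that you phrase the transfer step as a coupling plus a union bound, while the paper packages the same inequality via its ``behave the same'' definition and \cref{fact:behave_the_same}; the two are equivalent.
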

\begin{proof}
	Consider the algorithm $\Ac'\colon \oo^{n_0 \times d_0} \rightarrow [-1,1]^{d_0}$ that on input $X \in \oo^{n_0 \times d_0}$, 
	samples $r, r', r'' \la \zo^m$ and checks if $\Ac_{r, r', r''}(X)$ is strongly-correlated with $X$. If it does, it outputs $q \sim \Ac_{r, r', r''}(X)$. Otherwise, it outputs $q = (q^1,\ldots,q^{d_0})$ such that $q^j = b$ for every $b \in \oo$ and $j \in \cJ_{X}^b$.
	By definition, $\Ac'$ is strongly-accurate. Therefore, by \cref{lemma:FPL_high_dim} we obtain that
	\begin{align*}
		\pr{\sum_{i=1}^{n_0} \paren{\ip{\bx_i, \Ac'(\bX)} - \ip{\bz, \Ac'(\bX)}} \geq \frac{0.2 d_0}{\ln(5n_0)}} \geq 1 - \beta/2.
	\end{align*}
	But recall that $\Ac = \Ac^{\Mc, \Fc, \Gc}$, where $(\Mc, \Fc, \Gc)$ is $\beta$-leaking (\cref{def:beta-leaking}). Therefore, 
	$\Ac'(\bX)$ behaves as $\Ac(\bX)$ with probability at least $\beta$ (\cref{def:behave_the_same}). Thus by \cref{fact:behave_the_same}
	\begin{align*}
		\pr{\sum_{i=1}^{n_0} \paren{\ip{\bx_i, \Ac(\bX)} - \ip{\bz, \Ac(\bX)}} \geq \frac{0.2 d_0}{\ln(5n_0)}} \geq (1 - \beta/2) - (1 - \beta) = \beta/2.
	\end{align*}
\end{proof}

\begin{claim}\label{claim:large_d0_not_DP}
	If $d_0 \geq  \Theta(n_0^2 \log^2(n_0) \log(n_0/\beta))$, then $\Ac$ is not $\paren{1, \frac{\beta}{4n_0}}$-DP.
\end{claim}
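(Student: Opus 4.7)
The plan is to assume $\Ac$ is $(1,\beta/(4n_0))$-DP and derive a contradiction with \cref{claim:low_accur_prop:general}. Set $T := 0.2\,d_0/\ln(5n_0)$ and define the row-wise indicators $E_i := \1[\ip{\bx_i,\Ac(\bX)} - \ip{\bz,\Ac(\bX)} \geq T/n_0]$ for $i \in [n_0]$. Since the sum $\sum_i (\ip{\bx_i,\Ac(\bX)} - \ip{\bz,\Ac(\bX)})$ has $n_0$ terms, pigeonhole forces at least one $E_i$ to equal $1$ whenever the sum is at least $T$; combining with \cref{claim:low_accur_prop:general},
\[
\sum_{i=1}^{n_0}\pr{E_i = 1} \;\geq\; \pr{\exists\, i\colon E_i = 1} \;\geq\; \beta/2.
\]
The rest of the proof will upper bound this same sum under the DP assumption.

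Next, I would apply \cref{fact:predicate} to $\Ac$ with the randomized predicate $P(x,q) := \1[\ip{x,q} - \ip{\bz,q} \geq T/n_0]$ to bound each $\pr{E_i = 1}$. The subtle point is that the rows of $\bX$ sampled from the hard distribution are not marginally independent---only exchangeable---and the reference point $\bz$ is coupled to $\bX$ through the shared latent vector of coordinate biases $\bp = (\bp^1,\ldots,\bp^{d_0})$. To handle this I would condition on $\bp$: given $\bp$, the rows and any fresh i.i.d.\ copy $\bx_i'$ are truly i.i.d., and $\bz$ becomes independent of $\bX$, of $\bx_i'$, and of $\Ac$'s internal coins, so it may be absorbed into the predicate's internal randomness. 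Applying \cref{fact:predicate} conditional on $\bp$ and averaging over $\bp$ yields
\[
\pr{E_i = 1} \;\leq\; e\cdot\pr{E_i' = 1} \;+\; \beta/(4n_0),
\]
where $E_i'$ substitutes a fresh i.i.d.\ copy $\bx_i'$ for $\bx_i$ inside the predicate while leaving $\Ac(\bX)$ unchanged.

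Finally, I would bound $\pr{E_i' = 1}$ by Hoeffding's inequality. Further conditioning on $q := \Ac(\bX)$, the vectors $\bx_i'$ and $\bz$ are two independent $\oo^{d_0}$ samples whose coordinates are independent with means $\bp^j$, so $\ip{\bx_i' - \bz,q} = \sum_{j=1}^{d_0}(\bx'_{i,j} - \bz_j)q^j$ is a zero-mean sum of independent terms each of range at most $4$ (using $|q^j|\leq 1$), and Hoeffding's inequality gives
\[
\pr{E_i' = 1 \mid \bp,q} \;\leq\; \exp\!\left(-\frac{(T/n_0)^2}{8 d_0}\right) \;=\; \exp\!\left(-\Theta\!\left(\frac{d_0}{n_0^2 \log^2 n_0}\right)\right).
\]
Under the hypothesis $d_0 \geq \Theta(n_0^2 \log^2(n_0)\log(n_0/\beta))$ with a sufficiently large constant, this bound is at most $\beta/(8e n_0)$ uniformly in $(\bp,q)$, so in particular $\pr{E_i' = 1} \leq \beta/(8e n_0)$. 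Summing the second display over $i$ then gives
\[
\sum_{i=1}^{n_0}\pr{E_i = 1} \;\leq\; n_0\cdot \paren{e\cdot \beta/(8 e n_0) + \beta/(4n_0)} \;=\; 3\beta/8 \;<\; \beta/2,
\]
contradicting the first display.

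The main obstacle I anticipate is the independence bookkeeping in the middle paragraph: \cref{fact:predicate} requires i.i.d.\ inputs, but the rows are only exchangeable and the reference $\bz$ shares the latent $\bp$ with $\bX$. Conditioning on $\bp$ is the cleanest fix---it simultaneously restores row-wise independence and decouples $\bz$ from $\bX$, which is precisely what lets $\bz$ play the role of auxiliary randomness of a randomized predicate, the only form in which \cref{fact:predicate} permits such an extra sample to enter the argument.
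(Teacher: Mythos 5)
Your proposal is correct and follows essentially the same route as the paper: apply \cref{claim:low_accur_prop:general}, condition on the latent biases $\bp$ so that \cref{fact:predicate} with a $\bz$-randomized predicate applies to a fresh i.i.d.\ row, and close with Hoeffding. The only cosmetic difference is that the paper first uses a union bound to select a single witness index $i$ with $\pr{E_i=1}\geq \beta/(2n_0)$ and a single $p$ where this still holds, whereas you instead bound $\sum_i \pr{E_i=1}$ directly; both lead to the same contradiction.
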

\begin{proof}
	Let $(\bx_1,\ldots,\bx_{n_0},\bz) \sim \cD(n_0,d_0)$, let $\bX = (\bx_1,\ldots,\bx_{n_0})$, and let $\bq = \Ac(\bX)$.
	
	By \cref{claim:low_accur_prop:general} and the union bound, we deduce that there exists $i \in [n_0]$ such that
	\begin{align}\label{eq:event1:general}
		\pr{\ip{\bx_i, \bq} - \ip{\bz,\bq} > \frac{0.2 d_0}{n_0 \ln(5n_0)}} \geq \frac{\beta}{2n_0}. 
	\end{align}
	
	In the following, let $\bp_1,\ldots,\bp_{d_0} \in [-1,1]$ be the (random variables of the) expectations that were chosen as part of the sampling of $(\bx_1,\ldots,\bx_{n_0}, \bz)$ from $\cD(n_0,d_0)$ (as defined in \cref{def:D}), and in the rest of the proof we fix $(\bp_1,\ldots,\bp_{d_0})=(p_1,\ldots,p_{d_0})$ such that \cref{eq:event1:general} holds under this fixing. 
	
	For the above $i \in [n_0]$, let $\bx_i' \in \oo^{d_0}$ be an independent random variable such that each $(\bx_i')^{j} \sim p_j$ (independently) for every $j \in [d_0]$. Note that $\bx_1,\ldots,\bx_{n_0},\bx_i', \bz$ are i.i.d. random variables. However, while $\bq$ depends on $\bx_i$, it is independent of $\bx_i'$ and $\bz$.
	
	Assume towards a contradiction that $\Ac$  is $\paren{1, \frac{\beta}{4n_0}}$-DP.
	By applying \cref{fact:predicate} with the random predicate $P(x_i,q) = \indic{\ip{x_i,q} - \ip{\bz,q} > \frac{0.2 d_0}{n_0 \ln(5n_0)}}$, we deduce from \cref{eq:event1:general} that 
	
	\begin{align}\label{eq:event1:neighbor:general}
		\pr{\ip{\bx_i', \bq} - \ip{\bz,\bq} > \frac{0.2 d_0}{n_0 \ln(5n_0)}} \geq e^{-1}\cdot \paren{\frac{\beta}{2n_0} - \frac{\beta}{4n_0}} \geq \frac{\beta}{4e n_0}. 
	\end{align}
	
	In the following, we prove that since $\bx_i'$ is independent of $\bq$, then the above probability is much smaller, which will lead to a contradiction.
	
	For every $j \in [d_0]$, define the random variable $\bb_j = \bq^{j} ((\bx_i')^j - \bz^j)$.
	Then, we get that
	\begin{align*}
		\ip{\bx_i', \bq} - \ip{\bz,\bq} = \sum_{j=1}^{d_0} \bq^{j} ((\bx_i')^j - \bz^j) = \sum_{j=1}^{d_0} \bb_j.
	\end{align*}
	Since $\bx_i', \bz$ and $\bq$ are independent, the above equality yields that $\ex{\sum_{j=1}^{d_0} \bb_j} = 0$.
	Hence, similarly to the proof of \cref{lemma:FPL_high_dim} and since $-2 \leq |\bb_j| \leq 2$ for every $j \in [n_0]$,
	we obtain by Hoeffding's inequality~(\ref{hoeffding_ineq2}) that  
	\begin{align}\label{eq:event2:general}
		\pr{\ip{\bx_i', \bq} - \ip{\bz,\bq} > \frac{0.2 d_0}{n_0 \ln (5n_0)}}
		= \pr{\sum_{j=1}^{d_0} \bb_j > \frac{0.2 d_0}{n_0 \ln (5n_0)}}
		\leq e^{-\frac{d_0}{200 n_0^2 \ln^2 (5 n_0)}}
		\leq \frac{\beta}{20n_0},
	\end{align}
	where the last inequality holds since by assumption $d_0 \geq 200  n_0^2 \ln^2(20n_0)\log\paren{\frac{20 n_0}{\beta}}$ (holds when choosing a large enough constant in the $\Theta$ expression).
	This is in contradiction to \cref{eq:event1:neighbor:general}, so we conclude that $\Ac$ is not $\paren{1, \frac{\beta}{4n_0}}$-DP.
\end{proof}

We now ready to state and prove the guarantee of our framework.

\begin{lemma}[Framework for Lower Bounds]\label{lemma:framework}
	Let $\beta \in (0,1]$, $n_0, n,d_0 \in \bbN$.
	Let $\Mc \colon \cX^n \rightarrow \cW$ be an algorithm such that there exists two algorithms  $\Gc \colon \oo^{n_0 \times d_0} \times \cV \rightarrow \cX^n$ and $ \Fc \colon \cV \times \cW \rightarrow [-1,1]^{d_0}$ such that the triplet $(\Mc, \Fc, \Gc)$ is $\beta$-leaking (\cref{def:beta-leaking}). If $\Mc$ is $\paren{1, \frac{\beta}{4n_0}}$-DP, then $n_0 \geq \Omega\paren{\frac{\sqrt{d_0}}{\log^{1.5}(d_0/ \beta)}}$.
\end{lemma}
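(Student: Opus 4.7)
The plan is to derive \cref{lemma:framework} by directly combining the two main claims of this section, \cref{claim:M-to-A-DP} and \cref{claim:large_d0_not_DP}. As noted in the informal description of the framework in the technique section, it is a standing assumption on the triplet $(\Mc,\Fc,\Gc)$ that $\Gc(\cdot,v)$ is neighboring-preserving for every $v \in \cV$; I will take this as part of the setup (it is exactly the minimal condition needed to turn DP of $\Mc$ into DP of $\Ac^{\Mc,\Fc,\Gc}$).

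The proof will be by contradiction. Suppose $n_0 < c \cdot \sqrt{d_0}/\log^{1.5}(d_0/\beta)$ for a sufficiently small absolute constant $c > 0$ to be chosen. First I would square this inequality to get $n_0^2 \log^3(d_0/\beta) \leq c^2 \cdot d_0$, which I then compare with the threshold $d_0 \geq C \cdot n_0^2 \log^2(n_0) \log(n_0/\beta)$ appearing in the hypothesis of \cref{claim:large_d0_not_DP}, where $C$ is the hidden constant in the $\Theta(\cdot)$. We may assume $n_0 \leq d_0$, since otherwise the claimed bound $n_0 \geq \Omega(\sqrt{d_0}/\log^{1.5}(d_0/\beta))$ is trivial. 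Under this assumption $\log^2(n_0)\log(n_0/\beta) \leq \log^3(d_0/\beta)$, so choosing $c$ small enough (say $c = 1/\sqrt{2C}$) forces $d_0 \geq C \cdot n_0^2 \log^2(n_0) \log(n_0/\beta)$, i.e., the hypothesis of \cref{claim:large_d0_not_DP} is met.

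Applying \cref{claim:large_d0_not_DP} then gives that $\Ac = \Ac^{\Mc,\Fc,\Gc}$ is \emph{not} $\paren{1,\tfrac{\beta}{4n_0}}$-DP. On the other hand, since $\Mc$ is $\paren{1,\tfrac{\beta}{4n_0}}$-DP by hypothesis and $\Gc(\cdot,v)$ is neighboring-preserving for every $v \in \cV$, \cref{claim:M-to-A-DP} yields that $\Ac$ \emph{is} $\paren{1,\tfrac{\beta}{4n_0}}$-DP. This contradicts the previous sentence, completing the proof.

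The main (and really only) obstacle is the elementary logarithmic manipulation that converts the square-root-with-$\log^{1.5}$ form of the claimed bound into the $n_0^2 \log^2(n_0)\log(n_0/\beta)$ form appearing in \cref{claim:large_d0_not_Dp}; everything else is a direct chaining of the two claims proved earlier in the section. In particular, the substantive content (the fingerprinting-lemma-based high-dimensional argument of \cref{lemma:FPL_high_dim}, the $\beta$-leaking reduction to strongly-accurate mechanisms in \cref{claim:low_accur_prop:general}, and the DP-based Hoeffding contradiction in \cref{claim:large_d0_not_DP}) has already been done; the framework lemma is essentially the packaged conclusion of that pipeline.
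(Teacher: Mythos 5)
Your proof is correct and follows essentially the same approach as the paper: chain \cref{claim:M-to-A-DP} and \cref{claim:large_d0_not_DP} to get a contradiction, then read off the bound on $n_0$. The paper states this directly rather than as an explicit contradiction, but that is a cosmetic difference, and you also correctly identify that the lemma statement implicitly assumes $\Gc(\cdot,v)$ is neighboring-preserving (the paper's proof invokes \cref{claim:M-to-A-DP}, which needs it, and every application of the lemma verifies it). Your added care in the logarithmic step — assuming WLOG $n_0 \le d_0$ to convert $\log^2(n_0)\log(n_0/\beta)$ into $\log^3(d_0/\beta)$ — fills in a small detail that the paper leaves implicit.
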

\begin{proof}
	Let $c > 0$ be the hidden constant in the $\Theta$ expression of \cref{claim:large_d0_not_DP}. If $d_0 \geq c\cdot n_0^2 \log^2(n_0) \log(n_0/\beta)$, then $\Ac = \Ac^{\Mc,\Fc,\Gc}$ is not $\paren{1, \frac{\beta}{4n_0}}$-DP (\cref{claim:large_d0_not_DP}), and therefore $\Mc$ is not $\paren{1, \frac{\beta}{4n_0}}$-DP (\cref{claim:M-to-A-DP}), contradiction. Thus $d_0 \leq c\cdot n_0^2 \log^2(n_0) \log(n_0/\beta)$ which implies that $n_0 \geq \Omega\paren{\frac{\sqrt{d_0}}{\log^{1.5}(d_0/ \beta)}}$.
\end{proof}

\section{Padding And Permuting (PAP) FPC}\label{sec:FPT:PAP}

In this section, we present the PAP transformation and prove its main property (stated in \cref{lemma:PAP}). In \cref{sec:proving-base-tool,sec:proving-extended-tool} we use our framework (\cref{lemma:framework}) with the PAP transformation to prove \cref{thm:intro:base_tool,thm:intro:extended_tool}.

\begin{definition}[$\PAP_{n, d_0, \ell}$]\label{def:PAP}
	Let $\ell, n, d_0 \in \bbN$, and let $d = d_0 + 2\ell$.
	We define $\PAP_{n, d_0, \ell} \colon \oo^{n \times d_0} \times \cP_d \rightarrow \oo^{n \times d}$ as the function that given $X \in \oo^{n \times d_0}$ and a permutation matrix $P \in \cP_d$ as inputs,
	outputs $X' = X'' \cdot P$ (i.e., permutes the columns of $X''$ according to $P$), where $X''$ is the $\oo^{n\times d}$ matrix after appending $\ell$ $1$-marked and $\ell$ $(-1)$-marked columns to $X$.
\end{definition}

Note that for every $n, d_0, \ell \in \bbN$ and $P \in \cP_d$, the function $\PAP_{n, d_0, \ell}(\cdot, P)$ is neighboring-preserving (\cref{def:neighbor-preserving}). 

The following lemma shows how $\PAP$ can be used to transform strong-agreement into a strong-correlation guarantee.

\begin{lemma}\label{lemma:from_agrees_to_correlated}
	Let $\ell, n, d_0 \in \bbN$ such that $d = d_0 + 2\ell$.
	Let $X \in \oo^{n \times d_0}$,  define the random variables $\bP \la \cP_d$ and  $\bY = \PAP_{n,d_0, \ell}(X, \bP)$,
	and let $\Mc \colon \oo^{n \times d} \rightarrow [-1,1]^{d}$ be a mechanism that for every input $Y \in \Supp(\bY)$, outputs $q \in [-1,1]^d$ that strongly-agrees with $Y$ (\cref{def:intro:strongly-agree}).
	Then $(\Mc(\bY) \cdot \bP^T)^{1,\ldots,d_0}$ is strongly-correlated with $X$ (\cref{def:technique:strongly-correlated}).
\end{lemma}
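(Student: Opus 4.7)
The plan is to use the symmetry of the random permutation $\bP$ to promote the deterministic agreement guarantee on $\bY$ into a probabilistic correlation guarantee on $X$.

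First I will set up notation. Let $\bpi$ denote the random permutation such that $\bP = P_{\bpi}$ (with the convention $(P_{\bpi})_{i,j}=1$ iff $i=\bpi(j)$), so column $j$ of $\bY$ equals column $\bpi(j)$ of $X''$ (the padded matrix defined inside $\PAP$), and consequently $(\Mc(\bY)\cdot\bP^T)^k = \Mc(\bY)^{\bpi^{-1}(k)}$. Write $C_b := \cJ_{X''}^b$ for the $b$-marked columns of $X''$; this set is the disjoint union of $\cJ_X^b \subseteq [d_0]$ together with the $\ell$ padding positions introduced by $\PAP$. Since $\cJ_{\bY}^b = \bpi^{-1}(C_b)$, the strong-agreement of $\Mc(\bY)$ with $\bY$ translates, via the substitution $j=\bpi^{-1}(k)$, into the almost-sure inequality
\begin{align*}
    \sum_{k \in C_b} \mathbf{1}\bigl[(\Mc(\bY) \cdot \bP^T)^k = b\bigr] \;\geq\; 0.9\, |C_b| \qquad \text{for each } b \in \oo.
\end{align*}

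The main step is exchangeability of $\bigl((\Mc(\bY) \cdot \bP^T)^k\bigr)_{k \in C_b}$ for each $b \in \oo$. Fix any permutation $\sigma$ of $[d]$ that preserves $C_1$ and $C_{-1}$ setwise (and fixes every index outside $C_1 \cup C_{-1}$). Because every column of $X''$ indexed by $C_b$ equals the constant-$b$ column, $X'' \cdot P_\sigma = X''$; therefore replacing $\bpi$ by $\bpi' := \sigma \circ \bpi$ produces an identical $\bY$, and hence (coupling $\Mc$'s randomness) an identical $\Mc(\bY)$. A direct calculation using the convention above yields
\begin{align*}
    (\Mc(\bY) \cdot P_{\bpi'}^T)^k \;=\; \Mc(\bY)^{(\bpi')^{-1}(k)} \;=\; \Mc(\bY)^{\bpi^{-1}(\sigma^{-1}(k))} \;=\; (\Mc(\bY) \cdot P_{\bpi}^T)^{\sigma^{-1}(k)}.
\end{align*}
Since left-multiplication by a fixed permutation preserves the uniform distribution on $\cP_d$, the laws of $\bpi$ and $\bpi'$ coincide, so the joint law of $\bigl((\Mc(\bY) \cdot \bP^T)^k\bigr)_{k \in C_b}$ is invariant under $\sigma^{-1}$; letting $\sigma$ range over all permutations of $C_b$ yields the claimed exchangeability.

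Finally I will conclude by averaging. Exchangeability implies that $p_b := \pr{(\Mc(\bY) \cdot \bP^T)^k = b}$ does not depend on the choice of $k \in C_b$. Taking expectations of the almost-sure inequality from the first step gives $|C_b| \cdot p_b \geq 0.9\, |C_b|$, hence $p_b \geq 0.9$. Since $\cJ_X^b \subseteq C_b \cap [d_0]$, this is precisely the strong-correlation condition for $(\Mc(\bY) \cdot \bP^T)^{1,\ldots,d_0}$ with $X$. The main obstacle will be keeping the permutation-matrix conventions (right- versus left-multiplication, $\pi$ versus $\pi^{-1}$) unambiguous so that the exchangeability identity is verified correctly; once that bookkeeping is in place, the conclusion is a one-line averaging argument.
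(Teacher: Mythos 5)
Your proof is correct and follows essentially the same route as the paper's: both exploit the invariance of $\bY$ under permutations that act within each block $C_b$ of marked columns of the padded matrix, and then average. The paper states this symmetry informally as "from the point of view of $\Mc$, every $b$-marked column of $Y$ has the same probability to be the $b$-marked column $j$ of $X$," computing $\pr{(\Mc(\bY)\cdot\bP^T)^j = b} = \eex{Y,\, j' \la \cJ_Y^b}{\pr{\Mc(Y)^{j'}=b}} \geq 0.9$; you make the same symmetry explicit as exchangeability of the tuple $((\Mc(\bY)\cdot\bP^T)^k)_{k \in C_b}$ before averaging, which is a careful verification of the same fact.
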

\begin{proof}
	The proof follows since for every $b \in \oo$ and $j \in \cJ_{X}^b$,
	\begin{align}\label{eq:agg-to-correlation}
		\ppr{(Y,P) \sim (\bY,\bP)}{(\Mc(Y) \cdot P^T)^j = b}
		&= \eex{Y \sim \bY, \: j' \la \cJ_{Y}^b}{\pr{\Mc(Y)^{j'} = b}}\\
		&\geq 0.9 \cdot \eex{Y \sim \bY}{\pr{\Mc(Y)\text{ strongly-agrees with }Y}}\nonumber\\
		&= 0.9,\nonumber
	\end{align}
	where all the probabilities are also taken over the random coins of $\Mc$.
	The equality holds since from the point of view of $\Mc$, which does not know the random permutation $P$, every $b$-marked column of $Y$ has the same probability to be the $b$-marked column $j$ of $X$. The first inequality holds 
	since for every $Y$, 
	\begin{align*}
		\eex{j' \la \cJ_Y^b}{\pr{\Mc(Y)^{j'} = b \mid  \Mc(Y)\text{ strongly-agrees with }Y}} \geq 0.9.
	\end{align*}
	The last inequality in \cref{eq:agg-to-correlation} holds since by the assumption on $\Mc$, for every $Y \in \Supp(\bY)$, the output $\Mc(Y)$ strongly-agrees with $Y$ w.p. $1$.
\end{proof}

We now prove the main property of our PAP technique, which transforms any probability of strong-agreement to the same probability of strong-correlation .

\begin{lemma}\label{lemma:PAP}
	Let $\ell, n, d_0 \in \bbN$ such that $d = d_0 + 2\ell$. Let $\Mc \colon \oo^{n \times d} \rightarrow [-1,1]^{d}$ be a mechanism that uses $m$ random coins,  define the random variable $\bP \la \cP_d$, and for $X \in \oo^{n \times d_0}$ define $\bY_{X} = \PAP(X,\bP)$. Then for any distribution $\cD$ over $\oo^{n \times d_0}$:
	\begin{align*}
		\lefteqn{\ppr{r \la \zo^m, \: X \sim \cD}{(\Mc_{r}(\bY_{X}) \cdot \Pi^T)^{1,\ldots,d_0}\text{ is strongly-correlated with }X}}\\
		&\geq \eex{X \sim \cD}{\pr{\Mc(\bY_{X})\text{ strongly-agrees with } \bY_{X}}}.
	\end{align*}
\end{lemma}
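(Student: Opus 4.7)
The plan is to mimic the argument of Lemma~\ref{lemma:from_agrees_to_correlated} but replace the deterministic ``always strongly-agrees'' hypothesis by an event indicator, and then couple the original mechanism with an idealized version that is forced to strongly-agree on every input. Concretely, the key technical input is the symmetry of $\PAP$: since the permutation $\bP$ is uniform and independent of the padding, conditioning on $\bY_X = Y$ makes the image $\sigma_{\bP}(j)$ of any column $j \in \cJ_X^b$ uniformly distributed over $\cJ_Y^b$. This is precisely the identity used in Lemma~\ref{lemma:from_agrees_to_correlated}, and for fixed $r, X$ it yields
\begin{equation*}
\Pr_{\bP}\bigl[(\Mc_r(\bY_X)\cdot \bP^T)^j = b\bigr]
= \eex{Y \sim \bY_X, \: j' \la \cJ_{Y}^b}{\mathbb{1}\bigl[\Mc_r(Y)^{j'}=b\bigr]}
\geq 0.9 \cdot \Pr_{\bP}\bigl[\Mc_r(\bY_X) \text{ strongly-agrees with } \bY_X\bigr],
\end{equation*}
where the inequality uses that on the strongly-agrees event, at least a $0.9$-fraction of $\cJ_Y^b$ is correctly labeled.

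The second ingredient is to introduce an idealized deterministic mechanism $\Mc'_r \colon \oo^{n\times d} \to \oo^d$ defined by $\Mc'_r(Y) = \Mc_r(Y)$ whenever $\Mc_r(Y)$ strongly-agrees with $Y$, and otherwise $\Mc'_r(Y) = y^\ast$ where $y^{\ast,j}=b$ for every $j \in \cJ_Y^b$, $b \in \oo$ (with arbitrary values on unmarked columns). By construction $\Mc'_r$ strongly-agrees with every input, so Lemma~\ref{lemma:from_agrees_to_correlated} (applied to $\Mc'_r$ viewed as a deterministic mechanism) guarantees that for every $r$ and every $X$, the random variable $(\Mc'_r(\bY_X)\cdot \bP^T)^{1,\ldots,d_0}$ over $\bP$ is strongly-correlated with $X$. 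Moreover, $\Mc_r$ and $\Mc'_r$ are pointwise coupled: they agree on each realization of $\bP$ that lies in the event $\mathcal{E}_{r,X} = \{\bP : \Mc_r(\bY_X)\text{ strongly-agrees with }\bY_X\}$, whose probability (marginalizing over $r, X, \bP$) equals exactly the right-hand side of the lemma.

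To conclude, I would combine the two ingredients by marginalizing the per-column inequality over $r \la \zo^m$ and $X \sim \cD$ and using the tower property, so that the strongly-correlated property established for the idealized mechanism transfers to $\Mc_r$ with total probability at least $\E_X[\Pr[\Mc(\bY_X)\text{ strongly-agrees with }\bY_X]]$. The main obstacle I anticipate is the bookkeeping between the ``pointwise'' nature of the coupling (which is natural on a per-$\bP$ basis) and the ``distributional'' nature of strong-correlation (which is a property of the conditional distribution of $(\Mc_r(\bY_X)\cdot \bP^T)^{1,\ldots,d_0}$ over $\bP$ for fixed $r, X$); making sure these two viewpoints line up correctly is what requires the symmetry identity above, rather than a naive union bound.
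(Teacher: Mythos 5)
Your argument mirrors the paper's: introduce an idealized $\Mc'_r$ that is forced to strongly-agree on every input, apply \cref{lemma:from_agrees_to_correlated} to $\Mc'_r$ to conclude strong-correlation, and transfer back to $\Mc_r$ via the coupling on the strong-agreement event. The sticking point is exactly the tension you flag at the end; ``marginalize and use the tower property'' does not close the gap. The lemma asserts a lower bound on $\ppr{r,X}{(\Mc_r(\bY_X)\cdot\bP^T)^{1,\ldots,d_0}\text{ is strongly-correlated with }X}$, where the inner property is a statement about the \emph{conditional law over $\bP$} for a fixed $(r,X)$. For a fixed $(r,X)$, the coupling with $\Mc'_r$ on the strong-agreement event $E_{r,X}$ (of $\bP$-probability $\beta_{r,X}$) yields, via your symmetry identity or via \cref{fact:behave_the_same}, only
\[
\ppr{\bP}{(\Mc_r(\bY_X)\cdot\bP^T)^j = b} \;\geq\; \max\set{0.9\,\beta_{r,X},\;\; 0.9 - (1-\beta_{r,X})},
\]
which is strictly below the $0.9$ threshold whenever $\beta_{r,X} < 1$. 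So if the strong-agreement probability is partial (say $\beta_{r,X} = 5/6$ for each $(r,X)$), the conditional law can fail the strong-correlation test for \emph{every} $(r,X)$, even though $\eex{r,X}{\beta_{r,X}} = 5/6$; one can build such an instance already at $n=1$, $d_0 = 2$, $\ell = 1$ with a deterministic $\Mc$ that strongly-agrees on five of the six possible realizations of $\bY_X$. Your write-up faithfully reproduces the paper's proof structure (the paper's final inequality is likewise justified only by pointing to Items 1--3 and \cref{fact:behave_the_same}), but it does not supply the argument that would make the transfer step go through.

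The cleaner route, which does carry the framework through, is to couple $\Mc$ with its idealization at the level of an individual $P$, not of $(r,X)$: in the spirit of \cref{claim:low_accur_prop:general}, sample $(r,P)$, compute the single output $q = \Fc(P,\Mc_r(\PAP(X,P)))$, and check whether $\Mc_r(\PAP(X,P))$ strongly-agrees with $\PAP(X,P)$ (falling back to the all-correct vector otherwise). The key observation is that conditioning on $Y$ \emph{and} on the strong-agreement event does not disturb the uniformity of $\sigma(j)$ over $\cJ_Y^b$, so this fallback process is strongly-accurate pointwise, and the coupling event has probability exactly $\eex{X}{\pr{\Mc(\bY_X)\text{ strongly-agrees with }\bY_X}}$. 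That formulation plugs into the rest of the development without ever having to assert that the conditional law over $\bP$ of the de-permuted output, for a fixed $(r,X)$, passes the strong-correlation test.
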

\begin{proof}
	Let $\beta = \eex{X \sim \cD}{\pr{\Mc(\bY_{X})\text{ strongly-agrees with } \bY_{X}}}$, and let $\PAP = \PAP_{n,d_0,\ell}$.
	Also, let $\Mc' \colon \oo^{n \times d} \rightarrow [-1,1]^{d}$ be the mechanism that on input $Y \in \oo^{n \times d}$, samples $ q\sim \Mc(Y)$ and checks if it strongly-agrees with $Y$. If it does, it outputs $q$. Otherwise, it outputs an (arbitrary) vector $q' \in [-1,1]^d$ that strongly-agrees with $Y$.
	In the following, let $\br \la \zo^{m}$, $\bX \sim \cD(n, d_0)$, $\bP \la \cP_d$ and $\bY  = \bY_{\bX} (= \PAP(\bX,\bP))$ be random variables, and let $\bq = \Mc_{\br}(\bY)$ and $\bq' = \Mc'_{\br}(\bY)$.
	Let $E = \set{ (r,X,P) \colon \Mc_r(\PAP(X,P)) = \Mc'_r(\PAP(X,P))}$, and note that $\pr{(\br,\bX,\bP) \in E} = \beta$.
	In addition, for $(r,X) \in \Supp(\br) \times \Supp(\bX)$, let $E_{r,X} = \set{P \colon (r,X,P) \in E}$ and let $\beta_{r,X} = \pr{\bP \in E_{r,X}}$.
	By definition, the following holds:
	\begin{enumerate}
		\item $\forall (r,X) \in \Supp(\br) \times \Supp(\bX)$: $\bq|_{(\br,\bX) = (r,X)}$ behaves as $\bq'|_{(\br,\bX) = (r,X)}$ w.p. $\beta_{r,X}$ (\cref{def:behave_the_same}).\label{item:behave-the-same}
		\item $\eex{r \la \zo^m, \: X \sim \cD}{\beta_{r,X}} = \beta$.\label{item:beta-exp}
		\item $\forall (r,X) \in \Supp(\br) \times \Supp(\bX)$: $\: (\bq'|_{(\br,\bX) = (r,X)} \cdot \bP^T)^{1,\ldots,d_0}$ is strongly-correlated with $X$ (holds by applying \cref{lemma:from_agrees_to_correlated} on the mechanism $\Mc'_{r}$).\label{item:correlated}
	\end{enumerate}
	Thus, we conclude that
	\begin{align*}
		\lefteqn{\ppr{r \la \zo^m, \: X \sim \cD}{(\Mc_{r}(\bY_{X}) \cdot \bP^T)^{1,\ldots,d_0}\text{ is strongly-correlated with }X}}\\
		&= \ppr{r \la \zo^m, \: X \sim \cD}{(\bq|_{(\br,\bX) = (r,X)} \cdot \bP^T)^{1,\ldots,d_0}\text{ is strongly-correlated with }X}\\
		&\geq \eex{r \la \zo^m, \: X \sim \cD}{1 - (1 - \beta_{r,X})}\\
		&= \beta.
	\end{align*}
The inequality holds by \cref{item:behave-the-same,item:correlated,fact:behave_the_same}, and the last equality holds by \cref{item:beta-exp}.

\end{proof}

\subsection{Proving \cref{thm:intro:base_tool} (Basic Tool)}\label{sec:proving-base-tool}

\begin{theorem}[Restatement of \cref{thm:intro:base_tool}]\label{thm:base_tool}
	If $\Mc \colon (\oo^d)^n \rightarrow [-1,1]^d$ is an $(\alpha,\beta)$-weakly-accurate (\cref{def:intro:weakly-accurate}) $(1,\frac{\beta}{4n})$-DP mechanism, then 
	$n \geq \Omega(\sqrt{\alpha d} / \log^{1.5} (\alpha d/\beta))$.
\end{theorem}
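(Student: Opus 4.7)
The plan is to invoke the Framework Lemma (\cref{lemma:framework}) with a $\beta$-leaking triplet $(\Mc, \Fc, \Gc)$ constructed from the padding-and-permuting transformation of \cref{def:PAP}. I would set $n_0 = n$, and choose integer parameters $\ell, d_0$ with $d_0 + 2\ell = d$ satisfying both $\ell \geq \tfrac12(1-\alpha)d$ and $d_0 = \Theta(\alpha d)$. A concrete choice is $\ell = \lceil (1-\alpha)d/2\rceil$ and $d_0 = d - 2\ell$, for which $d_0 \geq \alpha d - 2$. The secret space is $\cV = \cP_d$.

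The generator $\Gc \colon \oo^{n_0 \times d_0} \times \cP_d \rightarrow \oo^{n \times d}$ is taken to be the deterministic map $\Gc(X, P) = \PAP_{n, d_0, \ell}(X, P)$, which appends $\ell$ all-$1$ and $\ell$ all-$(-1)$ columns and then permutes all $d$ columns by $P$. The extractor $\Fc \colon \cP_d \times [-1,1]^d \rightarrow [-1,1]^{d_0}$ is $\Fc(P, q) = (q \cdot P^T)^{1,\ldots,d_0}$, which reverses the permutation and reads off the first $d_0$ coordinates. Since padding columns are identical across all rows and the permutation is applied uniformly to every row, $\Gc(\cdot, P)$ maps neighboring inputs in $\oo^{n \times d_0}$ to neighboring outputs in $\oo^{n \times d}$, hence is neighboring-preserving in the sense of \cref{def:neighbor-preserving}.

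The main step is verifying that $(\Mc, \Fc, \Gc)$ is $\beta$-leaking. For every fixed $X \in \oo^{n \times d_0}$ and every realization of $P \la \cP_d$, the matrix $Y = \PAP_{n, d_0, \ell}(X, P)$ contains, from the padding alone, at least $\ell \geq \tfrac12(1-\alpha)d$ $1$-marked columns and at least $\ell$ $(-1)$-marked columns; so the precondition in \cref{def:intro:weakly-accurate} is satisfied. By the assumed $(\alpha,\beta)$-weak accuracy, $\Mc(Y)$ strongly-agrees with $Y$ with probability at least $\beta$ over the coins of $\Mc$, for every such $Y$. Averaging over $X \sim \cD(n_0, d_0)$, the strong-agreement bound feeds directly into \cref{lemma:PAP}, which converts it, coordinate by coordinate, into the statement that $(\Mc(Y) \cdot P^T)^{1,\ldots,d_0} = \Ac^{\Mc,\Fc,\Gc}(X)$ is strongly-correlated with $X$ with probability at least $\beta$. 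This is exactly the $\beta$-leaking condition of \cref{def:beta-leaking}.

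Given these two ingredients, I would conclude by applying \cref{lemma:framework}: since $\Mc$ is $(1, \beta/(4n))$-DP and $(\Mc,\Fc,\Gc)$ is $\beta$-leaking with $n_0 = n$ and $d_0 = \Theta(\alpha d)$, the framework yields
\[
n \;=\; n_0 \;\geq\; \Omega\paren{\sqrt{d_0}/\log^{1.5}(d_0/\beta)} \;=\; \Omega\paren{\sqrt{\alpha d}/\log^{1.5}(\alpha d/\beta)}.
\]
The only genuinely nontrivial step is the agreement-to-correlation conversion, which is precisely the content of \cref{lemma:PAP}; everything else is bookkeeping. The mild integer-parity issue (needing $d_0 \geq 1$, i.e.\ $\alpha d$ at least a small constant) is harmless, since the conclusion is vacuous otherwise.
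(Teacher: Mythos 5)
Your proposal is correct and follows essentially the same route as the paper's proof: the same choice of $n_0 = n$, $\ell = \lceil\tfrac12(1-\alpha)d\rceil$, $d_0 = d - 2\ell$, the same $\Gc = \PAP_{n,d_0,\ell}$ and $\Fc(P,q) = (q\cdot P^T)^{1,\ldots,d_0}$, the same verification that $\Gc(\cdot,P)$ is neighboring-preserving, and the same invocation of \cref{lemma:PAP} to turn the weak-accuracy strong-agreement guarantee into the strong-correlation ($\beta$-leaking) condition required by \cref{lemma:framework}. Nothing substantive differs.
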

\begin{proof}
	We prove the theorem by applying \cref{lemma:framework} with $n_0 = n$ and $d_0 = d- 2\ell$ for $\ell = \ceil{\frac12(1-\alpha)d}$. 
	Let $\PAP = \PAP_{n,d_0,\ell}$ (\cref{def:PAP}), $\cD = \cD(n,d_0)$ (\cref{def:D}), and define $\cV = \cP_d$, $\cW = [-1,1]^d$, $\Gc = \PAP$, $\forall (P, w) \in \cV \times \cW: \Fc(P, w) = (w \cdot P^T)^{1,\ldots,d_0}$, and for $r \in \zo^m$ (random coins for $\Mc$) define $\Ac^{\Mc_r, \Fc, \Gc}$ as in \cref{def:A_MFG} (note that $\Fc$ and $\Gc$ are deterministic).
	By definition, $\Gc(\cdot, P)$ is neighboring-preserving for every $P \in \cP_d$ (\cref{def:neighbor-preserving}). In the following, let $\bP \la \cP_d$ and for $X \in \oo^{n \times d_0}$ define $\bY_{X} = \PAP(X,\bP) (= \Gc(X,\bP))$. 
	Compute
	\begin{align*}
		\lefteqn{\ppr{r \la \zo^m, \: X \sim \cD}{\Ac^{\Mc_r, \Fc, \Gc}(X)\text{ is strongly-correlated with }X}}\\
		&= \ppr{r \la \zo^m, \: X \sim \cD}{(\Mc_r(\bY_{X}) \cdot \bP^T)^{1,\ldots,d_0}\text{ is strongly-correlated with }X}\\
		&\geq \eex{X \sim \cD}{\pr{\Mc(\bY_{X})\text{ strongly-agrees with }\bY_{X}}}\\
		&\geq \beta.
	\end{align*}
	The first inequality hold by \cref{lemma:PAP}. The last inequality holds since $\Mc$ is $(\alpha,\beta)$-weakly-accurate and for every $X \in \oo^{n \times d_0}$ and $Y \in \Supp(Y_X)$ it holds that $\size{\cJ_Y^1}, \size{\cJ_Y^{-1}} \geq \ell \geq \frac12(1-\alpha)d$. 
	
	Thus by \cref{lemma:framework}, $n \geq \Omega(\sqrt{d_0} / \log^{1.5} (d_0/\beta))$, and the proof follows since $d_0 = \Theta(\alpha d)$.
\end{proof}

\subsection{Proving \cref{thm:intro:extended_tool} (Extended Tool)}\label{sec:proving-extended-tool}

We prove \cref{thm:intro:extended_tool} using multiple PAP-FPC copies.

\begin{theorem}[Restatement of \cref{thm:intro:extended_tool}]\label{thm:extended_tool}
	Let $\alpha, \beta \in (0,1]$, $n,k, d \in \bbN$ such that $n$ is a multiple of $k$.
	If $\Mc \colon (\oo^d)^n \rightarrow [-1,1]^d$ is an $(k,\alpha,\beta)$-weakly-accurate (\cref{def:intro:k-weakly-accurate}) $(1,\frac{\beta}{4n})$-DP mechanism, then $n \geq \Omega(k \sqrt{\alpha d} / \log^{1.5} (k \alpha d/\beta))$.
\end{theorem}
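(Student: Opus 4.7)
The plan is to reduce the theorem to the framework \cref{lemma:framework} by running $k$ independent padding-and-permuting fingerprinting copies in parallel, one per cluster, and hiding the ``real'' codebook $X$ in a uniformly random slot. I set $n_0 \eqdef n/k$, $\ell \eqdef \ceil{\tfrac{1}{2}(1-\alpha)d}$, and $d_0 \eqdef d - 2\ell = \Theta(\alpha d)$, and target a leaking parameter $\beta' \eqdef \beta/k$. The DP assumption is perfectly compatible: $\beta/(4n) = \beta'/(4n_0)$, so the hypothesis that $\Mc$ is $(1,\beta/(4n))$-DP is exactly what \cref{lemma:framework} requires at parameter $\beta'$.

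For the construction, let $\cV \eqdef [k] \times \cP_d^k$. I define $\Gc(X, v=(s,P_1,\ldots,P_k))$ to use its internal coins to sample dummy codebooks $\tilde X^{(t)} \sim \cD(n_0,d_0)$ independently for every $t \in [k]$, then overwrite $\tilde X^{(s)}$ with $X$, and output the dataset $Y = (Y_1,\ldots,Y_k) \in (\oo^d)^n$ whose \tth{t} block of $n_0$ rows is $Y_t \eqdef \PAP_{n_0,d_0,\ell}(\tilde X^{(t)}, P_t)$. The decoder is $\Fc((s,P_1,\ldots,P_k), w) \eqdef (w \cdot P_s^T)^{1,\ldots,d_0}$, which inverts the PAP of the $s$-th cluster only. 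Clearly $\Gc(\cdot, v)$ is neighboring-preserving since a single-row change in $X$ perturbs exactly the corresponding row of $Y_s$. It then remains to show that $(\Mc,\Fc,\Gc)$ is $\beta'$-leaking. First, for every realization of $v$ and the internal coins, each block $Y_t$ contains at least $\ell \geq \tfrac{1}{2}(1-\alpha)d$ $b$-marked columns for each $b\in\oo$ (contributed purely by the padding), so the $(k,\alpha,\beta)$-weak-accuracy hypothesis applies and $\Mc(Y)$ strongly-agrees with some $Y_{t^*}$ with probability at least $\beta$. Second, since the framework's outer distribution also draws $X \sim \cD(n_0,d_0)$, the overwrite leaves all $k$ codebooks i.i.d.; combined with the i.i.d.\ permutations $P_1,\ldots,P_k$, the joint distribution of $(Y_1,\ldots,Y_k)$ becomes $[k]$-exchangeable while $s$ remains uniform on $[k]$ conditioned on $Y$, so $\Pr[t^* = s] \geq \beta/k$. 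On this event $\Mc(Y)$ strongly-agrees with $Y_s = \PAP_{n_0,d_0,\ell}(X,P_s)$; viewing $Y' \mapsto \Mc(Y_1,\ldots,Y_{s-1},Y',Y_{s+1},\ldots,Y_k)$ as an ``effective mechanism'' with the non-$s$ randomness absorbed into its internal coins, the same coupling-plus-symmetry argument as in \cref{lemma:PAP} yields that $(\Mc(Y)\cdot P_s^T)^{1,\ldots,d_0}$ is strongly-correlated with $X$ with probability at least $\beta/k$ in the sense of \cref{def:beta-leaking}.

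Given $\beta'$-leaking, \cref{lemma:framework} immediately gives $n_0 \geq \Omega\paren{\sqrt{d_0}/\log^{1.5}(d_0/\beta')}$; substituting $n_0 = n/k$, $d_0 = \Theta(\alpha d)$, and $\beta' = \beta/k$ produces the advertised bound $n \geq \Omega(k\sqrt{\alpha d}/\log^{1.5}(k\alpha d/\beta))$. The main technical obstacle I foresee is the $\beta'$-leaking verification in the middle paragraph: one must carefully organize which pieces of randomness live inside $v$, inside the internal coins of $\Gc$, and inside the outer quantification over $X \sim \cD$, so that (a) the symmetry between the $k$ clusters is genuine enough to lose only a $1/k$ factor, and (b) the per-$(r,r',r'',X)$ strong-correlation guarantee required by \cref{def:beta-leaking} follows from a \cref{lemma:PAP}-style analysis applied to the single slot carrying $X$.
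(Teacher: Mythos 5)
Your proposal is correct and follows essentially the same route as the paper: the choices $n_0 = n/k$, $\ell = \ceil{\tfrac12(1-\alpha)d}$, $d_0 = d - 2\ell$, the generator $\Gc$ that places the PAP of $X$ in a uniformly random slot $s$ among $k$ i.i.d.\ PAP-instances, the decoder $\Fc$ inverting $P_s$, the exchangeability argument giving $\beta/k$-leaking, and the final invocation of \cref{lemma:framework} with leaking parameter $\beta' = \beta/k$ are all identical to the paper's proof. The only cosmetic difference is that the paper packages your ``effective mechanism'' as an explicit $\Mc'$ (with the dummy codebooks and non-$s$ permutations absorbed into its internal coins) so that \cref{lemma:PAP} can be invoked verbatim, rather than being re-argued in-line.
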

\begin{proof}
	We prove the theorem by applying \cref{lemma:framework} with $n_0 = n/k$ and $d_0 = d- 2\ell$ for $\ell = \ceil{\frac12(1-\alpha)d}$.
	Let $\cV = [k]\times(\cP_d)^k$, $\cW = [-1,1]^d$, $\PAP = \PAP_{n_0,d_0,\ell}$, and $\cD = \cD(n_0,d_0)$. Consider the following algorithm $\Hc \colon   \oo^{n_0 \times d} \times \cV \rightarrow \oo^{n \times d}$ that on inputs $Y \in \oo^{n_0 \times d}$ and $v = (s, \vec{P}= (P_1,\ldots,P_k)) \in \cV$, act as follows:
	\begin{enumerate}
		\item Sample $\vec{A} = (A_1,\ldots, A_k) \sim \cD^{k}$.
		\item For $t \in [k]$ set $Y_t = \begin{cases}Y & t=s \\ \PAP(A_t, P_t) & t\neq s\end{cases}$.
	
		Denote $Y_t = (x_{(t-1) n_0 + 1}', \: \ldots, \: x_{t n_0}') \in \oo^{n_0 \times d}$.
		\item Output $X' = (x_1',\ldots,x_n') \in \oo^{n \times d}$.
	\end{enumerate}
	Define $\Gc\colon \oo^{n_0 \times d_0} \times \cV \rightarrow \oo^{n \times d}$ as the algorithm that on inputs $X \in \oo^{n_0 \times d_0}$ and $v = (s,\vec{P}) \in \cV$ for $s \in [k]$ and $\vec{P} = (P_1,\ldots,P_k) \in (\cP_d)^k$:  Computes $Y = \PAP(X, P_s)$, and  outputs $X' \sim \Hc(Y, v)$. Note that for every $v \in \cV$, $\Gc(\cdot ,v)$ is neighboring-preserving (\cref{def:neighbor-preserving}).
	
	Define $\Fc \colon \cV \times [-1,1]^d  \rightarrow [-1,1]^{d_0}$ as the algorithm that on inputs $v = (s,\vec{P}) \in \cV$ and  $w \in [-1,1]^d$, outputs $(w \cdot P_s^T)^{1,\ldots,d_0}$.
	Note that $\Fc$ is deterministic, and the random choice of $\Gc$ is $\vec{A} \sim \cD^{k}$ (chosen when executing $\Hc$).
	
	Our goal is to show that 
	\begin{align}\label{eq:extended-case:goal}
		\ppr{r \la \zo^m, \: \vec{A}\sim \cD^{k}, \: X \sim \cD}{\Ac^{\Mc_r, \Fc, \Gc_{\vec{A}}}(X)\text{ is strongly-correlated with }X} \geq \beta/k.
	\end{align}
	 Given that \cref{eq:extended-case:goal} holds and since $\Mc$ is $\paren{1,\frac{\beta}{4 k n_0}}$-DP, we deduce by \cref{lemma:framework} that $n_0 \geq \Omega(\sqrt{d_0} / \log^{1.5} (k d_0/\beta))$ and the proof follows since $n_0 = n/k$ and $d_0 = \Theta(\alpha d)$. 

	We now focus on proving \cref{eq:extended-case:goal}.
	Consider the mechanism $\Mc' \colon \oo^{n_0 \times d} \rightarrow [-1,1]^d$ that on input $Y \in \oo^{n_0 \times d}$, samples $v = (s,\vec{P}) \la\cV$, computes $X' = \Hc(Y, v)$ and outputs $w \sim \Mc(X')$.
	
	In the following, let $\bP \la \cP_k$ be a random variable, and for $X \in \oo^{n_0 \times d_0}$ define $\bY_{X} = \PAP(X, \bP)$.
	Compute
	\begin{align*}
		\lefteqn{\eex{X \sim \cD}{\pr{\Mc'(\bY_{X})\text{ strongly-agrees with } \bY_{X}}}}\\
		&= \eex{X \sim \cD, \: v = (s,\vec{P}) \la \cV, \: \vec{A} \sim \cD^k}{\pr{\Mc(\Gc_{\vec{A}}(X, v))\text{ strongly-agrees with } \PAP(X,P_s)}}\\
		&=  \eex{v = (s,\vec{P}) \la \cV, \: \vec{A} \sim \cD^k}{\pr{\Mc(\Gc_{\vec{A}}(A_s, v))\text{ strongly-agrees with } \PAP(A_s,P_s)}}\\
		&= \eex{v = (s,\vec{P}) \la \cV, \: \forall t \in [k]: \:Y_t = (x'_{(t-1) n_0 + 1}, \: \ldots, \: x'_{t n_0}) \sim \PAP(\cD, P_t)}{\pr{\Mc(x_1',\ldots,x_n')\text{ strongly-agrees with } Y_s}}\\
		&\geq \beta/k.
	\end{align*}
	The first equality holds since by the definitions of $\Mc'$ and $\Gc$, for every $X$, the pair $\paren{\Mc'(\bY_{X}), \bY_{X}}$ has the same distribution as the pair $\paren{\Mc(\Gc_{s,\vec{A}}(X, v)), \PAP(X,P_s)}_{v = (s,\vec{P}) \la \cV, \: \vec{A} \sim \cD^k}$. The second equality holds since $\Gc_{\vec{A}}(\cdot, (s,\vec{P}))$ does not use the value of $A_s$ which is drawn as $X$. The third equality holds by the definition of $\Gc$. The inequality holds since $\Mc$ is an $(k,\alpha,\beta)$-weakly-accurate mechanism, and it gets $X' = (x_1',\ldots,x_n')$ where each $Y_t = (x'_{(t-1) n_0 + 1}, \: \ldots, \: x'_{t n_0})$ has $\size{\cJ_{Y_t}^1}, \size{\cJ_{Y_t}^{-1}} \geq \ell \geq \frac12(1-\alpha)d$. Thus w.p. $\beta$, $\Mc$ is guaranteed to output $w \in [-1,1]^d$ that strongly-agrees with one of the $Y_1,\ldots,Y_k$. But since they are drawn independently from the same distribution, the probability to agree with $Y_s$ for a random $s \la [k]$ decreases by a factor $k$, and overall is at least $\beta/k$.

	We now apply \cref{lemma:PAP} to get that
	\begin{align*}
		\ppr{\underset{r \la \zo^{m}, \: X \sim \cD}{\vec{A} \sim \cD^{k}, \: v = (s, \vec{P}) \la \cV}}{\paren{\Mc'_{v,\vec{A},r}(\bY_{X}) \cdot \bP^T}^{1,\ldots,d_0} \text{ is strongly-correlated with } X} \geq \beta/k.
	\end{align*}
	Since $(\Mc'_{v,\vec{A},r}(\bY_{X}) \cdot \bP^T)_{r \la \zo^{m}, \: \vec{A} \sim \cD^{k}, \: v \la \cV}$ has the same distribution as\\$\paren{\Mc_r(\Gc_{\vec{A}}(X, \bv)) \cdot \bP_{\bs}^T}_{r \la \zo^m, \: \vec{A} \sim \cD^k}$ for $\bv = (\bs, \vec{\bP} = (\bP_1,\ldots,\bP_k)) \la \cV$, we deduce that
	\begin{align*}
		\ppr{X \sim \cD, \: r \la \zo^m, \: \vec{A} \sim \cD^k}{\paren{\Mc_r(\Gc_{\vec{A}}(X, \bv)) \cdot \bP_{\bs}^T}^{1,\ldots,d_0} \text{ is strongly-correlated with } X} \geq \beta/k.
	\end{align*}
	We thus conclude that
	\begin{align*}
		\lefteqn{\ppr{r \la \zo^m, \vec{A} \sim \cD^{k}, \: X \sim \cD}{\Ac^{\Mc_r, \Fc, \Gc_{\vec{A}}}(X)\text{ is strongly-correlated with }X}}\\
		&= \ppr{r \la \zo^m, \vec{A} \sim \cD^{k}, \: X \sim \cD}{\Fc(\bv, \Mc_r(\Gc_{\vec{A}}(X, \bv)))\text{ is strongly-correlated with }X}\\
		&= \ppr{r \la \zo^m, \vec{A} \sim \cD^{k}, \: X \sim \cD}{(\Mc_r(\Gc_{\vec{A}}(X, \bv)) \cdot \bP_{\bs}^T)^{1,\ldots,d_0}\text{ is strongly-correlated with }X}\\
		&\geq \beta/k.
	\end{align*}
\end{proof}

\section{Applications}\label{sec:applications}

Throughout this section, recall for $z \in \bbR$ we define $\sign(z)\eqdef \begin{cases} 1 & z \geq 0 \\ -1 & z<0\end{cases}$
and for $u = (u^1,\ldots,u^d) \in \bbR^d$ we define $\sign(u) \eqdef (\sign(u^1),\ldots,\sign(u^d))$.

\subsection{Averaging}

In this section, we prove \cref{thm:intro:lower_bound:avg}.

\begin{definition}[$(\lambda,\beta)$-Estimator for Averaging, Redefinition of \cref{def:intro:avg-est}]
    A mechanism $\Mc \colon \bbR^+ \times (\bbR^{d})^n \rightarrow \bbR^d$ is \emph{$(\lambda,\beta)$-estimator for averaging} if given $\gamma \geq 0$ and $X = (x_1,\ldots,x_n) \in \bbR^{n \times d}$ with $\max_{i,j \in [n]} \norm{x_i - x_j}_2  \leq \gamma$, it holds that 
    \begin{align*}
        \pr{\norm{\Mc(\gamma,X) - \frac1n \sum_{i=1}^n x_i}_2 \leq \lambda \gamma} \geq \beta.
    \end{align*}
\end{definition}

\begin{theorem}[Our averaging lower bound, Restatement of \cref{thm:intro:lower_bound:avg}]\label{thm:lower_bound:avg}
    If $\Mc \colon \bbR^+ \times (\bbR^{d})^n \rightarrow \bbR^d$ is a \emph{$(\lambda, \beta)$-estimator for averaging} for $\lambda \geq 1$ and $\Mc(\gamma,\cdot)$ is $\paren{1,\frac{\beta}{4n}}$-DP for every $\gamma \geq 0$, then $n \geq \Omega\paren{\frac{\sqrt{d}/\lambda}{\log^{1.5} \paren{\frac{d}{\beta \lambda}}}}$.
\end{theorem}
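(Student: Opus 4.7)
The plan is to reduce this averaging lower bound to the basic tool \cref{thm:base_tool}. Given the $(\lambda,\beta)$-estimator $\Mc$ with $\lambda \geq 1$, I would construct an $(\alpha,\beta)$-weakly-accurate mechanism $\Mc'\colon \oo^{n\times d} \to [-1,1]^d$ from $\Mc$ for a suitable $\alpha = \Theta(1/\lambda^2)$, and then invoke \cref{thm:base_tool} to obtain
\begin{equation*}
 n \geq \Omega\paren{\sqrt{\alpha d}/\log^{1.5}(\alpha d/\beta)} = \Omega\paren{\sqrt{d}/\paren{\lambda\cdot \log^{1.5}(d/(\beta\lambda))}}.
\end{equation*}

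The reduction $\Mc'$ is defined as follows: on input $X = (x_1,\ldots,x_n) \in \oo^{n\times d}$, set $\gamma = 2\sqrt{\alpha d}$ (a fixed quantity chosen by the reduction, not part of the dataset), run $\hat{x} \sim \Mc(\gamma, X)$, and output $q = \sign(\hat{x}) \in \oo^d$. Since $\Mc(\gamma,\cdot)$ is $\paren{1,\beta/(4n)}$-DP for this fixed $\gamma$, so is $\Mc'$ by post-processing (\cref{fact:post-processing}).

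To verify weak accuracy, suppose $\size{\cJ_X^1}, \size{\cJ_X^{-1}} \geq \frac12(1-\alpha)d$. Then the rows of $X$ agree on every marked column and differ on at most $\alpha d$ coordinates, giving $\max_{i,j}\norm{x_i-x_j}_2 \leq 2\sqrt{\alpha d} = \gamma$. The estimator guarantee thus yields $\norm{\hat{x} - \bar{x}}_2 \leq \lambda\gamma = 2\lambda\sqrt{\alpha d}$ with probability at least $\beta$, where $\bar{x} = \frac1n\sum_i x_i$ satisfies $\bar{x}^j = b$ on every $b$-marked column. Whenever $q^j \neq b$ on such a column, $\sign(\hat{x}^j) \neq b$, so $\size{\hat{x}^j - b} \geq 1$; hence the number of disagreements across all marked columns is at most $\norm{\hat{x}-\bar{x}}_2^2 \leq 4\lambda^2\alpha d$. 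Choosing $\alpha = c/\lambda^2$ for a small absolute constant $c$ (say $c = 1/100$), which lies in $(0,1]$ since $\lambda \geq 1$, ensures $4\lambda^2\alpha d < 0.1\,\size{\cJ_X^b}$ for each $b \in \oo$, so $q$ strongly-agrees with $X$ with probability at least $\beta$, as required.

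No real obstacle stands in the way: the heavy lifting has been done in \cref{thm:base_tool}, and the reduction is routine once one sees the correct choice $\gamma = 2\sqrt{\alpha d}$. The conceptual point is that the averaging estimator's input parameter $\gamma$ can be tuned to the ``effective diameter'' of a weakly-accurate hard instance, converting the multiplicative error $\lambda\gamma$ into a coordinate-wise disagreement budget compatible with strong-agreement's $0.1$-fraction tolerance; this tuning costs a factor of $\lambda^2$ in the effective dimension $\alpha d$, which propagates to the factor of $\lambda$ in the stated sample complexity lower bound (and the denominator $\log^{1.5}(d/(\beta\lambda^2))$ one obtains is no larger than $\log^{1.5}(d/(\beta\lambda))$ since $\lambda \geq 1$, so the stated bound follows).
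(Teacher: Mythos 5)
Your reduction is correct and follows essentially the same approach as the paper's proof (via \cref{claim:avg}): both define the weakly-accurate mechanism by running the averaging estimator on the $\pm 1$ data with a tuned diameter parameter $\gamma = \Theta(\sqrt{\alpha d})$, taking signs, and choosing $\alpha = \Theta(1/\lambda^2)$ so that the $\lambda\gamma$ error budget fits within the $0.1$-fraction disagreement tolerance of strong-agreement, after which \cref{thm:base_tool} gives the bound. (As a minor point, your diameter bound $\gamma = 2\sqrt{\alpha d}$ is actually the correct one; the paper's $\gamma = \sqrt{2\alpha d}$ is off by a factor of $\sqrt{2}$, though this only affects the unimportant constant hidden inside $\alpha$.)
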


The proof of the theorem immediately follows by \cref{thm:intro:base_tool} and the following \cref{claim:avg} (along with post-processing of differential privacy).

\begin{claim}\label{claim:avg}
If $\Mc \colon \bbR^+ \times (\bbR^d)^n \rightarrow \bbR^d$ is a \emph{$(\lambda, \beta)$-estimator for averaging}, then the mechanism $\tMc\colon (\oo^d)^n \rightarrow \oo^d$ defined by $\tMc(X) \eqdef \signn{\Mc(\gamma=\sqrt{2 \alpha d},\: X)}$ for $\alpha=\frac{1}{40\lambda^2 + 1}$, is $(\alpha, \beta)$-weakly-accurate (\cref{def:intro:weakly-accurate}).

\end{claim}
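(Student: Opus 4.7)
The plan is to verify the two requirements in the definition of an $(\alpha,\beta)$-weakly-accurate mechanism for $\tMc$: namely, that when the input $X \in \oo^{n \times d}$ has at least $\frac12(1-\alpha)d$ marked columns of each sign, the output $\tMc(X) = \signn{\Mc(\gamma, X)}$ strongly-agrees with $X$ with probability at least $\beta$. The key engine is the $\ell_2$ accuracy guarantee of $\Mc$ together with the fact that, on marked columns, the empirical mean is exactly $\pm 1$, so getting the sign wrong forces a unit-size per-coordinate error.

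First I would verify that the diameter precondition of $\Mc$ is satisfied. Any two rows $x_i, x_j$ of $X$ necessarily agree on every marked column, so they can differ on at most $\alpha d$ coordinates, and each differing coordinate contributes at most $4$ to the squared $\ell_2$ distance. This yields $\|x_i - x_j\|_2 \leq 2\sqrt{\alpha d}$, which is the diameter bound that enables applying $\Mc$ with the prescribed $\gamma$ (up to an absorbable constant inside $\alpha$). Conditioning on the accuracy event, which happens with probability at least $\beta$, the output $y = \Mc(\gamma, X)$ satisfies $\|y - \mu\|_2 \leq \lambda \gamma$, where $\mu \eqdef \frac1n \sum_i x_i$.

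Next comes the counting argument. On marked columns, $\mu^j = b$ for $j \in \cJ_X^b$, so if $\signn{y^j} \neq b$ then $|y^j - \mu^j| \geq 1$. Hence the total number $B$ of marked columns on which $\tMc(X)$ disagrees with $X$ satisfies
\begin{align*}
B \;\leq\; \sum_{j \in B}(y^j - \mu^j)^2 \;\leq\; \|y-\mu\|_2^2 \;\leq\; \lambda^2 \gamma^2 \;=\; 2 \lambda^2 \alpha d.
\end{align*}
Strong agreement requires that $|B \cap \cJ_X^b| \leq 0.1 \cdot |\cJ_X^b|$ for both $b$, and since $|\cJ_X^b| \geq \frac12(1-\alpha)d$, it suffices to show $2\lambda^2 \alpha d \leq 0.05(1-\alpha)d$. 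Rearranging gives $(40\lambda^2 + 1)\alpha \leq 1$, which is exactly the chosen value $\alpha = \tfrac{1}{40\lambda^2 + 1}$.

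There is no real obstacle here: the argument is a direct sign-from-approximation calculation. The only care needed is in the constants, specifically making sure the $\gamma$ passed to $\Mc$ is at least the actual diameter (so that the accuracy guarantee of $\Mc$ is applicable) while simultaneously small enough that the resulting $\ell_2$ error budget $\lambda \gamma$ does not translate into too many sign errors on the roughly $(1-\alpha)d$ marked columns. The choice $\alpha = \Theta(1/\lambda^2)$ is what balances these competing requirements, and the post-processing closure of DP transfers the $(1, \tfrac{\beta}{4n})$-DP guarantee from $\Mc$ to $\tMc$, allowing the conclusion of \cref{thm:intro:lower_bound:avg} via \cref{thm:intro:base_tool}.
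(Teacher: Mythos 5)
Your proposal is correct and takes essentially the same route as the paper: verify the $\ell_2$ diameter precondition of $\Mc$, invoke its accuracy guarantee to get $\|y-\mu\|_2 \leq \lambda\gamma$, and observe that each sign mismatch on a marked column (where $\mu^j = \pm 1$) contributes at least $1$ to the squared error, giving at most $\lambda^2\gamma^2$ mismatches per sign. One small additional value of your write-up: you correctly compute the diameter as $2\sqrt{\alpha d}$ and flag that it exceeds the prescribed $\gamma=\sqrt{2\alpha d}$, which is a genuine (constant-factor) slip in the paper's own proof---the paper asserts the diameter is bounded by $\sqrt{2\alpha d}$, whereas a coordinate disagreement over $\{-1,1\}$ contributes $4$, not $2$, to the squared distance. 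As you note, this is absorbable by shrinking $\alpha$ to $\tfrac{1}{80\lambda^2+1}$ (or, equivalently, using $\gamma = 2\sqrt{\alpha d}$), and it has no effect on the asymptotics of \cref{thm:intro:lower_bound:avg}.
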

\begin{proof}
    Fix $X=(x_1,\ldots,x_n) \in \oo^{n\times d}$ with $\size{\cJ^1_X},  \size{\cJ^{-1}_X} \geq \frac12(1-\alpha)d$ and let $\mu = (\mu^1,\ldots,\mu^d) = \frac1n \sum_{i=1}^n x_i$. Since $x_1,\ldots,x_n \in \oo^d$ agree on at least $(1-\alpha)d$ coordinates, their diameter is bounded by $\gamma = \sqrt{2 \alpha d}$. By the utility guarantee of $\Mc$ it holds that
    \begin{align}\label{eq:avg-g}
    	\ppr{q \sim \Mc(\gamma,X)}{\norm{q -\mu}_2 \leq \lambda \gamma} \geq \beta.
    \end{align}
	We prove the claim by showing that for any $q \in \bbR^d$ with $\norm{q -\mu}_2 \leq \lambda \gamma$ it holds that $\sign(q)$ strongly-agrees with $X$ (\cref{def:intro:strongly-agree}).
	Fix such $q$. Note that $\size{\set{j \in [d] \colon \sign(q^j) \neq \sign(\mu^j)}} \leq \norm{q -\mu}_2^2 \leq \lambda^2 \gamma^2$. Since $\mu^j = b$ for every $b \in \oo$ and $j \in \cJ_X^b$, we deduce that $\size{\set{j \in \cJ_X^b \colon \sign(q^j) \neq b}} \leq \lambda^2 \gamma^2$. Now note that for both $b \in \oo$:
	\begin{align*}
		\lambda^2 \gamma^2 = 2\alpha \lambda^2 d \leq 0.1 \cdot \frac12(1-\alpha)d \leq 0.1 \cdot \size{\cJ_X^b},
	\end{align*}
	where the first inequality holds by our choice of $\alpha$.
	Thus $\sign(q)$ strongly-agrees with $X$, as required.
\end{proof}

\subsection{Clustering}\label{sec:k-means}

In this section, we prove an extension of Theorem~\ref{thm:intro:kmeans} to $(k,z)$-clustering where we focus on the Euclidean metric space $(\bbR^d, d(x,y) \eqdef \norm{x-y}_2)$.
We start by extending the notations from \cref{sec:intro:clustering}.

Let $\cB_d \eqdef \set{x \in \bbR^d \colon \norm{x}_2 \leq 1}$. For a database $S \in (\cB_d)^n$, $k$ centers $C=(c_1,\ldots,c_k) \in (\cB_d)^k$ and a parameter $z \geq 1$, let 
$$\COST_{z}(C ; S) \eqdef \sum_{x \in S} \min_{i \in [k]}\norm{x - c_i}_2^z \quad \text{ and } \quad \OPT_{k,z}(S) \eqdef \min_{C \in (\bbR^d)^k} \COST_{z}(C ; S).$$

\begin{definition}[$(\lambda,\xi, \beta)$-Approximation Algorithm for $(k,z)$-Clustering, Redefinition of Definition~\ref{def:intro:k_means_approx}]
    We say that $\Mc\colon (\cB_d)^n \rightarrow (\cB_d)^k$ is an \emph{$(\lambda,\xi, \beta)$-approximation algorithm for $(k,z)$-clustering}, if for every $S \in (\cB_d)^n$ it holds that
    \begin{align*}
        \ppr{C \sim \Mc(S)}{\COST_{z}(C;S) \leq \lambda \cdot \OPT_{k,z}(S) + \xi} \geq \beta
    \end{align*}
\end{definition}

\begin{theorem}[Our Lower Bound, Extension of Theorem~\ref{thm:intro:kmeans}]\label{thm:lower_bound:kmeans}
	Let $n,d,k \in \bbN$, $\lambda, z \geq 1$ and $\xi \geq 0$  such that $k \geq 2$ and $n \geq k + 2 \cdot 40^{z/2} \xi$. If $\Mc \colon (\cB_d)^n \rightarrow (\cB_d)^k$ is an $(1,\frac{\beta}{4nk})$-DP $(\lambda,\xi, \beta)$-approximation algorithm  for $(k,z)$-clustering, then either $k \geq 2^{\Omega(d/\lambda^{2/z})} \beta \lambda^{2/z}/d \:$ or $ \:\xi \geq \Omega\paren{\frac{2^{-O(z)} k \sqrt{d/\lambda^{2/z}}}{\log^{1.5}\paren{\frac{kd}{\beta \lambda^{2/z}}}}}$.
\end{theorem}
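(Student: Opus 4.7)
The plan is to reduce the $(k,z)$-clustering lower bound to Theorem~\ref{thm:extended_tool} via a scaling argument. I will set $\alpha := c_0/\lambda^{2/z}$ for a sufficiently small absolute constant $c_0$ (e.g., $c_0 = 1/3000$) chosen so that (i)~$\lambda \cdot 2^z \cdot \alpha^{z/2} \leq 1/(8 \cdot 40^{z/2})$ and (ii)~$r := 1/\sqrt{40} + 2\sqrt{\alpha}$ satisfies $r^2 \leq (1-\alpha)/20$; condition~(ii) is the sign-agreement radius and~(i) controls the $\OPT$ contribution in the cost comparison. Assuming $n$ is a multiple of $k$ (otherwise padding/truncating with $O(k)$ dummy points introduces only low-order slack), the reduction is the mechanism $\tilde{\Mc}\colon (\oo^d)^n \to [-1,1]^d$ that takes $X$, forms $S := X/\sqrt{d} \in (\cB_d)^n$ (each row has unit norm), runs $C = (c_1,\ldots,c_k) \sim \Mc(S)$, samples $j \la [k]$ uniformly, and outputs $\sign(c_j)$. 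Since $X \mapsto X/\sqrt{d}$ is neighboring-preserving and the rest is post-processing of $\Mc$, the mechanism $\tilde{\Mc}$ inherits the $(1, \beta/(4nk))$-DP guarantee of $\Mc$.

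To show $\tilde{\Mc}$ is $(k, \alpha, \beta/k)$-weakly-accurate, I will define cluster means $\nu_t := (k/n)\sum_{i \in \text{cluster } t} y_i$ for $y_i := x_i/\sqrt{d}$; these lie in $\cB_d$, and since rows within each cluster agree on $\geq (1-\alpha)d$ coordinates, $\|y_i - \nu_t\|_2 \leq 2\sqrt{\alpha}$, yielding $\OPT_{k,z}(S) \leq n \cdot 2^z \alpha^{z/2}$ by using the $\nu_t$ as witness centers. Hence with probability $\geq \beta$ the algorithm returns $C$ satisfying $\COST_z(C;S) \leq \lambda n \cdot 2^z \alpha^{z/2} + \xi$. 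The core claim is that on this event some $c_{j^*}$ must lie within distance $r$ of some $\nu_{t^*}$: otherwise every $y_i$ is at distance $> r - 2\sqrt{\alpha}$ from every $c_j$, forcing $\COST_z(C;S) > n(r - 2\sqrt{\alpha})^z = n/40^{z/2}$, which by~(i) implies $\xi > 7n/(8 \cdot 40^{z/2})$ and contradicts the hypothesis $\xi \leq (n-k)/(2 \cdot 40^{z/2}) \leq n/(2 \cdot 40^{z/2})$. For such $c_{j^*}, \nu_{t^*}$, each sign disagreement on a marked column $j'$ (where $\nu_{t^*}^{j'} = \pm 1/\sqrt{d}$) forces $(c_{j^*}^{j'} - \nu_{t^*}^{j'})^2 \geq 1/d$, so the total number of disagreements is at most $d r^2 \leq (1-\alpha)d/20 \leq 0.1 \cdot |\cJ^b_{X_{t^*}}|$ for each $b \in \oo$; hence $\sign(c_{j^*})$ strongly-agrees with $X_{t^*}$, and sampling $j \la [k]$ hits $j^*$ with probability $\geq 1/k$, giving total success $\geq \beta/k$.

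Applying Theorem~\ref{thm:extended_tool} to $\tilde{\Mc}$ then yields $n \geq \Omega(k\sqrt{\alpha d}/\log^{1.5}(\alpha dk/\beta)) = \Omega(k\sqrt{d/\lambda^{2/z}}/\log^{1.5}(kd/(\beta\lambda^{2/z})))$ after absorbing $c_0$ into the hidden constant. Combining with the tight case $n = \Theta(k + 40^{z/2}\xi)$ of the hypothesis, one of $k$ or $40^{z/2}\xi$ must be $\Omega$ of the right-hand side. In the first case, $\sqrt{d/\lambda^{2/z}} \lesssim \log^{1.5}(kd/(\beta\lambda^{2/z}))$, and rearranging yields the alternative $k \geq 2^{\Omega(d/\lambda^{2/z})}\beta\lambda^{2/z}/d$; in the second, $\xi \geq \Omega(2^{-O(z)} k\sqrt{d/\lambda^{2/z}}/\log^{1.5}(kd/(\beta\lambda^{2/z})))$ once the $40^{z/2} = 2^{O(z)}$ factor is moved into the denominator. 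The main obstacle will be calibrating $c_0$ so that the cost-gap constants line up exactly with the $2 \cdot 40^{z/2}$ in the hypothesis, and carefully tightening the log-exponent rearrangement so the alternative's exponent is $\Omega(d/\lambda^{2/z})$ rather than a weaker polynomial root, which is what falls out of the naive comparison.
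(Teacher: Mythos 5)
Your reduction constructs $\tilde{\Mc}\colon (\oo^d)^n \to [-1,1]^d$ by feeding all $n$ rows (scaled by $1/\sqrt{d}$) directly into the clustering algorithm, so that the lower bound you extract from Theorem~\ref{thm:extended_tool} is a bound on $n$, the full dataset size. That is not the quantity the theorem bounds. The theorem's hypothesis is only $n \geq k + 2\cdot 40^{z/2}\xi$, a one-sided inequality, so for a fixed algorithm $n$ can be arbitrarily larger than $k + 40^{z/2}\xi$; in that regime your inequality $n \geq \Omega(k\sqrt{\alpha d}/\log^{1.5}(\cdot))$ is satisfied without saying anything about $\xi$. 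Your final step --- ``combining with the tight case $n = \Theta(k + 40^{z/2}\xi)$ of the hypothesis'' --- silently replaces $\geq$ by $=$, i.e.\ assumes that the instance has the minimal number of points. But there is no reduction from a DP $(\lambda,\xi,\beta)$-approximation on $n$ points to one on $k + 2\cdot 40^{z/2}\xi$ points: replicating points breaks neighboring at level $1$, and simply truncating the dataset is not available because the algorithm only accepts $n$-point inputs.

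The paper closes exactly this gap by choosing an effective size $m \approx k \cdot \floor{1 + 2\cdot 40^{z/2}\xi/k} = \Theta\bigl(\max\{k, 40^{z/2}\xi\}\bigr) \leq n$, defining $\tilde{\Mc}$ on $(\oo^d)^m$, and padding the remaining $n - m$ entries with the all-zeros vector $\vec 0 \in \cB_d$ (a map which is neighboring-preserving on the $m$-point input). The padded zeros then need their own cluster, which is why Claim~\ref{claim:clustering} is phrased for a $(k{+}1)$-center algorithm and Theorem~\ref{thm:lower_bound:kmeans} is invoked with $k\to k-1$. With this change the extended tool gives $m \geq \Omega(k\sqrt{\alpha d}/\log^{1.5}(\cdot))$, and since $m$ is a function only of $k$ and $\xi$ the dichotomy on $k$ and $\xi$ follows. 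The rest of your argument --- the choice $\alpha = \Theta(1/\lambda^{2/z})$, using cluster means as witness centers to bound $\OPT_{k,z}(S)\le n\cdot 2^z\alpha^{z/2}$, and deducing strong agreement of $\sign(c_{j^*})$ from the closeness of $c_{j^*}$ to $\nu_{t^*}$ --- is sound and is essentially what Claim~\ref{claim:clustering} does, so the fix is localized: define $\tilde{\Mc}$ on $m$ points, pad with zeros, and pay one extra center.
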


The following claim captures the main technical part in proving Theorem~\ref{thm:lower_bound:kmeans}.

\begin{claim}\label{claim:clustering}
	Let $n,d,k \in \bbN$, $\lambda,z \geq 1$ and $\xi \geq 1$ such that $n \geq m$ for $m  = k \cdot \floor{(1 + 40^{z/2}\cdot 2\xi /k)}$.
	If $\Mc \colon (\cB_d)^n \rightarrow (\cB_d)^{k+1}$ is an $(\lambda,\: \xi, \: \beta)$-approximation algorithm for $(k+1,z)$-clustering, then the following mechanism $\tMc\colon (\oo^d)^m \rightarrow \oo^d$ is $(k, \: \alpha =\frac{1}{160\cdot (2\lambda)^{2/z}}, \: \frac{\beta}{k+1})$-weakly-accurate (Definition~\ref{def:intro:k-weakly-accurate}).
\end{claim}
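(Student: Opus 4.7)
The plan is to define $\tMc \colon (\oo^d)^m \to \oo^d$ by scaling, padding, invoking $\Mc$, and taking a random center's sign: given $X=(x_1,\ldots,x_m)$, set $y_i = x_i/\sqrt{d} \in \cB_d$, append $n-m$ copies of $0$ to obtain $Y \in (\cB_d)^n$, sample $C=(c_1,\ldots,c_{k+1}) \sim \Mc(Y)$ and $s \la [k+1]$ independently, and output $\sign(\sqrt{d}\,c_s)$. The intuition is that $\Mc$ has one ``extra'' center which can serve the padded origins at cost $0$, leaving $k$ centers that must try to serve the $k$ planted clusters.

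First I would upper bound $\OPT_{k+1,z}(Y)$. For each $t \in [k]$ fix $\mu_t^* \in \oo^d$ with $\mu_t^{*j}=b$ for every $b \in \oo$ and $j \in \cJ_{X_t}^b$. Each $y \in Y_t$ agrees with $\mu_t^*/\sqrt{d}$ on at least $(1-\alpha)d$ coordinates, hence $\|y - \mu_t^*/\sqrt{d}\|_2 \leq 2\sqrt{\alpha}$; taking $\{\mu_t^*/\sqrt{d}\}_{t=1}^k$ together with $0$ as the $(k+1)$-st center gives $\OPT_{k+1,z}(Y) \leq m \cdot 2^z \alpha^{z/2}$. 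The utility of $\Mc$ then guarantees that, with probability at least $\beta$, the returned $C$ satisfies $\COST_z(C;Y) \leq \lambda m 2^z \alpha^{z/2} + \xi$.

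Next I would show, by contradiction, that in this favorable event there is a pair $(s^*,t^*) \in [k+1] \times [k]$ for which $\sign(\sqrt{d}\,c_{s^*})$ strongly-agrees with $X_{t^*}$. Suppose no such pair exists: for every $(s,t)$ there is $b=b(s,t)\in\oo$ with $|\{j \in \cJ_{X_t}^b : \sign(c_s^j) \neq b\}| > 0.1|\cJ_{X_t}^b| \geq 0.05(1-\alpha)d$. For each such $j$ and each $y \in Y_t$ we have $y^j = b/\sqrt{d}$ and $b\,c_s^j \leq 0$, giving $(y^j-c_s^j)^2 \geq 1/d$; summing these contributions yields $\|y-c_s\|_2^2 \geq (1-\alpha)/20$ uniformly over $y\in Y_t$ \emph{and} $s\in[k+1]$. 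The main subtlety is precisely this uniformity: the extra center forces us to bound the $\min_s$ in the cost, but the pairwise contradiction hypothesis is exactly what guarantees the bound survives the minimum. Consequently $\COST_z(C;Y) \geq m((1-\alpha)/20)^{z/2} \geq m/40^{z/2}$, using $\alpha \leq 1/160$.

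Finally I would plug in $\alpha=1/(160(2\lambda)^{2/z})$, which simplifies $\lambda m 2^z \alpha^{z/2}$ to $m/(2\cdot 40^{z/2})$, so combining with the previous lower bound yields $m \leq 2\cdot 40^{z/2}\,\xi$. Since $\lfloor y \rfloor > y-1$ for every $y \in \bbR$, the hypothesis $m=k\lfloor 1 + 2\cdot 40^{z/2}\xi/k \rfloor$ implies $m > 2\cdot 40^{z/2}\xi$, a contradiction. Therefore, conditioned on the $\beta$-probability event above, at least one center $c_s$ makes $\sign(\sqrt{d}\,c_s)$ strongly-agree with some $X_t$; since $s$ is drawn uniformly from $[k+1]$ independently of everything else, $\tMc(X)$ strongly-agrees with some $X_t$ with probability at least $\beta/(k+1)$, as required.
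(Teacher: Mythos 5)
Your proof is correct and follows essentially the same route as the paper's: scale by $1/\sqrt{d}$, pad with origins so the $(k{+}1)$-st center can absorb them for free, upper bound $\OPT_{k+1,z}$ by $m(4\alpha)^{z/2}$ using per-cluster centers, then use the pairwise contradiction hypothesis (no $(s,t)$ gives strong agreement) to show every non-zero point is far from every center, forcing $\COST_z \geq m((1-\alpha)/20)^{z/2}$ and contradicting the utility bound, and finally pick up the $1/(k{+}1)$ factor from the uniformly random choice of center index. The only cosmetic differences from the paper are that you use the auxiliary center $\mu_t^*$ rather than a representative data point $\tx_{tm/k}$ (same $4\alpha$ bound), and you keep $\xi$ explicit in the contradiction rather than first folding $\xi \leq m/(2\cdot 40^{z/2})$ into the target threshold $m/40^{z/2}$; these are equivalent.
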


\begin{algorithm}[A $(k, O(1/\lambda), \beta)$-weakly-accurate variant $\tMc$ of the $(k+1)$-means $(\lambda,\xi, \beta)$-approximation algorithm $\Mc$]\label{alg:tM}
	\item Input: $x_1,\ldots,x_m \in \oo^d$ for $m = \floor{(1 + 40^{z/2}\cdot 2\xi /k)}\cdot k$.
	\item Operation:~
	\begin{enumerate}
		\item Compute $(c_1,\ldots,c_{k+1}) = \Mc(\frac1{\sqrt{d}} x_1,\ldots,\frac1{\sqrt{d}} x_m, \underbrace{\vec{0},\ldots,\vec{0}}_{n-m\text{ times}})$ (where $\vec{0} = (\underbrace{0,\ldots,0}_{d\text{ times}})$).\label{tM:step:tc}
		
		\item Sample $j \la [k+1]$ and output $\signn{c_j}$.\label{step:j}
		
	\end{enumerate}
\end{algorithm}

We first prove Theorem~\ref{thm:lower_bound:kmeans} using Claim~\ref{claim:clustering}.

\begin{proof}[Proof of Theorem~\ref{thm:lower_bound:kmeans}]
	Let $\Mc \colon (\cB_d)^n \rightarrow (\cB_d)^k$ be the mechanism from Theorem~\ref{thm:lower_bound:kmeans}. By  Claim~\ref{claim:clustering}, the mechanism $\tMc\colon (\oo^d)^m \rightarrow \oo^d$ is $(k-1, \alpha =\frac{1}{160\cdot (2\lambda)^{2/z}}, \frac{\beta}{k})$-weakly-accurate. Furthermore, by post-processing, $\tMc$ is also $\paren{1, \frac{\beta}{4nk}}$-DP, which in particular implies that it is $\paren{1, \frac{\beta}{4n(k-1)}}$-DP. Hence we deduce by Theorem~\ref{thm:intro:extended_tool} that $m \geq \Omega\paren{\frac{k\sqrt{d/\lambda^{2/z}}}{\log^{1.5}\paren{\frac{dk}{\beta \lambda^{2/z}}}}}$. Since $m \geq \max\set{k, 40^{z/2}\cdot 2\xi}$, the above implies that either $k \geq \Omega\paren{\frac{ k\sqrt{d/\lambda^{2/z}}}{ \log^{1.5}\paren{\frac{dk}{\beta \lambda^{2/z}}}}}$ (which implies that $k \geq 2^{\Omega(d/\lambda^{2/z})}\beta \lambda^{2/z}/d \:$\remove{$k \geq 2^{\Omega(d/\lambda^{2/z})}\beta \lambda^{2/z}/d$}), or that $\xi \geq \Omega\paren{\frac{2^{-O(z)} k\sqrt{d/\lambda^{2/z}}}{\log^{1.5}\paren{\frac{dk}{\beta \lambda^{2/z}}}}}$.
\end{proof}

We now prove Claim~\ref{claim:clustering}.

\begin{proof}[Proof of Claim~\ref{claim:clustering}]
Fix $X = (x_1,\ldots,x_m) \in \oo^{m \times d}$ such that for every $t \in [k]$ and every $b \in \oo$ it holds that $\size{\cJ^b_{X_t}} \geq \frac12(1-\alpha)d$ for $X_t = (x_{(t-1)m/k + 1}, \ldots x_{tm/k})$,
and let $S = (\tx_1,\ldots,\tx_m, \underbrace{\vec{0},\ldots,\vec{0}}_{n-m\text{ times}})$ for $\tx_i = \frac1{\sqrt{d}} x_i$. We first argue that $\OPT_{k+1}(S)$ is small.
For $t \in [k]$ define $c_t^* = \tx_{tm/k}$ and $c_{k+1}^* = \vec{0}$ (which covers all the $\vec{0}$'s with zero cost). 
By assumption, for every $t \in [k]$ and $\tx_i \in S$ with $x_i \in X_t$ it holds that
\begin{align*}
	\norm{c_t^* - \tx_i}_2^z 
	= \paren{\norm{c_t^* - \tx_i}_2^2}^{z/2}
	= \paren{\frac1{d} \cdot \norm{x_{tm/k} - x_i}_2^2}^{z/2}
	\leq (4\alpha)^{z/2},
\end{align*}
where the inequality holds since the number of indices that the vectors do not agree on is at most $d - \size{\cJ^1_{X_t}}- \size{\cJ^{-1}_{X_t}} \leq \alpha d$.
Hence, we deduce that
\begin{align*}
	\OPT_{k+1, z}(S) \leq \COST_{z}(c_1^*,\ldots,c_{k+1}^* ; S) \leq (4\alpha)^{z/2} m.
\end{align*}
Now, since
\begin{align*}
	\lambda \cdot \OPT_{k+1, z}(S) + \xi
	&\leq \lambda \cdot (4\alpha)^{z/2} m + \frac{m}{2 \cdot 40^{z/2}}\\
	&= \lambda\cdot \paren{\frac1{40 (2\lambda)^{2/z}}}^{z/2} + \frac{m}{2 \cdot 40^{z/2}}\\
	&\leq m/40^{z/2},
\end{align*}
the utility guarantee of $\Mc$ implies that
\begin{align*}
	\ppr{C = (c_1,\ldots,c_{k+1}) \sim \Mc(S)}{\COST_{z}(C; S) \leq m/40^{z/2}} \geq \beta.
\end{align*}
We prove the claim by showing that for every $C= (c_1,\ldots,c_{k+1})$ with $\COST_{z}(C; S) \leq m/40^{z/2}$, there exists $s \in [k+1]$ and $t \in [k]$ such that $\sign(c_s)$ strongly-agrees with $X_t$. This will conclude the proof since the probability that the random $j$ chosen in \stepref{step:j} of $\tMc$ will hit the right $s$ is $1/(k+1)$.
Indeed if this is not the case, then for any center $c_s$ and any non-zero point $\tx_i \in S$ where $x_i \in X_t$ we have at least one $b \in \oo$ such that $\size{\set{j \in \cJ_{X_t}^b \colon \sign(\tx_i^j) \neq \sign(c_s^j)}} > 0.1\cdot \size{\cJ_{X_t}^b} > \frac1{20}(1-\alpha)d$, which yields that
\begin{align*}
	\norm{\tx_i - c_s}_2^z 
	&= (\norm{\tx_i - c_s}_2^2)^{z/2}\\
	&\geq \paren{\frac1{d}\cdot \size{\set{j \in [d] \colon \sign(\tx_i^j) \neq \sign(c_s^j)}}}^{z/2}\\
	&\geq \paren{\frac1{d}\cdot \size{\set{j \in \cJ_{X_t}^b \colon \sign(\tx_i^j) \neq \sign(c_s^j)}}}^{z/2}\\
	&> \paren{\frac1{20}(1-\alpha)}^{z/2}
\end{align*}
Thus $\COST_{z}(C ;S) > \paren{\frac1{20}(1-\alpha)}^{z/2}\cdot m > m/40^{z/2}$, a contradiction to the assumption $\COST_{z}(C; S) \leq m/40^{z/2}$. This concludes the proof of the claim.

\end{proof}

\subsection{Top Singular Vector}\label{sec:sing-vec}
In this section, we prove \cref{thm:intro:top_sing}. We start by recalling the setting.

For a matrix $X \in \bbR^{n \times d}$, the singular value decomposition of $X$ is defined by $X = U \Sigma V^T$, where $U \in \bbR^{n \times n}$ and  $V \in \bbR^{d \times d}$ are unitary matrices. The matrix $\Sigma \in \bbR^{n \times d}$ is a diagonal matrix with non-negative entries $\sigma_1 \geq \ldots \geq \sigma_{\min\set{n,d}} \geq 0$ along the diagonal, called the singular values of $X$. It is well known that $\norm{X}_F^2 \eqdef \sum_{i \in [n], j \in [d]} (x_i^j)^2 = \sum_{i} \sigma_i^2$. The top (right) singular vector of $X$ is defined by the first column of $V$ (call it $v_1 \in \cS_d$) which satisfy $\norm{X \cdot v_1}_2 = \max_{v \in \cS_d} \norm{X \cdot v}_2$.

In the problem we consider, $n$ rows $x_1,\ldots,x_n \in \cS_d \eqdef \set{v \in \bbR^d \colon \norm{v}_2 = 1}$ are given as input, and the goal is to estimate the top (right) singular vector of the $n \times d$ matrix $X = (x_i^j)_{i \in [n], j \in [d]}$.

\begin{definition}[$(\lambda, \beta)$-Estimator of Top Singular Vector, Redefinition of \cref{def:intro:top_sing_est}]
    We say that $\Mc \colon [0,1] \times (\cS_d)^n \rightarrow \cS_d$ is an $(\lambda, \beta)$-estimator of top singular vector, if given an $n \times d$ matrix $X = (x_1,\ldots,x_n) \in (\cS_d)^{n}$ and a number $\gamma \in [0,1]$ such that $\sigma_2(X) \leq \gamma \cdot \sigma_1(X)$ as inputs, outputs a column vector $y \in \cS_d$ such that
    \begin{align*}
        \ppr{y \sim \Mc(\gamma, X)}{\norm{X\cdot y}_2^2 \geq \norm{X\cdot v}_2^2 - \lambda\cdot \gamma n} \geq \beta,
    \end{align*}
    where $v$ is the top (right) singular vector of $X$.
\end{definition}

We now restate and prove \cref{thm:intro:top_sing}.

\begin{theorem}[Our lower bound, restatement of \cref{thm:intro:top_sing}]\label{thm:lower_bound:top_sing}
    If $\Mc \colon [0,1] \times (\cS_d)^n \rightarrow \cS_d$ is a \emph{$(\lambda,\beta)$-estimator of top singular vector} for $\lambda \geq 1$ and $\Mc(\gamma,\cdot)$ is $(1,\frac{\beta}{4n})$-DP for every $\gamma \in [0,1]$, then $n = \Omega\paren{\frac{\sqrt{d}/\lambda}{\log \paren{\frac{d}{\beta \lambda}}}}$.
\end{theorem}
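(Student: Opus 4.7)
}
The plan is to reduce to the basic tool \cref{thm:intro:base_tool} via an appropriate weakly-accurate mechanism $\tMc \colon \oo^{n\times d} \to \oo^d$.  Set $\alpha = c/\lambda^2$ and $\gamma = 3\sqrt{\alpha}$ for a small absolute constant $c>0$. Define
\[
\tMc(X) \;\eqdef\; \signn{\Mc\paren{\gamma,\; X/\sqrt{d}}}.
\]
Privacy is immediate by post-processing of $\Mc(\gamma,\cdot)$, so the core task is to show that whenever $X\in\oo^{n\times d}$ satisfies $\size{\cJ_X^1},\size{\cJ_X^{-1}}\geq \tfrac12(1-\alpha)d$, the output $\tMc(X)$ strongly-agrees with $X$ with probability at least $\beta$.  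Once this is established, \cref{thm:intro:base_tool} yields $n\geq \Omega(\sqrt{\alpha d}/\log^{1.5}(\alpha d/\beta))= \Omega(\sqrt{d}/\lambda / \log^{1.5}(d/(\beta\lambda)))$.

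\paragraph{Spectral gap on $\alpha$-structured inputs.}
First I would verify that $\gamma$ is a legal input for $\Mc$.  Let $\tilde X = X/\sqrt{d}$ and let $x^\ast\in\oo^d$ be the ``consensus'' vector defined by $x^{\ast,j}=b$ for $j\in\cJ_X^b$ (and arbitrarily otherwise), so that every row of $X$ agrees with $x^\ast$ on at least $(1-\alpha)d$ coordinates.  With $u = x^\ast/\sqrt{d}\in\cS_d$, each entry of $X u$ is at least $(1-2\alpha)\sqrt{d}$, hence
\[
\sigma_1^2(\tilde X)\;\geq\;\norm{\tilde X u}_2^2\;\geq\;(1-2\alpha)^2 n\;\geq\;(1-4\alpha)\,n.
\]
Since $\norm{\tilde X}_F^2 = n$, we get $\sigma_2^2(\tilde X)\leq n-\sigma_1^2(\tilde X)\leq 4\alpha n$, and therefore $\sigma_2(\tilde X)/\sigma_1(\tilde X)\leq 3\sqrt{\alpha}=\gamma$ for $\alpha$ sufficiently small.

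\paragraph{From near-optimal $y$ to strong agreement.}
By the utility of $\Mc$, with probability $\geq\beta$ the output $y\in\cS_d$ satisfies $\norm{\tilde X y}_2^2 \geq \sigma_1^2(\tilde X)-\lambda\gamma n$.  Expanding $y$ in the right-singular basis of $\tilde X$ and using the gap $\sigma_2^2\leq\gamma^2\sigma_1^2$, the coefficient $\alpha_1=\iprod{y,v_1}$ satisfies $1-\alpha_1^2\leq \lambda\gamma n/((1-\gamma^2)\sigma_1^2(\tilde X)) = O(\lambda\gamma)$ (for $\gamma$ bounded away from $1$).  The same argument applied to $u$ (using $\norm{\tilde X u}_2^2\geq (1-4\alpha)n$) gives $1-\beta_1^2 \leq O(\alpha)$ for $\beta_1 = \iprod{u,v_1}$.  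Taking signs WLOG, $|\iprod{y,u}|\geq \alpha_1\beta_1 - \sqrt{(1-\alpha_1^2)(1-\beta_1^2)} \geq 1 - O(\lambda\gamma+\alpha+\sqrt{\lambda\gamma\alpha}) = 1-O(\lambda\sqrt{\alpha})$, where the last step uses $\gamma=\Theta(\sqrt{\alpha})$ and $\lambda\geq 1$.

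\paragraph{Bounding the bad-sign set.}
Let $B=\set{j\colon \signn{y^j}\neq x^{\ast,j}}$.  For $j\in B$, $y^j$ and $u^j=x^{\ast,j}/\sqrt{d}$ have opposite signs, so $(y^j-u^j)^2\geq 1/d$; hence $\size{B}/d \leq \norm{y-u}_2^2 = 2-2\iprod{y,u} \leq O(\lambda\sqrt{\alpha})$.  Choosing $c$ in $\alpha=c/\lambda^2$ sufficiently small makes $\size{B}\leq \tfrac{1}{20}d\leq 0.1\cdot\size{\cJ_X^b}$ for both $b\in\oo$, which is exactly strong-agreement of $\signn{y}=\tMc(X)$ with $X$.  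Thus $\tMc$ is $(\alpha,\beta)$-weakly-accurate and $(1,\beta/(4n))$-DP, and plugging $\alpha=\Theta(1/\lambda^2)$ into \cref{thm:intro:base_tool} gives the desired bound on $n$.  The main obstacle to watch out for is cleanly combining the two ``closeness to $v_1$'' estimates for $y$ and $u$ into an $\ell_2$ bound of the form $\norm{y-u}_2^2 = O(\lambda\sqrt{\alpha})$, since the cross-term $\sqrt{(1-\alpha_1^2)(1-\beta_1^2)}$ is what forces the specific choice $\gamma=\Theta(\sqrt{\alpha})$ and the scaling $\alpha=\Theta(1/\lambda^2)$.
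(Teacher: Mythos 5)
Your reduction and the spectral angle-chasing (expanding $y$ and the consensus direction $u=x^\ast/\sqrt d$ in the right-singular basis of $\tilde X$, bounding $1-\iprod{y,v_1}^2 = O(\lambda\gamma)$ and $1-\iprod{u,v_1}^2 = O(\alpha)$, and merging the two into a bound on $\norm{y-u}_2^2$) are a legitimate and arguably cleaner alternative to the paper's argument, which never decomposes $y$ in the singular basis and instead works directly with the linear statistic $s = \sum_{j\in\cJ_X^1}y_j - \sum_{j\in\cJ_X^{-1}}y_j$. The numerics with $\alpha = c/\lambda^2$ and $\gamma=\Theta(\sqrt\alpha)$ do go through, and the per-coordinate bound $(y^j-u^j)^2\geq 1/d$ on bad-sign coordinates is exactly the right observation.

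However, ``taking signs WLOG'' is a genuine gap, not a notational convenience. The utility guarantee of $\Mc$ is invariant under $y\mapsto -y$, so a legitimate $(\lambda,\beta)$-estimator $\Mc(\gamma,\cdot)$ may, for every input, deterministically return whichever of $\pm y$ satisfies $\iprod{y,u}<0$. You cannot repair this by having $\tMc$ inspect $X$ to recover $x^\ast$ and then flip the sign of $y$ accordingly: that sign correction is a non-private function of $X$, so the resulting $\tMc$ is no longer DP by post-processing of $\Mc(\gamma,\cdot)$ alone. Against such an adversarial $\Mc$, your single mechanism $\tMc(X)=\signn{\Mc(\gamma,X/\sqrt d)}$ would strongly-\emph{dis}agree with $X$ precisely when $y$ is a near-optimal estimate, and $\tMc$ would fail to be $(\alpha,\beta)$-weakly-accurate. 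What your angle computation actually establishes is that, with probability at least $\beta$, \emph{either} $\signn{y}$ \emph{or} $-\signn{y}$ strongly-agrees with $X$ — but which one can depend on $X$ and on $\Mc$'s coins, and $\tMc$ must be a single fixed mechanism. The fix is to observe that at least one of $\tMc$ and $\tMc'(X)\eqdef -\tMc(X)$ is $(\alpha,\beta/2)$-weakly-accurate and apply \cref{thm:intro:base_tool} to that one (this is what the paper does), or equivalently to append an independent fair sign flip to the output of $\tMc$, which yields an $(\alpha,\beta/2)$-weakly-accurate mechanism that is still a pure post-processing of $\Mc(\gamma,\cdot)$. Either way you lose a factor of $2$ in the success probability, which is harmless for the final $\Omega(\cdot)$ bound.
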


The proof of \cref{thm:lower_bound:top_sing} immediately follows by \cref{thm:base_tool} and the following claim.

\begin{claim}
    If $\Mc \colon [0,1] \times (\cS_d)^n \rightarrow \cS_d$ is an \emph{$(\lambda, \beta)$-estimator for top singular vector} for $\lambda \geq 1$, then either the mechanism $\tMc\colon \oo^{n \times d} \rightarrow \oo^d$ defined by $\tMc(X) \eqdef \signn{ \Mc(\gamma= \sqrt{\frac{2\alpha}{1-2\alpha}}, \: \frac1{\sqrt{d}} X)}$, for $\alpha = \frac1{4000 \lambda^2}$, or the mechanism $\tMc'(X) \eqdef - \tMc(X)$, are $(\alpha,\beta/2)$-weakly-accurate (\cref{def:intro:weakly-accurate}).
\end{claim}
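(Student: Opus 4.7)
The plan is to show that the (random) output $y \sim \Mc(\gamma, X/\sqrt{d})$ is, up to a global sign, close to a specific unit vector $\hat{b}$ that encodes the markings of $X$, so that $\sign(y)$ or $-\sign(y)$ strongly-agrees with $X$. Let $S = \cJ_X^1 \cup \cJ_X^{-1}$, so $|S|\ge(1-\alpha) d$, and define $\hat{b} \in \cS_d$ by $\hat{b}_j = b/\sqrt{|S|}$ for $j \in \cJ_X^b$ and $\hat{b}_j = 0$ otherwise. Write $Y = X/\sqrt{d}$; each row of $Y$ has unit norm, so $\|Y\|_F^2 = n$. A direct computation from the $\pm n$ entries of $X^T X$ restricted to $S\times S$ gives $\|Y\hat{b}\|_2^2 = n|S|/d \ge n(1-\alpha)$. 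Consequently $\sigma_1(Y)^2 \ge n(1-\alpha)$ and $\sigma_2(Y)^2 \le \|Y\|_F^2 - \sigma_1(Y)^2 \le n\alpha$, so $\sigma_2(Y)/\sigma_1(Y) \le \sqrt{\alpha/(1-\alpha)} \le \gamma$, validating the estimator's precondition.

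By the $(\lambda,\beta)$-estimator property, with probability at least $\beta$ the output $y$ satisfies $\|Yy\|_2^2 \ge \sigma_1^2 - \lambda\gamma n$. Expanding $y = \sum_i a_i v_i$ and $\hat{b} = \sum_i c_i v_i$ in the right singular basis and using $\|Yu\|^2 \le a_1^2\sigma_1^2 + (1-a_1^2)\sigma_2^2$ for any unit $u$ with first coefficient $a_1$ in that basis, we deduce
\[
1 - a_1^2 \;\le\; \frac{\lambda\gamma n}{\sigma_1^2 - \sigma_2^2} \;\le\; \frac{\lambda\gamma}{1-2\alpha} \;=:\; \eta_1,
\qquad
1 - c_1^2 \;\le\; \frac{\alpha}{1-\alpha} \;=:\; \eta_2,
\]
where the second bound additionally uses $\sigma_1^2 \le n$ together with $\|Y\hat{b}\|^2 \ge n(1-\alpha)$.

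Absorb signs so that $a_1, c_1 \ge 0$; the sign flip on $y$ is precisely the choice between $\tMc$ and $\tMc'$. Then
\[
\langle y, \hat{b}\rangle \;\ge\; c_1 a_1 - \sqrt{(1-c_1^2)(1-a_1^2)} \;\ge\; \sqrt{(1-\eta_1)(1-\eta_2)} - \sqrt{\eta_1\eta_2}.
\]
For $\alpha = 1/(4000\lambda^2)$ and $\lambda\ge 1$, we have $\gamma \le 2\sqrt{\alpha}$, hence $\eta_1 \le 2/\sqrt{4000}$ and $\eta_2 \le 1/4000$, which forces $\langle y, \hat{b}\rangle \ge 1-\delta$ for some $\delta < 1/40$. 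Since $\|y-\hat{b}\|_2^2 = 2(1-\langle y,\hat{b}\rangle) \le 2\delta$, and any $j\in S$ with $\sign(y_j)\ne\sign(\hat{b}_j)$ contributes $(y_j-\hat{b}_j)^2 \ge \hat{b}_j^2 = 1/|S|$, the number of such bad coordinates is at most $2|S|\delta \le 2d\delta < (1-\alpha)d/20 \le 0.1\,|\cJ_X^b|$ for each $b$. Therefore $\sign(y)$ strongly-agrees with $X$.

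Putting things together, for each valid $X$ the two events ``$\sign(y)$ strongly-agrees with $X$'' and ``$-\sign(y)$ strongly-agrees with $X$'' are disjoint (they force opposite signs on most marked coordinates) and together have probability at least $\beta$, so one of them has probability at least $\beta/2$; this yields the ``either $\tMc$ or $\tMc'$'' statement of the claim. For the subsequent application of \cref{thm:base_tool} one may equivalently invoke the randomized mixture $\frac12\tMc + \frac12\tMc'$, which is $(1,\beta/(4n))$-DP by post-processing and uniformly $(\alpha, \beta/2)$-weakly-accurate. I expect the main obstacle to be the numerical balancing in the third paragraph: the constant $1/(4000\lambda^2)$ has to be chosen just small enough that $\eta_1+\eta_2+\sqrt{\eta_1\eta_2}$ remains below $1/40$, and the slightly loose choice $\gamma=\sqrt{2\alpha/(1-2\alpha)}$ (rather than the tighter $\sqrt{\alpha/(1-\alpha)}$) must still leave $\eta_1$ small enough after being scaled by $\lambda$.
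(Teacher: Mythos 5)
Your proof is correct and takes a genuinely different route from the paper's. The paper works with the full-support reference vector $u$ (entries $\pm 1/\sqrt{d}$ on all of $[d]$), introduces the signed sum $s = \sum_{j\in\cJ_X^1}y_j - \sum_{j\in\cJ_X^{-1}}y_j$, proves by direct expansion the gap bound $\|\tilde{X}u\|_2^2 - \|\tilde{X}y\|_2^2 \geq \paren{1 - 2s^2/d - 4\alpha}n$, and finishes with a two-case combinatorial analysis (on which $\cJ_X^{b}$ fails the $90\%$ threshold) showing $s^2 \leq (1-\tfrac1{40})d$ under non-agreement. You instead take the marked-support vector $\hat{b}$ and argue geometrically through the SVD: you bound $1-a_1^2$ and $1-c_1^2$ from the singular-value gap, deduce $\langle y,\hat{b}\rangle \geq \sqrt{(1-\eta_1)(1-\eta_2)} - \sqrt{\eta_1\eta_2}$ after a global sign fix, and then count sign disagreements by $\ell_2$-distance since each flipped coordinate in $S$ costs at least $\hat{b}_j^2 = 1/|S|$. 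Your route buys slightly cleaner intermediate constants (e.g., $\sigma_1^2\geq n(1-\alpha)$ rather than $n(1-2\alpha)$, precisely because $\hat b$ is supported on the marked columns) and avoids the paper's case split on $(b,b')$; the paper's route is more elementary and never has to pass to the singular basis. One point worth flagging: both your argument and the paper's only establish that \emph{for each} legal $X$, one of the two disjoint events ``$\sign(y)$ strongly-agrees with $X$'' and ``$-\sign(y)$ strongly-agrees with $X$'' has probability at least $\beta/2$; which one can in principle vary with $X$, so this does not literally yield that a single fixed mechanism among $\tMc, \tMc'$ is $(\alpha,\beta/2)$-weakly-accurate uniformly over $X$. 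You identify this and your proposed repair --- replacing the ``either/or'' by the mixture $\tfrac12\tMc + \tfrac12\tMc'$, which is $(\alpha,\beta/2)$-weakly-accurate on every legal input and DP by post-processing --- is exactly the form in which the claim feeds into \cref{thm:base_tool}, and is arguably the cleaner way to state the claim.
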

\begin{proof}
	Fix $X=(x_1,\ldots,x_n) \in \oo^{n\times d}$ with $\size{\cJ^1_X},  \size{\cJ^{-1}_X} \geq \frac12(1-\alpha)d$, and let $\tX = \frac1{\sqrt{d}} X \in (\cS_d)^n$. Let $u = (u_1,\ldots,u_d) \in \cS_d$ be a column vector with
	$u_j = \begin{cases}1/\sqrt{d} & j \in  \cJ^{1}_X \\ -1/\sqrt{d} & \text{o.w.}\end{cases}$. By definition it holds that $\norm{\tX \cdot u}_2^2 \geq \frac{n}d \paren{ 2\size{\cJ_X^1} + 2\size{\cJ_X^{-1}} - d} \geq (1-2\alpha)n$.
	This yields that $\sigma_1(\tX)^2 \geq (1-2\alpha) n$. Since $\sigma_1(\tX)^2 +  \sigma_2(\tX)^2 \leq \norm{\tX}_F^2 = n$, we deduce that $\sigma_2(\tX)^2/\sigma_1(\tX)^2 \leq \frac{2\alpha}{1-2\alpha} = \gamma^2$.
	
	Therefore, by the utility guarantee of $\Mc$, it holds that
	\begin{align*}
		\ppr{y \sim \Mc(\gamma, \: \tX)}{\norm{\tX\cdot y}_2^2 \geq \norm{\tX\cdot u}_2^2 - \lambda\cdot \gamma n} \geq \beta.
	\end{align*}
	We prove the claim by showing that for any $y = (y_1,\ldots,y_d) \in \cS_d$ with $\norm{\tX\cdot y}_2^2 \geq \norm{\tX\cdot u}_2^2 - \lambda\cdot \gamma n$, either $\sign(y)$ or $-\sign(y)$ strongly-agrees with $X$ (\cref{def:intro:strongly-agree}).
	
	In the following, fix such $y$, and let $s = \sum_{j \in \cJ_X^1} y_j - \sum_{j \in \cJ_X^{-1}} y_j$. We next prove that
	\begin{enumerate}
		\item $\norm{\tX\cdot u}_2^2 - \norm{\tX\cdot y}_2^2 \geq \paren{1 - \frac{2 s^2}{d} - 4\alpha} n$, and\label{item:tsv:norms}
		\item If $\sign(y)$ and $-\sign(y)$ do not strongly-agree with $X$, then $s^2 \leq \paren{1-\frac1{40}}d$.\label{item:tsv:s}
	\end{enumerate}
	The proof of the claim follows by \cref{item:tsv:norms,item:tsv:s} since if $\sign(y)$ and $-\sign(y)$ do not strongly-agree with $X$, then by our choice of $\alpha$ it holds that
	\begin{align*}
		\norm{\tX\cdot u}_2^2 - \norm{\tX\cdot y}_2^2 \geq \paren{\frac1{40} - 4\alpha}n > \lambda \underbrace{\sqrt{\frac{2\alpha}{1-2\alpha}}}_{\gamma} n,
	\end{align*}
	which is a contradiction to the assumption about $v$.
	
	We first prove \cref{item:tsv:norms}. Let $z = (z_1,\ldots,z_n) = \tX \cdot y$ and let $\cB = [d]\setminus (\cJ_X^1 \cup \cJ_X^{-1})$.
	Note that for every $i \in [n]$:
	\begin{align*}
		\size{z_i} = \size{\frac1{\sqrt{d}} \sum_{j=1}^d x_i^j \cdot y_j} = \frac1{\sqrt{d}} \size{\sum_{j \in \cJ_X^1} y_j - \sum_{j \in \cJ_X^{-1}} y_j + \sum_{j \in \cB} x_i^j \cdot y_j} \leq \frac{\size{s}}{\sqrt{d}} + \sqrt{\alpha},
	\end{align*}
	where the last inequality holds since $\sum_{j \in \cB} x_i^j \cdot y_j  \leq \norm{x_i^{\cB}}_2 \cdot \norm{y_{\cB}}_2 \leq \sqrt{\size{\cB}} \cdot 1 \leq \sqrt{\alpha d}$.
	
	Using the inequality $(a+b)^2 \leq 2a^2 + 2b^2$ we deduce that $\norm{z}_2^2 = \sum_{i=1}^n z_i^2 \leq \paren{\frac{2s^2}{d} + 2\alpha}n$. Hence
	\begin{align*}
		\norm{\tX\cdot u}_2^2 - \norm{z}_2^2 \geq (1 - 2\alpha)n - \paren{\frac{2s^2}{d} + 2\alpha}n = \paren{1 - \frac{2 s^2}{d} - 4\alpha} n.
	\end{align*}

	We next prove \cref{item:tsv:s}.
	Assume $\sign(y)$ and $-\sign(y)$ do not strongly-agree with $X$. Then there exists $b,b' \in \oo$ such that
	\begin{align}\label{eq:sing:cont1}
		\eex{j \la \cJ^{b}_X}{y_j \neq b} > 0.1
	\end{align}
	and 
	\begin{align}\label{eq:sing:cont2}
		\eex{j \la \cJ^{b'}_X}{y_j = b'} > 0.1.
	\end{align}

	We split into two cases:
	
	\paragraph{Case 1: \cref{eq:sing:cont1,eq:sing:cont2} holds for $b = b'$.}
	
	In this case, we assume for simplicity that $b = b' = 1$ (the $b = b' = -1$ case holds by symmetry). Let $\cJ^+  = \set{j \in \cJ_X^1 \colon y_j \geq 0}$ and $\cJ^-  = \set{j \in \cJ_X^1 \colon y_j < 0}$.
	Since $\signn{y_j} \neq 1 \implies y_j < 0$ and $\signn{y_j} = 1 \implies y_j \geq 0$, \cref{eq:sing:cont1,eq:sing:cont2} yields that
	\begin{align}\label{eq:sing_vec:first_max}
		\size{\cJ^+ } \leq 0.9 \size{\cJ_X^1} \quad \text{ and } \quad \size{\cJ^- } \leq 0.9 \size{\cJ_X^1}.
	\end{align}
	Now recall that $s = \sum_{j \in \cJ_X^1} y_j - \sum_{j \in \cJ_X^{-1}} y_j$.
	Therefore, 
	\begin{align}\label{eq:top_sing:J+}
		s 
		&\leq \sum_{j \in \cJ^+ } y_j - \sum_{j \in \cJ_X^{-1}} y_j
		\leq \sum_{j \in \cJ^+ \cup \cJ_X^{-1}} \size{y_j}\\
		&= \norm{y_{\cJ^+ \cup \cJ_X^{-1}}}_1
		\leq \sqrt{\size{\cJ^+ \cup \cJ_X^{-1}}} \cdot \norm{y_{\cJ^+ \cup \cJ_X^{-1}}}_2\nonumber\\
		&\leq \sqrt{\size{\cJ^+ \cup \cJ_X^{-1}}}\nonumber
	\end{align}
	where the third inequality holds since $\norm{w}_1 \leq \sqrt{m}\cdot \norm{w}_2$ for every $w \in \bbR^m$, and the fourth one holds since $\norm{y_{\cJ^+ \cup \cJ_X^{-1}}}_2 \leq \norm{y}_2 = 1$.

	By the right inequality in (\ref{eq:sing_vec:first_max}), a similar calculation to \cref{eq:top_sing:J+} yields that
	\begin{align}\label{eq:top_sing:J-}
		s \geq \sum_{j \in \cJ^- } y_j - \sum_{j \in \cJ^{-1}_X} y_j \geq - \sum_{j \in \cJ^-  \cup \cJ_X^{-1}} \size{y_j} \geq \ldots
		\geq -\sqrt{\size{\cJ^-  \cup \cJ^{-1}_X}}.
	\end{align}
	The proof in this case now follows by \cref{eq:top_sing:J+,eq:top_sing:J-} since 
	\begin{align*}
		\size{\cJ^+  \cup \cJ^{-1}_X}, \:\size{\cJ^+  \cup \cJ^{-1}_X} \leq d - 0.1 \size{\cJ_X^1} \leq d - \frac1{20}(1-\alpha) d \leq \paren{1 - \frac1{40}}d,
	\end{align*}
	where the first inequality holds by \cref{eq:sing_vec:first_max}.

	\paragraph{Case 2: \cref{eq:sing:cont1,eq:sing:cont2} holds for $b \neq b'$.}
	
	In this case, we assume for simplicity that $b = 1$ and $b' = -1$ (the $b=-1$ and $b'=1$ case holds by symmetry).
	Let $\cJ^1 = \set{j \in \cJ_X^1 \colon y_j \geq 0}$ and $\cJ^{-1} = \set{j \in \cJ_X^{-1} \colon y_j \geq 0}$. Since $\signn{y_j} \neq 1 \implies y_j < 0$, \cref{eq:sing:cont1,eq:sing:cont2} yield that 
	
	\begin{align}\label{eq:sing_vec:second_max}
		\size{\cJ^1} \leq 0.9 \size{\cJ_X^1} \quad \text{ and } \quad \size{\cJ^{-1}} \leq 0.9 \size{\cJ_X^{-1}}
	\end{align}

	The left inequality in \cref{eq:sing_vec:second_max} implies that
	\begin{align}\label{eq:top_sing:J1}
		s
		\leq \sum_{j \in \cJ^1} y_j - \sum_{j \in \cJ_X^{-1}} y_j
		\leq \sum_{j \in \cJ^1 \cup \cJ_X^{-1}} \size{y_j}
		\leq \sqrt{\size{\cJ^1\cup \cJ_X^{-1}}}
	\end{align}
	where the last inequality holds similarly to \cref{eq:top_sing:J+}.
	Similarly, the right inequality in \cref{eq:sing_vec:second_max} implies that
	\begin{align}\label{eq:top_sing:J-1}
		s \geq  \sum_{j \in \cJ_X^1} y_j - \sum_{j \in \cJ^{-1}} y_j \geq -\sum_{\cJ_X^1 \cup \cJ^{-1}} \size{y_j} \geq -\sqrt{\size{\cJ_X^1 \cup \cJ^{-1}}}.
	\end{align}
	The proof now follows by \cref{eq:top_sing:J1,eq:top_sing:J-1} since $\size{\cJ^1\cup \cJ_X^{-1}} \leq d - 0.1 \size{\cJ_X^1} \leq \paren{1 - \frac1{40}}d$ and 
	$\size{\cJ_X^1 \cup \cJ^{-1}} \leq d - 0.1 \size{\cJ_X^{-1}} \leq \paren{1 - \frac1{40}}d$.

\end{proof}

\ifdefined\IsAnonymous
\else
\section*{Acknowledgments}

Eliad Tsfadia would like to thank Edith Cohen, Haim Kaplan, Yishay Mansour and Uri Stemmer for encouraging him to tackle the problem of lower bounding DP averaging and for useful discussions.
\fi

\printbibliography

\appendix

\section{Fingerprinting Codes}\label{sec:appendix:FPC}

An $(n,d)$-fingerprinting code, where $[n]$ is the set of users and $d$ is the code length, is a pair $(\Gen,\Trace)$. The first element $\Gen$ is a randomized algorithm that generates and outputs a codebook $X \in \oo^{n \times d}$, where for every $i \in [n]$, the \ith row $x_i \in \oo^d$ of $X$ is the code of user $i$. 
The second element $\Trace$ is a randomized algorithm that receives as an input a codebook $X \in \oo^{n \times d}$ and a word $q \in \oo^d$ that was (possibly) generated by a malicious subset of users $\cS \subseteq [n]$ (from their codes), and outputs a user $i \in [n]$ or $\perp$. 

The security requirements of fingerprinting codes says that (1) given every code $x_i$, one can be verify that user $i$ holds it (that is, it is known that user $i$ is the one to receive the code $x_i$), and that (2) given a codebook $X$ and a word $q$ generated by a malicious subset of users $\cS$ using their codes, the algorithm $\Trace$ returns a user $i \in \cS$ with high probability (so, it returns a user $i \notin \cS$ or $\perp$ with low probability).  

The basic assumption of fingerprinting code is that for every $j \in [d]$, the \jth element of $q$ must be equal to the \jth element of $x_i$ for some $i\in \cS$.
In particular, if the \jth element of all the codes of $\cS$ is equal to some $b \in \oo$, then the \jth element of the word $q$ must also be $b$. We say that such $j$'s are {\em $b$-marked in $(X,\cS)$}.

First, the set of all feasible words $q \in \oo^d$ for a codebook $X \in \oo^{n \times d}$ and a malicious subset $\cS \subseteq [n]$ is denoted as

$$F(X,\cS) \eqdef \set{ q \in \oo^d \: | \: \forall j \in [d],\exists i \in \cS \: \colon \: q^j=x_i^j}.$$

Now, we are ready to formally define fingerprinting codes, similarly to \cite{BUV14}.

\begin{definition}[Fingerprinting Codes]\label{def:FPC}
For every $n, d \in \bbN$ and $\beta \in [0,1]$, a pair of algorithms $(\Gen,\Trace)$ is an {\rm $(n,d)$-fingerprinting code with security $\beta$} if $\Gen$ outputs a codebook $\bX \in \oo^{n \times d}$, and for every subset $\cS \subseteq [n]$ and every (possibly randomized) adversary $\Ac \colon \oo^{|\cS| \times d} \rightarrow \oo^d$ such that $\bq \sim \Ac(\bX_{\cS})$, it holds that    
\begin{itemize}
    \item $\pr{\bq \in F(\bX,\cS) \wedge \Trace(\bX,\bq)=\perp} \leq \beta$, and
    \item $\pr{\Trace(\bX,\bq) \in [n]\setminus \cS} \leq \beta$, 
\end{itemize}
where the probability is over the randomness of $\Gen$, $\Trace$ and $\Ac$. 
Moreover, the algorithm $\Gen$ can share an additional output $z \in \oo^*$ with the algorithm $\Trace$.
\end{definition}

\paragraph{Application: A Simple Fingerprinting Code.}
The following algorithm $\Gen$ generates a fingerprinting code of length $d$ as in \cref{lemma:FPL_high_dim} for $n$ users.

\begin{algorithm}[$\Gen$]\label{FPC:gen}
	\item Input: Number of users $n \in \bbN$ and a confidence parameter $\beta \in [0,1]$.
	\item Operation:~
	\begin{enumerate}
		\item Let $d \geq \Theta(n^2\log^2 (n) \log(1/\beta))$ be the length of the code.
		
		\item Sample $(x_1,\ldots,x_n,z) \sim \cD'(n,d)$ (\cref{def:D}).

		\item Output $X = (x_i^{j})_{i\in [n], j \in [d]}$ and share $z = (z^{1},\ldots,z^{d})$ with $\Trace$.
	\end{enumerate}
\end{algorithm}

The next tracing algorithm $\Trace$ receives a codebook of $n$ codes of length $d$ (namely, a fingerprinting code as generated by the algorithm $\Gen$), a word of length $d$, possibly generated by a subset of malicious users using their codes, and a shared state of length $d$ shared by the algorithm $\Gen$.
It outputs a user from the malicious subset (if there is such user) or report that there is no such user, with high probability.

\begin{algorithm}[$\Trace$]\label{FPC:trace}
	\item Input: A codebook $X \in \oo^{n \times d}$, a word $q \in \oo^d$, and a shared state $z \in \oo^d$.
	\item Operation:~
	\begin{enumerate}
		\item For every $i \in [n]$: If $\ip{x_i,q} - \ip{z,q} > \frac{0.2 d}{n \cdot \ln (5n)}$, output $i$.
		\item Else (no such $i \in [n]$), output $\perp$.
	\end{enumerate} 
\end{algorithm}

Next, we prove that the pair of algorithms $(\Gen,\Trace)$ is a fingerprinting code.

\begin{claim}
For $n \in \bbN$, $\beta \in [0,1]$, and $d \geq \Theta(n^2\log^2 (n) \log(1/\beta))$, the pair $(\Gen, \Trace)$ 
(\cref{FPC:gen,FPC:trace}, respectively) is an $(n,d)$-fingerprinting code with security $\beta$ according to \cref{def:FPC}.
\end{claim}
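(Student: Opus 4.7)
The plan is to verify the two security conditions of \cref{def:FPC} separately, writing $T = \tfrac{0.2 d}{n \ln(5n)}$ for the tracing threshold and fixing an arbitrary coalition $\cS \subseteq [n]$ and adversary $\Ac \colon \oo^{|\cS| \times d} \to \oo^d$ producing $\bq \sim \Ac(\bX_\cS)$.

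For the completeness property $\pr{\bq \in F(\bX,\cS) \wedge \Trace(\bX,\bq) = \perp} \leq \beta$, I would define an auxiliary mechanism $\Mc \colon \oo^{|\cS| \times d} \to \oo^d$ that runs $\Ac(\bX_\cS)$ and then overwrites coordinate $j$ of the output with $b$ whenever column $j$ is $b$-marked in $\bX_\cS$ (noting $F(\bX,\cS)$ depends only on $\bX_\cS$). By construction $\Mc$ is strongly-accurate, and on the event $E = \set{\bq \in F(\bX,\cS)}$ it coincides with $\bq$ coordinate-wise. Applying \cref{lemma:FPL_high_dim} to $\Mc$ with $|\cS|$ users (valid since $|\cS| \leq n$ and $d$ is taken large enough) yields
\begin{align*}
\pr{\sum_{i \in \cS}\paren{\ip{\bx_i,\Mc(\bX_\cS)} - \ip{\bz,\Mc(\bX_\cS)}} \leq \tfrac{0.2 d}{\ln(5|\cS|)}} \leq \beta.
\end{align*}
On $E \cap \set{\Trace(\bX,\bq) = \perp}$, every $i \in [n]$ satisfies $\ip{\bx_i,\bq} - \ip{\bz,\bq} \leq T$, so the sum over $i \in \cS$ is at most $|\cS|\,T = \tfrac{0.2 d |\cS|}{n \ln(5n)} \leq \tfrac{0.2 d}{\ln(5|\cS|)}$ by the monotonicity $|\cS|\ln(5|\cS|) \leq n \ln(5n)$; combined with $\Mc(\bX_\cS) = \bq$ on $E$, this event is contained in the bad event above, and hence $\pr{E \cap \set{\Trace(\bX,\bq) = \perp}} \leq \beta$.

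For the soundness property $\pr{\Trace(\bX,\bq) \in [n]\setminus \cS} \leq \beta$, I would fix $i \in [n]\setminus\cS$ and show that $\ip{\bx_i,\bq} - \ip{\bz,\bq}$ is unlikely to exceed $T$ via Hoeffding, using that $\bx_i$ is independent of $(\bX_\cS, \bq, \bz)$ given the column parameters $\bp$ drawn inside $\cD'(n,d)$. Writing $\ip{\bx_i,\bq} - \ip{\bz,\bq} = \sum_j \bq^j(\bx_i^j - \bz^j)$, the summands are conditionally independent across $j$ given $(\bp,\bq)$, each with conditional mean $0$ (since $\bx_i^j$ and $\bz^j$ are i.i.d.\ given $\bp^j$) and range $[-2,2]$, so
\begin{align*}
\pr{\ip{\bx_i,\bq} - \ip{\bz,\bq} > T \;\middle|\; \bp,\bq} \leq \exp\paren{-\tfrac{T^2}{8 d}} = \exp\paren{-\Omega\paren{\tfrac{d}{n^2 \log^2 n}}},
\end{align*}
which is at most $\beta/n$ provided the hidden constant in $d = \Theta(n^2 \log^2(n) \log(1/\beta))$ is taken large enough. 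Averaging over $(\bp,\bq)$ and union-bounding over the at most $n$ innocent users completes the argument.

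The main obstacle is to cleanly decouple the coalition and innocent-user contributions: the fingerprinting lemma produces a correlated output only when the coalition constitutes the entire input to the mechanism, while tracing is calibrated by $n$ rather than $|\cS|$. The bridge is to apply \cref{lemma:FPL_high_dim} to the restricted mechanism $\Mc$ and then exploit the monotonicity $|\cS|\ln(5|\cS|) \leq n \ln(5n)$, which is exactly what a pigeonhole over $\cS$ needs to convert the FPL lower bound with threshold $\tfrac{0.2 d}{\ln(5|\cS|)}$ into the $n$-calibrated tracing threshold $T$. Soundness, in contrast, is an essentially routine concentration argument once one conditions on $\bp$ so that the rows become independent.
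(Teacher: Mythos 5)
Your two-part structure (coalition side via the fingerprinting lemma, innocent-user side via Hoeffding plus a union bound) matches the paper's, and your soundness argument is essentially identical to the paper's. The gap is in the completeness argument, and it is genuine.

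You invoke \cref{lemma:FPL_high_dim} ``with $|\cS|$ users,'' obtaining the threshold $\tfrac{0.2 d}{\ln(5|\cS|)}$ as if $\bX_\cS$ were drawn from $\cD'(|\cS|,d)$. But $\Gen$ samples from $\cD'(n,d)$, and the restriction of $\cD'(n,d)$ to $|\cS|$ rows is \emph{not} $\cD'(|\cS|,d)$: the column parameters $p^j \sim \rho$ are drawn from a density proportional to $1/(1-p^2)$ supported on $[-1+\tfrac{2}{5n+1},\, 1-\tfrac{2}{5n+1}]$ and normalized by $1/\ln(5n)$ (see \cref{def:D}, \cref{new_FP_lemma}) — that range and normalization are tied to $n$, not to the coalition size. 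As written, your application of \cref{lemma:FPL_high_dim} is to the wrong distribution, so the threshold $\tfrac{0.2d}{\ln(5|\cS|)}$ is not justified, and the subsequent monotonicity step $|\cS|\ln(5|\cS|) \le n\ln(5n)$ — correct as algebra — is doing work to patch a threshold that you never actually earned.

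The paper sidesteps this entirely with a cleaner observation: extend the post-processed adversary $\Ac'$ to a mechanism on all of $\oo^{n\times d}$ that simply ignores rows outside $\cS$. Because $\cJ^b_X \subseteq \cJ^b_{X_\cS}$, strong accuracy with respect to $X_\cS$ automatically gives strong accuracy (\cref{def:strong_acc}) with respect to the full $X$, so \cref{lemma:FPL_high_dim} applies directly with the true parameter $n$ and the true distribution $\cD'(n,d)$, yielding $\sum_{i=1}^{n}\bigl(\ip{\bx_i,\bq'}-\ip{\bz,\bq'}\bigr) > \tfrac{0.2d}{\ln(5n)}$ with probability at least $1-\beta$. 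Since $\Trace$ returns $\perp$ only when \emph{every} $i \in [n]$ (not just $i \in \cS$) has $\ip{\bx_i,\bq'}-\ip{\bz,\bq'} \le \tfrac{0.2d}{n\ln(5n)}$, i.e.\ only when the full sum is at most $\tfrac{0.2d}{\ln(5n)}$, completeness follows with no monotonicity argument at all. If you prefer to keep your sum-over-$\cS$ decomposition, you would need to separately prove that \cref{new_FP_lemma} (and hence \cref{lemma:FPL_high_dim}) still holds when $\rho$ is parameterized by $n$ but only $|\cS|\le n$ terms appear in the sum; this is true (the endpoint estimates on $g$ only improve), and would then give threshold $\tfrac{0.2d}{\ln(5n)}$, not $\tfrac{0.2d}{\ln(5|\cS|)}$ — compatible with $|\cS|T \le \tfrac{0.2 d}{\ln(5n)}$ since $|\cS|\le n$ — but it is extra work that the paper's ``ignore the extra rows'' trick makes unnecessary.
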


\begin{proof}
We start by denoting $\bX \in \oo^{n \times d}$ as the output of $\Gen$ and $\bz \in \oo^d$ as the state shared by $\Gen$ with $\Trace$. 
We want to show that for every subset $\cS \subseteq [n]$ and every adversary $\Ac \colon \oo^{|\cS| \times d} \rightarrow \oo^d$ such that $\bq \sim \Ac(\bX_{\cS})$ is generated by $\Ac$, 
$$\pr{\bq \in F(\bX,\cS) \wedge \Trace(\bX,\bq)=\perp} \leq \beta \:\:\text{ and }\:\:
   \pr{\Trace(\bX,\bq) \in [n]\setminus \cS} \leq \beta.$$

To prove the first item, define the random variable $\bq'$ which is equal to $\bq$ if $\bq \in F(\bX,\cS)$ and otherwise takes a value in $F(\bX,\cS)$ (e.g., the first lexicographic vector there).
Furthermore, consider an algorithm $\Ac' \colon \oo^{\size{\cS} \times d} \rightarrow \oo^d$ that given $X_{\cS}$, samples $q \sim \Ac(X_{\cS})$ and checks if $q \in F(X,\cS)$ (i.e., perfectly agrees with all the marked columns of $X_{\cS}$). If it does, it outputs $q$. Otherwise, it outputs the first vector in $F(X,\cS)$ (which again can be done only using $\bX_{\cS}$). Observe that by definition, $\Ac'$ is strongly-accurate (\cref{def:strong_acc}) and also $\bq'$ has the same distribution as $\Ac'(\bX_{\cS})$. 
Hence
\begin{align*}
    \lefteqn{\pr{\bq \in F(\bX,\cS) \wedge \Trace(\bX,\bq)=\perp}}\\
    &= \pr{\bq \in F(\bX,\cS) \wedge \Trace(\bX,\bq)=\perp \mid \bq = \bq'} \cdot \pr{\bq = \bq'} + 0 \cdot \pr{\bq \neq \bq'}\\
    &= \pr{\Trace(\bX,\bq')=\perp \mid \bq = \bq'} \cdot \pr{\bq = \bq'}\\
    &\leq \pr{\Trace(\bX,\bq')=\perp}
    = \pr{\Trace(\bX,\Ac'(\bX_{\cS}))=\perp}
    \leq \beta,
\end{align*}
where the last inequality holds by \cref{lemma:FPL_high_dim}.

To prove the second item, fix $i \in [n] \setminus \cS$. Let $\bp^1,\ldots,\bp^d$ be the expectations that were sampled in the process of sampling $\bX,\bz$ (\cref{def:D}), and in the following assume they were fixed to some values $(p^1,\ldots,p^d)$. 
Recall that $\bq$ is a function of $\bX_{\cS}$. Therefore, $\bq$ is independent of $\bx_i$.
The same calculation done in \cref{eq:event2:general} yields that
\begin{align*}
    \pr{\ip{\bx_i,\bq} - \ip{\bz,\bq} > \frac{0.2 d}{n \ln(5n)}} \leq \frac{\beta}{20n}.
\end{align*}
The proof of the second item is now concluded by the union bound over $j \in [n]\setminus \cS$.

\end{proof}

\begin{remark}
In a {\em robust fingerprinting code}, we use a relaxed assumption for the set $F(X,\cS)$, as in \cite{BUV14}, and only require that the \jth element of $q$ is equal to the \jth element of $x_i$ for some $i\in \cS$ {\em with high probability}. 
It is possible to prove that $(\Gen,\Trace)$ is also a robust fingerprinting code. 
We can easily achieve robustness if we require the same fraction of mistakes in both $b$-marked sets  $\cJ^{b}_{X_{\cS}}$, but 
robustness can be achieved anyway since the number of $1$-marked columns and the number of $(-1)$-marked columns are almost the same, and there are many such columns (a simple calculation yields that $\Omega(1/\log n)$ fraction of the columns are marked).    
\end{remark}

\end{document}